\documentclass[11pt]{article}

\usepackage[pagebackref,colorlinks]{hyperref}
\usepackage{amsmath} % AMS Math Package
\usepackage{amsthm} % Theorem Formatting
\usepackage{amssymb}	% Math symbols such as \mathbb
\usepackage{graphicx} % Allows for eps images
\usepackage{multicol} % Allows for multiple columns
\usepackage{multirow}
\usepackage{color}
\usepackage[dvips,letterpaper,margin=1in,bottom=1in]{geometry}
\usepackage[capitalize,noabbrev]{cleveref}

\usepackage[utf8]{inputenc}
\usepackage[english]{babel}

\usepackage[T1]{fontenc}
\AtBeginDocument{%
  \DeclareFontShape{T1}{cmr}{m}{scit}{<->ssub*cmr/m/sc}{}%
}

\usepackage{diagbox}
\usepackage{mathtools}

\newtheorem{theorem}{Theorem}[section]

\newtheorem{lemma}[theorem]{Lemma}
\newtheorem{corollary}[theorem]{Corollary}

\newtheorem{fact}[theorem]{Fact}

\newtheorem{definition}[theorem]{Definition}

\newtheorem{remark}{Remark}[section]

\DeclarePairedDelimiter\rbra{\lparen}{\rparen}
\DeclarePairedDelimiter\sbra{\lbrack}{\rbrack}
\DeclarePairedDelimiter\cbra{\{}{\}}
\DeclarePairedDelimiter\abs{\lvert}{\rvert}
\DeclarePairedDelimiter\Abs{\lVert}{\rVert}

\DeclarePairedDelimiter\floor{\lfloor}{\rfloor}
\DeclarePairedDelimiter\ket{\lvert}{\rangle}
\DeclarePairedDelimiter\bra{\langle}{\rvert}
\DeclarePairedDelimiter\ave{\langle}{\rangle}

\newcommand{\set}[2] {\left\{\, #1 \colon #2 \,\right\}}

\newcommand{\tr} {\operatorname{tr}}

\newcommand{\polylog} {\operatorname{polylog}}

\newcommand{\Sym}{\operatorname{Sym}}

\newcommand{\ketbra}[2]{\ensuremath{\ket{#1}\!\bra{#2}}}

\usepackage{latexsym}
\usepackage{CJK}

\usepackage{enumerate}

\usepackage{algorithm}
\usepackage{algpseudocode}
\usepackage{stmaryrd}
\usepackage{tabularx}
\usepackage{booktabs}
\usepackage{adjustbox} % Adjust the maximum width of tables

\newcommand{\footremember}[2]{%
    \footnote{#2}
    \newcounter{#1}
    \setcounter{#1}{\value{footnote}}%
}

\usepackage{tikz}
\usetikzlibrary{quantikz2}

\begin{document}
    \title{Quantum Time Lower Bounds by Permutation Invariance}
\author{
    Qisheng Wang \footremember{1}{Qisheng Wang is with the School of Computer Science, Shanghai Jiao Tong University, Shanghai, China (e-mail: \href{mailto:QishengWang1994@gmail.com}{\nolinkurl{QishengWang1994@gmail.com}}).}
}
        \date{}
        \maketitle

    \begin{abstract}
        Tight bounds on quantum sample complexity and quantum query complexity have been known for various computational problems in the literature, whereas tight bounds on quantum time complexity (i.e., the size of quantum circuits) remain unresolved. 
        In this paper, we provide a framework to establish lower bounds on the quantum time complexity for testing permutation-invariant properties of quantum states, via a reduction from quantum sample complexity. 
        As an application, we obtain a series of \textit{matching} lower bounds when given sample access to the input quantum states, including:
        \begin{enumerate}
            \item The SWAP test due to \hyperlink{cite.BCWdW01}{Buhrman, Cleve, Watrous, and de Wolf (\textit{Phys.\ Rev.\ Lett.\ }2001)} is time-optimal to estimate the purity $\operatorname{tr}(\rho^2)$ and the inner product $\operatorname{tr}(\rho\sigma)$. 
            \item The Shift test due to \hyperlink{cite.EAO+02}{Ekert, Alves, Oi, Horodecki, Horodecki, and Kwek (\textit{Phys.\ Rev.\ Lett.\ }2002)} is time-optimal to estimate the high-order functionals $\operatorname{tr}(\rho^k)$. 
            \item The productness tester for multipartite pure states due to \hyperlink{cite.HM13}{Harrow and Montanaro (\textit{J.\ ACM} 2013)} is time-optimal. 
            \item The LMR protocol due to \hyperlink{cite.LMR14}{Lloyd, Mohseni, and Rebentrost (\textit{Nat.\ Phys.\ }2014)} is time-optimal to implement the reflection operator about a pure state.
            \item The samplizer due to \hyperlink{cite.WZ24}{Wang and Zhang (\textit{IEEE Trans.\ Inf.\ Theory} 2025)} is time-optimal for pure states. 
            \item The estimator for pure-state trace distance and fidelity due to \hyperlink{cite.WZ24c}{Wang and Zhang (ICALP 2026)} is time-optimal. 
        \end{enumerate}
        To the best of our knowledge, this is the \textit{first} method that allows us to systematically establish tight lower bounds on quantum time complexity. 
    \end{abstract}

    \textbf{Keywords: quantum time complexity, lower bounds, quantum sample complexity, permutation-invariant properties.}

    \newpage
    \tableofcontents
    \newpage

    \section{Introduction}

    Quantum time complexity measures the efficiency of a quantum algorithm running on a quantum computer, typically defined by the number of elementary quantum gates used during the execution of the algorithm. 
    The tasks that can be efficiently solved by quantum computing are characterized by the computational complexity class $\mathsf{BQP}$ \cite{BV97}, the set of decision problems that can be solved by a polynomial-time quantum algorithm. 
    Optimizing the time complexity of quantum algorithms is therefore at the core of quantum computing.
    During the development of finding time-efficient quantum algorithms, several tools have been proposed, e.g., history-independent data structures \cite{Amb07}, time complexity version of quantum subroutine composition \cite{Jef22,BJY24}, and time-efficient implementations of quantum walks \cite{BCJ+13,JZ23}, span programs \cite{CJOP20}, and quantum divide and conquer \cite{ABB+23}. 
    Time-efficient tools have also been found for quantum property testing \cite{MdW16}, e.g., 
    quantum Schur transform \cite{BCH06,BCH07,Ngu23,GBO23}, weak Schur sampling \cite{CHW07}, and density matrix exponentiation \cite{LMR14,KLL+17,GKP+24}, enabling applications such as quantum state tomography \cite{HHJ+17,OW16,OW17}, quantum state certification \cite{OW21,BOW19,WZ24b}, and quantum entropy estimation \cite{AISW20,BMW16,WZ24}. 

    Previous techniques for quantum lower bounds focused mainly on quantum communication complexity (cf.\ \cite{Bra03}), quantum query complexity \cite{BBC+01,Amb02} and quantum sample complexity (e.g., \cite{CHW07,OW21}). 
    However, methodologies for proving lower bounds on quantum time complexity remain open.
    As evidence, quantum algorithms with optimal query/sample complexity do not necessarily achieve optimal time complexity. 
    For example, given that the quantum query complexity for unstructured search on $N$ items is known to be $\Theta\rbra{\sqrt{N}}$ \cite{BBBV97,Gro96,BBHT98,Zal99}, its quantum time complexity is then trivially $\Omega\rbra{\sqrt{N}}$, which is, however, not known to be tight as the textbook time complexity $O\rbra{\sqrt{N}\log\rbra{N}}$ \cite{NC10} was later improved to $O\rbra{\sqrt{N}\log\rbra{\log\rbra{N}}}$ in \cite{Gro02} and further to $O\rbra{\sqrt{N}\log\rbra{\log^*\rbra{N}}}$ in \cite{AdW17}.\footnote{Here, $\log^*\rbra{N}$ is the iterated logarithm of $N$, defined by $\log^*\rbra{N} = 1 + \log^*\rbra{\log\rbra{N}}$ for $N > 1$ and $0$ otherwise.}
    Another example is the quantum state certification with respect to trace distance, whose sample complexity is known to be $\Theta\rbra{N/\varepsilon^2}$ \cite{BOW19} for $N$-dimensional quantum states and precision $\varepsilon$ but with time complexity $O\rbra{N^3/\varepsilon^6}$, whereas a different approach in \cite{WZ24b} achieves a better time complexity of $\widetilde{O}\rbra{N^2/\varepsilon^5}$ (but with worse sample complexity).\footnote{$\widetilde{O}\rbra{\cdot}$ suppresses polylogarithmic factors.} 
    It can be seen that quantum time complexity can be even more challenging to characterize than quantum query/sample complexity. 
    This naturally raises the following question:
    \[
    \textit{Is there any method for systematically proving tight quantum \textbf{time} lower bounds?}
    \]

    In this paper, we propose a method for proving lower bounds on the \textit{quantum time complexity} for the quantum property testing under the permutation-invariant condition. 
    As an application, we show the time-optimality of a series of quantum algorithmic tools, including the SWAP test \cite{BCWdW01}, Shift test \cite{EAO+02}, productness tester \cite{HM13}, LMR protocol \cite{LMR14,KLL+17,GKP+24}, and samplizer \cite{WZ24,WZ24c}.
    
    To the best of our knowledge, this is the \textit{first} method that allows us to systematically derive \textit{optimal} lower bounds on quantum time complexity \textit{up to a constant factor}. 

    \subsection{Main Results} \label{sec:main}

    A property of $n$-qubit (mixed) quantum states (hereinafter referred to as ``$n$-qubit property'') is denoted by a pair of disjoint sets $\mathcal{P}_n = \rbra{\mathcal{P}_n^{\textup{yes}}, \mathcal{P}_n^{\textup{no}}}$, each consisting of $n$-qubit states. 
    A tester for $\mathcal{P}_n$ can determine whether $\rho \in \mathcal{P}_n^{\textup{yes}}$ or $\rho \in \mathcal{P}_n^{\textup{no}}$ (promised that it is in either case) with probability at least $2/3$ from the quantum state $\rho^{\otimes S}$ (independent and identical samples of $\rho$), where the sample complexity is $S$, the number of samples of $\rho$, and the time complexity is the number of elementary quantum gates (and measurements). 
    The sample/time complexity of $\mathcal{P}_n$, denoted by $\mathsf{S}\rbra{\mathcal{P}_n}$/$\mathsf{T}\rbra{\mathcal{P}_n}$, is the minimum sample/time complexity over all (non-uniform) testers of $\mathcal{P}_n$ (see \cref{sec:tester} for the formal definition). 
    For example, for $N$-dimensional quantum state certification $\textsc{QSD}\sbra{N, \varepsilon}$ to precision $\varepsilon$ with respect to trace distance, it is known that $\mathsf{S}\rbra{\textsc{QSD}\sbra{N, \varepsilon}} = \Theta\rbra{N/\varepsilon^2}$ \cite{BOW19} and $\mathsf{T}\rbra{\textsc{QSD}\sbra{N, \varepsilon}} = \widetilde{O}\rbra{N^2/\varepsilon^5}$ \cite{WZ24b}. 
    In general, it trivially holds that $\mathsf{T}\rbra{\mathcal{P}_n} \geq \mathsf{S}\rbra{\mathcal{P}_n}$, which is already tight.\footnote{\label{fn:tight-example}A simple explanation for this is that any sample should relate to at least one gate or measurement (otherwise this sample is not necessary). On the contrary, this is trivially tight when, for example, testing whether the first qubit of an $n$-qubit quantum state is $\ket{0}$ or $\ket{1}$.} 
    A direct question is: can we improve this relation so that we can establish matching lower bounds on the quantum time complexity in certain cases of interest?

    In this paper, we answer this question for the permutation-invariant case with embeddability. 
    Permutation invariance is a basic symmetry in physics \cite{FR09} and is known to have interesting applications in quantum information theory \cite{Wat18}. 
    An $n$-qubit property is said to be permutation-invariant, if it is invariant under qubit-permutation action for every permutation $\pi \in \Sym\rbra{n}$ (see \cref{def:perm-inv} for the formal definition). 
    A property $\mathcal{Q}$ is said to be embeddable in another property $\mathcal{P}$, if there is an embedding state $\sigma$ such that $\rho$ satisfies $\mathcal{Q}$ if and only if $\rho \otimes \sigma$ satisfies $\mathcal{P}$ for every state $\rho$ (see \cref{def:embed} for the formal definition).
    
    We first present the simplest yet useful case of our results. 

    \begin{theorem} [Sample-to-time reduction for permutation-invariant properties, \cref{thm:main-sp}] \label{thm:pi-intro}
        If a $1$-qubit property $\mathcal{Q}_1$ is embeddable in an $n$-qubit permutation-invariant property $\mathcal{P}_n$, then 
        \begin{equation}
            \mathsf{T}\rbra{\mathcal{P}_n} \geq n \cdot \mathsf{S}\rbra{\mathcal{Q}_1}.
        \end{equation}
    \end{theorem}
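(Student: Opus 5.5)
Let $\mathcal{A}$ be a tester for $\mathcal{P}_n$ that uses $T = \mathsf{T}\rbra{\mathcal{P}_n}$ elementary gates and measurements on $S$ samples $\rho^{\otimes S}$, where $\rho$ is an $n$-qubit state. The plan is to turn $\mathcal{A}$ into a tester for $\mathcal{Q}_1$ that uses at most $T/n$ samples of a $1$-qubit state. Permutation invariance lets us choose which of the $n$ qubit positions carries the unknown input, and the gate budget then forces the average position to be touched rarely.

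\textbf{Step 1 (embedding and relocation).} By embeddability there is an $(n-1)$-qubit state $\sigma$ with $\tau \in \mathcal{Q}_1^{\textup{yes}}$ iff $\tau\otimes\sigma\in\mathcal{P}_n^{\textup{yes}}$, and likewise for the no case. For each $j\in[n]$, let $\pi_j$ be the permutation moving qubit $1$ to position $j$, and write $\rho_j = \pi_j(\tau\otimes\sigma)\pi_j^\dagger$. Permutation invariance (\cref{def:perm-inv}) gives $\rho_j \in \mathcal{P}_n^{\textup{yes}}$ iff $\tau\otimes\sigma\in\mathcal{P}_n^{\textup{yes}}$, and the same for no. Therefore $\mathcal{A}$ run on $\rho_j^{\otimes S}$ decides $\mathcal{Q}_1$ on $\tau$ with probability at least $2/3$, for every $j$.

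\textbf{Step 2 (counting touches).} In the input $\rho_j^{\otimes S}$, the $S$ copies of $\tau$ sit at qubit positions $(s,j)$ for $s\in[S]$. All other input qubits hold copies of the fixed state $\sigma$, which the $\mathcal{Q}_1$-tester can prepare itself; non-uniform testers are allowed, so $\sigma$ is free.

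Let $t_j$ be the number of indices $s$ such that some gate or measurement of $\mathcal{A}$ acts on input qubit $(s,j)$. The key point is that an input qubit on which no gate or measurement ever acts cannot affect the output distribution. Its reduced contribution traces out, because the input is a product state across copies and positions. We therefore do not need $\tau$ there and can substitute any fixed state, for example the corresponding marginal of $\sigma$.

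It follows that $\mathcal{A}$ simulated on $\rho_j$ needs only $t_j$ copies of $\tau$. Each elementary gate or measurement acts on $O(1)$ qubits, and the constant is $1$ if we count qubit-incidences appropriately. Hence
\[
\sum_{j=1}^n t_j \leq T.
\]
So some $j$ has $t_j\le T/n$. Since the resulting algorithm tests $\mathcal{Q}_1$, we get $\mathsf{S}\rbra{\mathcal{Q}_1}\le t_j\le T/n$, which gives the claim.

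\textbf{Main obstacle.} The delicate part is Step 2's accounting. Multi-qubit gates touch several qubits, so the bound is $\sum_j t_j\le cT$ where $c$ is the maximal gate arity. To get exactly $n\cdot\mathsf{S}$ rather than $(n/c)\cdot\mathsf{S}$, I would check the precise gate-counting convention in \cref{sec:tester}; presumably it counts each qubit an operation acts on, or relies on the footnote-style argument that every used sample qubit is charged to a distinct operation.

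We must also justify that untouched qubits can be replaced. The danger is entanglement within $\sigma$ across positions, since $\tau$ itself is a product with $\sigma$. Within one copy, the untouched qubit $(s,j)$ holds $\tau$, which is in tensor product with the rest of that copy. Discarding it and replacing it with any fixed state therefore leaves the reduced state on all touched qubits unchanged, so the output distribution is identical.
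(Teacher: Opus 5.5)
\textbf{The counting step has a gap.} Your strategy is the same as the paper's. You embed, use permutation invariance to move the unknown qubit to position $j$, charge copies of $\tau$ only to positions the circuit actually uses, and average over $j$. Your argument that a qubit in tensor product with the rest can be swapped for a fixed state, when the measured observable acts trivially on it, is also fine. The gap is the one you flag but do not resolve. With $t_j$ counting copies whose $j$-th qubit is \emph{touched} by some operation, you only get $\sum_j t_j \le 2(T-1)+1$. Each of the $T-1$ two-qubit gates has two qubit-incidences; for example, $T-1$ gates on disjoint pairs touch $2(T-1)$ qubits. So the claim that ``the constant is $1$ if we count qubit-incidences appropriately'' is not justified. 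As written, the argument proves only $\mathsf{T}(\mathcal{P}_n) \ge \bigl(n\,\mathsf{S}(\mathcal{Q}_1)+1\bigr)/2$, not the stated inequality. The footnote-style idea of charging each used qubit to a distinct operation also fails for merely touched qubits, since one gate touches two of them.

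\textbf{The missing idea is a light-cone refinement.} Let $G$ be the graph on all qubits with one edge per two-qubit gate, and let $K$ be the connected component of the measured qubit. No gate crosses between $K$ and $\overline{K}$, so the circuit factorizes as $U_K \otimes V_{\overline{K}}$. Hence $\mathcal{T}^\dagger \Pi \mathcal{T} = U_K^\dagger\bigl(\ketbra{0}{0}_{1}\otimes I_{K\setminus\{1\}}\bigr)U_K \otimes I_{\overline{K}}$ acts trivially on every qubit outside $K$, including qubits that \emph{are} touched by gates. Redefine $t_j$ as the number of $s$ with $(s,j)\in K$; your replacement argument then goes through unchanged. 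Since $K$ is connected and spanned by at most $T-1$ edges, $|K| \le T$. Equivalently, root a spanning tree at the output qubit, charge every other vertex to its parent edge, and charge the root to the final measurement. The vertices $(s,j)$ are distinct for distinct pairs, so $\sum_j t_j \le |K| \le T$, and some $j$ has $t_j \le T/n$ exactly. This is the paper's Part 3 together with its lemma that $\mathcal{T}^\dagger\Pi\mathcal{T}$ acts trivially on the unused sample qubits.
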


    Purity estimation, for example, is a simple application of \cref{thm:pi-intro}, as the purity $\tr\rbra{\rho^2}$ is undoubtedly permutation-invariant.
    As will be shown in \cref{sec:app}, \cref{thm:pi-intro} can be used to establish tight quantum time lower bounds for purity estimation and a series of other quantum property testing problems. 
    These lower bounds yield the time-optimality of several useful quantum algorithmic tools such as the SWAP test \cite{BCWdW01}, Shift test (generalized SWAP test) \cite{EAO+02}, LMR protocol \cite{LMR14}, and samplizer \cite{WZ24}. 

    To make it more powerful, we generalize \cref{thm:pi-intro} to the case where the invariance only holds for certain permutation groups and the embeddability holds for multiple qubits. 

    \begin{theorem} [Sample-to-time reduction for partially permutation-invariant embeddability, \cref{thm:main}] \label{thm:full-intro}
        Let $\mathcal{G}$ be a permutation subgroup of the form $\mathcal{G} = \Sym\rbra{A_1} \times \Sym\rbra{A_2} \times \dots \times \Sym\rbra{A_k}$, where $A_1, A_2, \dots, A_k$ form a partition of $\sbra{n}$.\footnote{We write $\sbra{n} = \cbra{1, 2, \dots, n}$ and use $\Sym\rbra{A}$ to denote the symmetric group over the set $A$.} 
        If an $m$-qubit property $\mathcal{Q}_m$ is $\mathcal{G}$-invariantly embeddable in an $n$-qubit property $\mathcal{P}_n$, then 
        \begin{equation}
            \mathsf{T}\rbra{\mathcal{P}_n} \geq R \cdot \mathsf{S}\rbra{\mathcal{Q}_m}, \text{ where } R = \min_{j \in \sbra{k} \colon A_j \cap \sbra{m} \neq \emptyset} \floor*{\frac{\abs{A_j}}{\abs*{A_j \cap \sbra{m}}}}.
        \end{equation}
        Here, the $\mathcal{G}$-invariant embeddability (see \cref{def:inv-embed}) means the existence of a state $\sigma$ such that for every $\rho \in \mathcal{Q}_m^X$ with $X \in \cbra{\textup{yes}, \textup{no}}$, $U_{\pi} \rbra{\rho \otimes \sigma} U_{\pi}^\dag \in \mathcal{P}_n^X$ for every permutation $\pi \in \mathcal{G}$, where $U_\pi$ means the qubit-permutation operator for $\pi$.
    \end{theorem}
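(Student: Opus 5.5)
The plan is to turn any time-efficient tester for $\mathcal{P}_n$ into a sample-efficient tester for $\mathcal{Q}_m$. The mechanism is an averaging argument: a circuit with $T$ gates cannot touch all of the $n$ input registers of each sample many times, so some positions are touched rarely. By permutation invariance we can put our copies of $\rho$ into exactly those positions.

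\textbf{Setup.} Let $\mathcal{A}$ be a tester for $\mathcal{P}_n$ with time complexity $T=\mathsf{T}(\mathcal{P}_n)$. It uses $S$ samples of an $n$-qubit state $\tau$, so its input is $\tau^{\otimes S}$ on qubits indexed by $(s,i)$ with $s\in[S]$ and $i\in[n]$. Call a qubit $(s,i)$ \emph{touched} if some gate or measurement acts on it. Untouched qubits can be traced out without changing the output distribution. Let $t_{s,i}$ be the indicator that $(s,i)$ is touched. Every gate acts on $O(1)$ qubits, and under the paper's convention each touched qubit is charged to at least one gate (cf.\ the footnote on $\mathsf{T}\geq\mathsf{S}$). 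Hence $\sum_{s,i} t_{s,i}\leq T$.

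\textbf{Choosing permutations.} Write $a_j=|A_j\cap[m]|$. For each $j$ with $a_j>0$, partition (a subset of) $A_j$ into $R\le \lfloor |A_j|/a_j\rfloor$ disjoint blocks of size $a_j$, giving $R$ "slots" $\ell=1,\dots,R$. For slot $\ell$, let $\pi_\ell\in\mathcal{G}$ be a permutation mapping $A_j\cap[m]$ onto the $\ell$-th block of $A_j$ for every $j$. This is possible because $\mathcal{G}$ permutes within each $A_j$. For each sample $s$ we get $R$ disjoint candidate position sets $B_{s,\ell}=\pi_\ell([m])$. Since these sets are disjoint, $\sum_\ell \#\{\text{touched qubits in }B_{s,\ell}\}\le \sum_i t_{s,i}$.

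The tester for $\mathcal{Q}_m$ is defined as follows: for each $s$, choose the slot $\ell_s$ with $B_{s,\ell_s}$ of minimum touched count. Then feed $\mathcal{A}$ the state $\bigotimes_s U_{\pi_{\ell_s}}(\rho\otimes\sigma)U_{\pi_{\ell_s}}^\dagger$. Here $\sigma$ is free, since the tester is non-uniform, so we may hardwire copies of $\sigma$, or simply their effect, into the circuit. By $\mathcal{G}$-invariant embeddability, each $U_{\pi_{\ell_s}}(\rho\otimes\sigma)U_{\pi_{\ell_s}}^\dagger$ lies in $\mathcal{P}_n^X$ exactly when $\rho\in\mathcal{Q}_m^X$.

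\textbf{Main obstacle.} The one delicate point is that $\mathcal{A}$ is only guaranteed to work on i.i.d.\ input $\tau^{\otimes S}$, while we feed differently permuted copies for different $s$. The fix is to use a single permutation $\pi_\ell$ for all samples, chosen to minimize the total touched count over $\bigcup_s B_{s,\ell}$. Averaging over $\ell$ gives total touched count at most $T/R$. The input is then i.i.d.\ copies of $\tau=U_{\pi_\ell}(\rho\otimes\sigma)U_{\pi_\ell}^\dagger\in\mathcal{P}_n^X$, so $\mathcal{A}$ succeeds with probability $2/3$. Only the qubits of $\rho$-copies that are actually touched require real samples; untouched copies, and all $\sigma$-parts, can be discarded or prepared offline. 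Each sample of $\rho$ is an $m$-qubit state, and we need a real copy for sample index $s$ only if some qubit of $B_{s,\ell}$ is touched. The number of such $s$ is at most the touched count, which is at most $T/R$. Thus $\mathsf{S}(\mathcal{Q}_m)\le T/R$, i.e.\ $\mathsf{T}(\mathcal{P}_n)\ge R\cdot\mathsf{S}(\mathcal{Q}_m)$. The remaining care is to verify that discarding untouched qubits of partially used $\rho$ copies is legitimate. It is: tracing out those qubits of a genuine copy leaves the reduced input unchanged, so we still consume one genuine sample for each touched $s$. That gives the claimed bound, and \cref{thm:pi-intro} is the special case $k=1$, $m=1$, $R=n$.
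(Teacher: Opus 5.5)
Your architecture matches the paper's. You apply one fixed $\pi_\ell\in\mathcal{G}$ to every sample, so the input stays i.i.d.\ in $\mathcal{P}_n^X$. You build $R$ disjoint slots from blocks of size $|A_j\cap[m]|$ inside each $A_j$ and average over them. You replace the $\rho$-part of irrelevant samples by $\ket{0}^{\otimes m}$, which is justified because the product form of $\rho\otimes\sigma$ makes the reduced state off $\pi_\ell([m])$ equal to $\sigma$.

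\textbf{The gap is the counting step.} The claim $\sum_{s,i}t_{s,i}\le T$ does not follow from ``each touched qubit is charged to at least one gate.'' A tester with time complexity $T$ has $T-1$ two-qubit gates, and each gate can touch two fresh qubits. So the number of touched qubits can be as large as $2(T-1)+1=2T-1$, for instance when the gates act on disjoint pairs $(s,1),(s,2)$. With the correct bound, your averaging gives only $\mathsf{S}(\mathcal{Q}_m)\le(2T-1)/R$, i.e.\ $\mathsf{T}(\mathcal{P}_n)\ge (R\cdot\mathsf{S}(\mathcal{Q}_m)+1)/2$. That loses a factor of $2$ against the exact (not asymptotic) inequality in the theorem.

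\textbf{The missing idea is the paper's light-cone restriction.} Let $K$ be the connected component of the measured output qubit in the graph whose edges are the $T-1$ gates. Count only qubits in $K$, not all touched qubits.
\begin{enumerate}
\item A connected graph with at most $T-1$ edges has at most $T$ vertices, so $|K|\le T$.
\item Every gate lies entirely inside $K$ or entirely outside it. Hence $\mathcal{T}=U_K\otimes V_{\overline{K}}$, and $\mathcal{T}^\dagger\Pi\mathcal{T}$ acts as the identity on $\overline{K}$. So every qubit outside $K$, touched or not, can be traced out exactly as you trace out untouched qubits.
\item Define $c_\ell$, as the paper does, as the number of samples $s$ with some qubit of $B_{s,\ell}$ in $K$. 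Then $\sum_\ell c_\ell\le|K|\le T$.
\end{enumerate}
With ``touched'' replaced by ``in $K$'', the rest of your argument goes through unchanged and gives $\mathsf{T}(\mathcal{P}_n)\ge R\cdot\mathsf{S}(\mathcal{Q}_m)$.

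One minor point: in the tester model the only inputs are copies of $\rho$ and $\ket{0}$ ancillas. So the mixed state $\sigma$ should be prepared by a purification unitary on fresh ancillas, as the paper's $\mathcal{O}_\sigma$ does, rather than ``hardwired.'' This costs no samples.
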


    The parameter $R$ in \cref{thm:full-intro} depends only on the embeddability of $\mathcal{Q}_m$ into $\mathcal{P}_n$ (specifically, $\mathcal{G}$ and $m$) but not the property $\mathcal{Q}_m$ itself. 
    \cref{thm:pi-intro} is actually a special case of \cref{thm:full-intro} with $m = k = 1$ and $A_1 = \sbra{n}$ (in which case $R = n$). 
    In comparison, \cref{thm:full-intro} is applicable to more general cases where qubits are distinguishable to some extent. 
    As an application, we show that the multipartite productness tester in \cite{HM13} is time-optimal. 
    See \cref{sec:app} for more discussions. 

    \begin{remark} [Extensibility to qudits] \label{remark:qudit}
        Our results can be naturally extended to the case where quantum states are made of qudits (and thus an elementary quantum gate means a $2$-qudit gate, see \cref{thm:main-qudit,thm:main-sp-qudit}). 
        Here, the (pure) quantum state of a $d$-dimensional qudit is described by a linear combination of $\ket{0}, \ket{1}, \dots, \ket{d-1}$. In particular, a qubit is a $2$-dimensional qudit.
        This extended version is useful, for example, for proving \cref{corollary:inner-product} using $d = 4$. 
    \end{remark}
    
    \subsection{Applications} \label{sec:app}

    As an application, we prove a series of tight quantum time lower bounds, with which we show the time-optimality of several quantum algorithmic tools proposed in the literature. 
    The relationships amongst them are presented in \cref{fig:app}. 

\begin{figure}[t]
    \centering
\adjustbox{max width=\textwidth}{
\tikzset{every picture/.style={line width=0.75pt}} %set default line width to 0.75pt        

\begin{tikzpicture}[x=0.75pt,y=0.75pt,yscale=-1,xscale=1]
%uncomment if require: \path (0,783); %set diagram left start at 0, and has height of 783

%Shape: Rectangle [id:dp787812591532582] 
\draw   (10,330) -- (170,330) -- (170,390) -- (10,390) -- cycle ;
%Shape: Rectangle [id:dp2075518234868089] 
\draw   (10,130) -- (170,130) -- (170,190) -- (10,190) -- cycle ;
%Shape: Rectangle [id:dp07503050754212037] 
\draw   (250,230) -- (410,230) -- (410,270) -- (250,270) -- cycle ;
%Shape: Rectangle [id:dp26824471064898214] 
\draw   (250,300) -- (410,300) -- (410,340) -- (250,340) -- cycle ;
%Shape: Rectangle [id:dp22040824507343426] 
\draw   (250,370) -- (410,370) -- (410,410) -- (250,410) -- cycle ;
%Shape: Rectangle [id:dp579244916794269] 
\draw   (250,440) -- (410,440) -- (410,480) -- (250,480) -- cycle ;
%Shape: Rectangle [id:dp3348792501967376] 
\draw   (250,140) -- (410,140) -- (410,180) -- (250,180) -- cycle ;
%Shape: Rectangle [id:dp9387822568152019] 
\draw   (490,265) -- (650,265) -- (650,305) -- (490,305) -- cycle ;
%Shape: Rectangle [id:dp5271926033059137] 
\draw   (490,370) -- (650,370) -- (650,410) -- (490,410) -- cycle ;
%Shape: Rectangle [id:dp5889845415284048] 
\draw   (490,510) -- (650,510) -- (650,550) -- (490,550) -- cycle ;
%Shape: Rectangle [id:dp25766819967564065] 
\draw   (490,580) -- (650,580) -- (650,620) -- (490,620) -- cycle ;
%Shape: Rectangle [id:dp21878915718568548] 
\draw   (490,440) -- (650,440) -- (650,480) -- (490,480) -- cycle ;
%Shape: Rectangle [id:dp7525841591808132] 
\draw   (490,140) -- (650,140) -- (650,180) -- (490,180) -- cycle ;
%Straight Lines [id:da9035382081559761] 
\draw    (170,360) .. controls (169.63,357.67) and (170.61,356.33) .. (172.94,355.96) .. controls (175.27,355.59) and (176.25,354.24) .. (175.88,351.91) .. controls (175.51,349.58) and (176.49,348.24) .. (178.82,347.87) .. controls (181.15,347.5) and (182.13,346.16) .. (181.76,343.83) .. controls (181.39,341.5) and (182.37,340.15) .. (184.7,339.78) .. controls (187.03,339.41) and (188.01,338.07) .. (187.65,335.74) .. controls (187.28,333.41) and (188.26,332.06) .. (190.59,331.69) .. controls (192.92,331.32) and (193.9,329.98) .. (193.53,327.65) .. controls (193.16,325.32) and (194.14,323.98) .. (196.47,323.61) .. controls (198.8,323.24) and (199.78,321.89) .. (199.41,319.56) .. controls (199.04,317.23) and (200.02,315.89) .. (202.35,315.52) .. controls (204.68,315.15) and (205.66,313.81) .. (205.29,311.48) .. controls (204.92,309.15) and (205.9,307.8) .. (208.23,307.43) .. controls (210.56,307.06) and (211.54,305.72) .. (211.17,303.39) .. controls (210.8,301.06) and (211.78,299.71) .. (214.11,299.34) .. controls (216.44,298.97) and (217.42,297.63) .. (217.05,295.3) .. controls (216.68,292.97) and (217.66,291.63) .. (219.99,291.26) .. controls (222.32,290.89) and (223.31,289.54) .. (222.94,287.21) .. controls (222.57,284.88) and (223.55,283.54) .. (225.88,283.17) .. controls (228.21,282.8) and (229.19,281.46) .. (228.82,279.13) .. controls (228.45,276.8) and (229.43,275.45) .. (231.76,275.08) .. controls (234.09,274.71) and (235.07,273.37) .. (234.7,271.04) .. controls (234.33,268.71) and (235.31,267.37) .. (237.64,267) .. controls (239.97,266.63) and (240.95,265.28) .. (240.58,262.95) .. controls (240.21,260.62) and (241.19,259.28) .. (243.52,258.91) -- (244.12,258.09) -- (248.82,251.62) ;
\draw [shift={(250,250)}, rotate = 126.03] [color={rgb, 255:red, 0; green, 0; blue, 0 }  ][line width=0.75]    (10.93,-3.29) .. controls (6.95,-1.4) and (3.31,-0.3) .. (0,0) .. controls (3.31,0.3) and (6.95,1.4) .. (10.93,3.29)   ;
%Straight Lines [id:da7570641835740158] 
\draw    (170,360) -- (248.21,320.89) ;
\draw [shift={(250,320)}, rotate = 153.43] [color={rgb, 255:red, 0; green, 0; blue, 0 }  ][line width=0.75]    (10.93,-3.29) .. controls (6.95,-1.4) and (3.31,-0.3) .. (0,0) .. controls (3.31,0.3) and (6.95,1.4) .. (10.93,3.29)   ;
%Straight Lines [id:da14477393222960577] 
\draw    (170,360) -- (248.13,389.3) ;
\draw [shift={(250,390)}, rotate = 200.56] [color={rgb, 255:red, 0; green, 0; blue, 0 }  ][line width=0.75]    (10.93,-3.29) .. controls (6.95,-1.4) and (3.31,-0.3) .. (0,0) .. controls (3.31,0.3) and (6.95,1.4) .. (10.93,3.29)   ;
%Straight Lines [id:da7216272626133718] 
\draw    (170,360) -- (248.75,458.44) ;
\draw [shift={(250,460)}, rotate = 231.34] [color={rgb, 255:red, 0; green, 0; blue, 0 }  ][line width=0.75]    (10.93,-3.29) .. controls (6.95,-1.4) and (3.31,-0.3) .. (0,0) .. controls (3.31,0.3) and (6.95,1.4) .. (10.93,3.29)   ;
%Straight Lines [id:da06348656099716832] 
\draw    (170,160) -- (248,160) ;
\draw [shift={(250,160)}, rotate = 180] [color={rgb, 255:red, 0; green, 0; blue, 0 }  ][line width=0.75]    (10.93,-3.29) .. controls (6.95,-1.4) and (3.31,-0.3) .. (0,0) .. controls (3.31,0.3) and (6.95,1.4) .. (10.93,3.29)   ;
%Straight Lines [id:da7280020719882423] 
\draw    (410,250) -- (487.17,284.19) ;
\draw [shift={(489,285)}, rotate = 203.9] [color={rgb, 255:red, 0; green, 0; blue, 0 }  ][line width=0.75]    (10.93,-3.29) .. controls (6.95,-1.4) and (3.31,-0.3) .. (0,0) .. controls (3.31,0.3) and (6.95,1.4) .. (10.93,3.29)   ;
%Straight Lines [id:da8964922196815932] 
\draw    (410,320) -- (487.17,285.81) ;
\draw [shift={(489,285)}, rotate = 156.1] [color={rgb, 255:red, 0; green, 0; blue, 0 }  ][line width=0.75]    (10.93,-3.29) .. controls (6.95,-1.4) and (3.31,-0.3) .. (0,0) .. controls (3.31,0.3) and (6.95,1.4) .. (10.93,3.29)   ;
%Straight Lines [id:da8773335074051511] 
\draw    (410,390) -- (488,390) ;
\draw [shift={(490,390)}, rotate = 180] [color={rgb, 255:red, 0; green, 0; blue, 0 }  ][line width=0.75]    (10.93,-3.29) .. controls (6.95,-1.4) and (3.31,-0.3) .. (0,0) .. controls (3.31,0.3) and (6.95,1.4) .. (10.93,3.29)   ;
%Straight Lines [id:da14363822315931118] 
\draw    (410,460) -- (488,460) ;
\draw [shift={(490,460)}, rotate = 180] [color={rgb, 255:red, 0; green, 0; blue, 0 }  ][line width=0.75]    (10.93,-3.29) .. controls (6.95,-1.4) and (3.31,-0.3) .. (0,0) .. controls (3.31,0.3) and (6.95,1.4) .. (10.93,3.29)   ;
%Straight Lines [id:da03523917381221342] 
\draw    (410,160) -- (488,160) ;
\draw [shift={(490,160)}, rotate = 180] [color={rgb, 255:red, 0; green, 0; blue, 0 }  ][line width=0.75]    (10.93,-3.29) .. controls (6.95,-1.4) and (3.31,-0.3) .. (0,0) .. controls (3.31,0.3) and (6.95,1.4) .. (10.93,3.29)   ;
%Straight Lines [id:da9564108308782985] 
\draw    (570,180) -- (570,261) ;
\draw [shift={(570,263)}, rotate = 270] [color={rgb, 255:red, 0; green, 0; blue, 0 }  ][line width=0.75]    (10.93,-3.29) .. controls (6.95,-1.4) and (3.31,-0.3) .. (0,0) .. controls (3.31,0.3) and (6.95,1.4) .. (10.93,3.29)   ;
%Shape: Rectangle [id:dp8175082678116469] 
\draw  [dash pattern={on 0.84pt off 2.51pt}] (0,40) -- (180,40) -- (180,630) -- (0,630) -- cycle ;
%Shape: Rectangle [id:dp6724392307721321] 
\draw  [dash pattern={on 0.84pt off 2.51pt}] (240,40) -- (420,40) -- (420,630) -- (240,630) -- cycle ;
%Shape: Rectangle [id:dp1718190725319897] 
\draw  [dash pattern={on 0.84pt off 2.51pt}] (480,40) -- (660,40) -- (660,630) -- (480,630) -- cycle ;
%Straight Lines [id:da30758940931366063] 
\draw    (570,480) -- (570,508) ;
\draw [shift={(570,510)}, rotate = 270] [color={rgb, 255:red, 0; green, 0; blue, 0 }  ][line width=0.75]    (10.93,-3.29) .. controls (6.95,-1.4) and (3.31,-0.3) .. (0,0) .. controls (3.31,0.3) and (6.95,1.4) .. (10.93,3.29)   ;
%Straight Lines [id:da9395794373679177] 
\draw    (570,550) -- (570,578) ;
\draw [shift={(570,580)}, rotate = 270] [color={rgb, 255:red, 0; green, 0; blue, 0 }  ][line width=0.75]    (10.93,-3.29) .. controls (6.95,-1.4) and (3.31,-0.3) .. (0,0) .. controls (3.31,0.3) and (6.95,1.4) .. (10.93,3.29)   ;
%Striped Right Arrow [id:dp23267521688434845] 
\draw   (77.5,322.5) -- (77.5,246) -- (70,246) -- (85,190) -- (100,246) -- (92.5,246) -- (92.5,322.5) -- cycle ;\draw   (77.5,330) -- (77.5,328.5) -- (92.5,328.5) -- (92.5,330) -- cycle ;\draw   (77.5,327) -- (77.5,324) -- (92.5,324) -- (92.5,327) -- cycle ;

% Text Node
\draw (12,333) node [anchor=north west][inner sep=0.75pt]   [align=left] {{\small Sample-to-time reduction}\\{\footnotesize by permutation invariance}\\{\small \cref{thm:pi-intro}}};
% Text Node
\draw (12,133) node [anchor=north west][inner sep=0.75pt]   [align=left] {{\small Sample-to-time reduction}\\{\footnotesize by certain group invariance}\\{\small \cref{thm:full-intro}}};
% Text Node
\draw (94.5,247) node [anchor=north west][inner sep=0.75pt]   [align=left] {generalize};
% Text Node
\draw (252,233) node [anchor=north west][inner sep=0.75pt]   [align=left] {{\small Inner product estimation}\\{\small \cref{corollary:inner-product}}};
% Text Node
\draw (252,303) node [anchor=north west][inner sep=0.75pt]   [align=left] {{\small Purity estimation/testing}\\{\small \cref{corollary:purity-intro}}};
% Text Node
\draw (252,373) node [anchor=north west][inner sep=0.75pt]   [align=left] {{\small Power trace estimation}\\{\small \cref{corollary:pow-tr-intro}}};
% Text Node
\draw (252,443) node [anchor=north west][inner sep=0.75pt]   [align=left] {{\small Trace distance estimation}\\{\small \cref{corollary:td-estimation}}};
% Text Node
\draw (252,143) node [anchor=north west][inner sep=0.75pt]   [align=left] {{\small Productness testing}\\{\small \cref{corollary:prod-test}}};
% Text Node
\draw (492,275) node [anchor=north west][inner sep=0.75pt]   [align=left] {SWAP test \cite{BCWdW01}};
% Text Node
\draw (492,380) node [anchor=north west][inner sep=0.75pt]   [align=left] {Shift test \cite{EAO+02}};
% Text Node
\draw (492,513) node [anchor=north west][inner sep=0.75pt]   [align=left] {Samplizer \cite{WZ24} \\ \cref{corollary:samplizer}};
% Text Node
\draw (492,583) node [anchor=north west][inner sep=0.75pt]   [align=left] {LMR protocol \cite{LMR14} \\ \cref{corollary:lmr}};
% Text Node
\draw (492,443) node [anchor=north west][inner sep=0.75pt]   [align=left] {Estimator for pure-state \\ trace distance \cite{WZ24c}};
% Text Node
\draw (492,150) node [anchor=north west][inner sep=0.75pt]   [align=left] {Product test \cite{HM13}};
% Text Node
\draw (267,81) node [anchor=north west][inner sep=0.75pt]   [align=left] {Time Lower Bounds};
% Text Node
\draw (518,81) node [anchor=north west][inner sep=0.75pt]   [align=left] {Time-Optimality};
% Text Node
\draw (59,81) node [anchor=north west][inner sep=0.75pt]   [align=left] {Methods};
% Text Node
\draw (176.62,314.88) node [anchor=north west][inner sep=0.75pt]  [rotate=-306.32] [align=left] {\small qudit version};

\end{tikzpicture}
}
    \caption{Diagram of relationships amongst our results.}
    \label{fig:app}
\end{figure}

    In the following, we introduce the time-optimality of each quantum algorithmic tool. 

    \paragraph{Time-optimality of SWAP test.}

    The purity $\tr\rbra{\rho^2}$ is unitary-invariant (and thus permutation-invariant, as mentioned in \cref{sec:main}), which can be estimated by the SWAP test \cite{BCWdW01}. 
    By \cref{thm:pi-intro}, we can establish tight quantum time complexity lower bounds for purity estimation and purity testing, implying that the SWAP test is time-optimal for the two tasks. 

    \begin{corollary} [Quantum time lower bounds for purity estimation/testing, \cref{thm:lb-purity}] \label{corollary:purity-intro}
        Given sample access to an $n$-qubit quantum state $\rho$, (i) estimating $\tr\rbra{\rho^2}$ to within additive error $\varepsilon$ requires quantum time complexity $\Omega\rbra{n/\varepsilon^2}$, and (ii) determining whether $\tr\rbra{\rho^2} = 1$ or $\tr\rbra{\rho^2} \leq 1 - \varepsilon$ requires quantum time complexity $\Omega\rbra{n/\varepsilon}$.
    \end{corollary}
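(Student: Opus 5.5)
We will derive both parts from \cref{thm:pi-intro}. For each part we exhibit a $1$-qubit property $\mathcal{Q}_1$ that embeds into the $n$-qubit purity property $\mathcal{P}_n$, and whose sample complexity is $\Omega(1/\varepsilon^2)$ in case (i) and $\Omega(1/\varepsilon)$ in case (ii). The embedding state will be a pure product state $\sigma = \ketbra{0}{0}^{\otimes (n-1)}$. Since $\tr\rbra{(\rho\otimes\sigma)^2} = \tr\rbra{\rho^2}\tr\rbra{\sigma^2} = \tr\rbra{\rho^2}$, purity is preserved exactly under the embedding. Because purity is invariant under every unitary, in particular under every qubit permutation $U_\pi$, the $n$-qubit property is permutation-invariant. \cref{thm:pi-intro} then gives $\mathsf{T}\rbra{\mathcal{P}_n} \geq n\cdot \mathsf{S}\rbra{\mathcal{Q}_1}$, and it remains to bound the single-qubit sample complexities.

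\textbf{Part (ii): purity testing.} Let $\mathcal{Q}_1^{\textup{yes}}$ be the pure single-qubit states and $\mathcal{Q}_1^{\textup{no}}$ the single-qubit states with $\tr\rbra{\rho^2}\le 1-\varepsilon$. Write $\mathcal{P}_n$ for the same property on $n$ qubits. Embeddability holds by the multiplicativity identity above. For the sample lower bound $\Omega(1/\varepsilon)$, take the yes case to be the pure state $\ketbra{0}{0}$ and the no case to be $\rho_\delta = (1-\delta)\ketbra{0}{0} + \delta\ketbra{1}{1}$ with $\delta = \Theta(\varepsilon)$, chosen so that $\tr\rbra{\rho_\delta^2} = 1-2\delta+2\delta^2 \le 1-\varepsilon$. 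We have $\bra{0}^{\otimes S}\rho_\delta^{\otimes S}\ket{0}^{\otimes S} = (1-\delta)^S$, so the fidelity between $\ketbra{0}{0}^{\otimes S}$ and $\rho_\delta^{\otimes S}$ is $(1-\delta)^{S/2}$. By the Fuchs--van de Graaf inequality, the trace distance stays below $1/3$ unless $S = \Omega(1/\delta) = \Omega(1/\varepsilon)$. Helstrom's bound then rules out any tester with success probability $2/3$ using fewer samples.

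\textbf{Part (i): purity estimation.} An estimator to additive error $\varepsilon$ can distinguish $\tr\rbra{\rho^2}\ge a+2\varepsilon$ from $\tr\rbra{\rho^2}\le a-2\varepsilon$ (after rescaling constants). We therefore take $\mathcal{Q}_1$ to be a decision version whose yes and no instances are single-qubit mixed states with purities differing by $\Theta(\varepsilon)$. Use $\rho_\pm = (1/2\pm\eta)\ketbra{0}{0} + (1/2\mp\eta)\ketbra{1}{1}$ against, say, $\eta$ versus $\eta'$ with $\eta=1/4$ and $\eta'^2 = \eta^2 + \Theta(\varepsilon)$. Both states are diagonal in the same basis, so their $S$-fold copies are classical distributions: products of Bernoulli variables with biases differing by $\Theta(\varepsilon)$, kept away from $0$ and $1$. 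Distinguishing them with constant probability requires $\Omega(1/\varepsilon^2)$ samples, which follows from a standard KL or Hellinger bound for coins. The $n$-qubit estimation problem solves the embedded decision problem with the same $\sigma$, so \cref{thm:pi-intro} gives $\Omega(n/\varepsilon^2)$.

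The main obstacle is the bookkeeping in part (i). We must phrase estimation as a property pair $\rbra{\mathcal{P}_n^{\textup{yes}},\mathcal{P}_n^{\textup{no}}}$ compatible with the definition of embeddability, and verify that the purity gap survives the embedding. Because purity is exactly multiplicative with a pure $\sigma$, this should reduce to checking definitions. The sample lower bounds themselves are the classical coin-distinguishing and fidelity arguments.
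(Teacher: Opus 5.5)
Your proposal is correct and follows the paper's proof: embed the single-qubit purity problem into the $n$-qubit one via $\ketbra{0}{0}^{\otimes(n-1)}$, use unitary (hence permutation) invariance to apply \cref{thm:pi-intro}, and combine this with single-qubit sample lower bounds from diagonal hard instances. These are the same instances the paper uses for (ii), and for (i) a coin biased near $3/4$ rather than the paper's $2/3$. The only differences are that you prove those sample bounds directly (fidelity plus Helstrom, and a Hellinger bound for coins) instead of citing them, and that in (i) you must compare $\eta$ with $\eta'$ on the same sign branch, since $\rho_+$ and $\rho_-$ at equal $\eta$ have the same purity.
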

    
    Note that by the SWAP test \cite{BCWdW01}, purity estimation can be solved with sample complexity $O\rbra{1/\varepsilon^2}$ and time complexity $O\rbra{n/\varepsilon^2}$, and purity testing can be solved with sample complexity $O\rbra{1/\varepsilon}$ and time complexity $O\rbra{n/\varepsilon}$. 
    
    \begin{proof} [Proof sketch of \cref{corollary:purity-intro}]
        This can be shown by \cref{thm:pi-intro} and noting the sample complexity lower bounds $\Omega\rbra{1/\varepsilon^2}$ for purity estimation \cite{CWLY23,GHYZ24} and $\Omega\rbra{1/\varepsilon}$ for purity testing \cite{SW22,CWZ24}. 
    \end{proof}

    In addition, our method also applies to non-unitary-invariant (but still permutation-invariant) cases, e.g., inner product estimation.
    The inner product $\tr\rbra{\rho\sigma}$, which is the (squared) fidelity when one of $\rho$ and $\sigma$ is pure, can be estimated by the SWAP test \cite{BCWdW01}. 
    By the qudit version of \cref{thm:pi-intro}, we can establish tight quantum time complexity lower bounds for inner product estimation, implying that the SWAP test is time-optimal for this task. 

    \begin{corollary} [Quantum time lower bounds for inner product estimation, \cref{thm:lb-inner-prod}] \label{corollary:inner-product}
        Given sample access to $n$-qubit quantum states $\rho$ and $\sigma$, estimating $\tr\rbra{\rho\sigma}$ to within additive error $\varepsilon$ requires quantum time complexity $\Omega\rbra{n/\varepsilon^2}$, even if both $\rho$ and $\sigma$ are pure states. 
    \end{corollary}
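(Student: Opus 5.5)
The plan is to apply the qudit version of \cref{thm:pi-intro} (\cref{remark:qudit}, \cref{thm:main-sp-qudit}) with local dimension $d = 4$. The input to the inner-product problem is the pair $(\rho, \sigma)$ of $n$-qubit states, accessed through samples of each. I would package this as a single joint state $\rho \otimes \sigma$ on $n$ qudits of dimension $4$: qudit $i$ consists of the $i$-th qubit of $\rho$ together with the $i$-th qubit of $\sigma$.

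A sample of $(\rho, \sigma)$ is then one copy of this $n$-qudit state. Any circuit on qubits can be simulated by $2$-qudit gates with constant overhead, and conversely, so the time complexities agree up to constant factors.

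The property $\mathcal{P}_n$ is the gap version of inner-product estimation: yes-instances have $\tr(\rho\sigma) \geq a + \varepsilon$ and no-instances have $\tr(\rho\sigma) \leq a - \varepsilon$ (for some fixed threshold $a$), with $\rho$ and $\sigma$ both pure. Any $\varepsilon/2$-estimator decides this, so it suffices to lower bound $\mathsf{T}(\mathcal{P}_n)$. Permutation invariance holds because permuting the $4$-dimensional qudits applies the same qubit permutation $U_\pi$ to both $\rho$ and $\sigma$. Since $\tr(U_\pi\rho U_\pi^\dag \, U_\pi \sigma U_\pi^\dag) = \tr(\rho\sigma)$ and purity is preserved, the property is invariant.

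For embeddability, I would take $\mathcal{Q}_1$ to be the $1$-qudit property consisting of pairs of single-qubit pure states $\ket{\phi}\otimes\ket{\psi}$ with $\abs{\langle\phi|\psi\rangle}^2 \geq a+\varepsilon$ versus $\leq a-\varepsilon$. The embedding state is $\ket{0}^{\otimes(n-1)}$ on each of the two registers, that is, $\ket{00}^{\otimes (n-1)}$ in qudit form. This gives
\[
\tr\bigl((\phi\otimes 0^{\otimes(n-1)})(\psi\otimes 0^{\otimes(n-1)})\bigr) = \abs{\langle\phi|\psi\rangle}^2,
\]
so membership is preserved exactly, and purity is preserved as well. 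The qudit theorem then yields $\mathsf{T}(\mathcal{P}_n) \geq n\cdot \mathsf{S}(\mathcal{Q}_1)$.

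The remaining step, which I expect to be the main obstacle, is to show $\mathsf{S}(\mathcal{Q}_1) = \Omega(1/\varepsilon^2)$ for single-qubit pure states. I would cite known sample lower bounds for inner product/fidelity estimation, or prove it directly. For a direct proof, fix $\ket{\phi} = \ket{0}$ and $a = 1/2$, and let $\ket{\psi_\pm}$ be qubit states with $\abs{\langle 0|\psi_\pm\rangle}^2 = 1/2 \pm \varepsilon$. Distinguishing $\ket{0}^{\otimes S}\otimes\ket{\psi_+}^{\otimes S}$ from $\ket{0}^{\otimes S}\otimes\ket{\psi_-}^{\otimes S}$ requires $1-\abs{\langle\psi_+|\psi_-\rangle}^{2S}$ to be a constant. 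Here $\abs{\langle\psi_+|\psi_-\rangle}^2 = 1-\Theta(\varepsilon^2)$, because the Bloch vectors differ by angle $\Theta(\varepsilon)$ near the equator. Hence $S = \Omega(1/\varepsilon^2)$, and the bound $\Omega(n/\varepsilon^2)$ follows.
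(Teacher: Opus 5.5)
Your proposal is correct and follows the paper's proof essentially step for step. You use the same $4$-dimensional qudit packaging of $\rho\otimes\sigma$, the same embedding through $\ket{0}^{\otimes(n-1)}$ on both registers, and an application of \cref{thm:main-sp-qudit} with the same single-qubit hard instance as in \cref{lemma:samp-inner-prod}; the only difference is that you verify that instance directly via the Helstrom bound instead of citing \cite{ALL22}. Restricting $\mathcal{P}_n$ and $\mathcal{Q}_1$ to pure states is a harmless and slightly cleaner way to obtain the ``even if pure'' clause, but fix the phases of $\ket{\psi_\pm}$ (e.g., real nonnegative amplitudes) so that $\abs{\langle\psi_+|\psi_-\rangle}^2 = 1-4\varepsilon^2$ actually holds.
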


    Note that by the SWAP test \cite{BCWdW01}, inner product estimation can be solved with sample complexity $O\rbra{1/\varepsilon^2}$ and time complexity $O\rbra{n/\varepsilon^2}$.
    
    \begin{proof} [Proof sketch of \cref{corollary:inner-product}]
        This is shown by the ($4$-dimensional) qudit version of \cref{thm:pi-intro} and noting the sample complexity lower bound $\Omega\rbra{1/\varepsilon^2}$ for inner product estimation \cite{ALL22}. 
    \end{proof}

    \paragraph{Time-optimality of Shift test.}

    To estimate high-order functionals of a quantum state such as $\tr\rbra{\rho^k}$ for integer $k \geq 3$, the Shift test, a generalized version of the SWAP test proposed in \cite{EAO+02}, can estimate $\tr\rbra{\rho^k}$ to within additive error $\varepsilon$ using $O\rbra{k/\varepsilon^2}$ samples of $\rho$ and $O\rbra{nk/\varepsilon^2}$ gates if $\rho$ is $n$-qubit. 
    By \cref{thm:pi-intro}, we can establish tight quantum time complexity lower bounds for estimating $\tr\rbra{\rho^k}$, implying that the Shift test is time-optimal for this task. 

    \begin{corollary} [Quantum time lower bounds for high-order power trace estimation, \cref{thm:pow-tr-est}] \label{corollary:pow-tr-intro}
        Given sample access to an $n$-qubit quantum state $\rho$, estimating $\tr\rbra{\rho^k}$ to within additive error $\varepsilon$ requires quantum time complexity $\Omega\rbra{nk/\varepsilon^2}$. 
    \end{corollary}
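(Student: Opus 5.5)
We would derive \cref{corollary:pow-tr-intro} from \cref{thm:pi-intro}, using a $1$-qubit property $\mathcal{Q}_1$ whose sample complexity is already $\Omega\rbra{k/\varepsilon^2}$. Then $\mathsf{T} \geq n \cdot \Omega\rbra{k/\varepsilon^2}$ follows immediately.

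The property $\mathcal{P}_n$ is the decision version of power trace estimation. For numbers $a$ and $b$ with $b - a \geq 2\varepsilon$ (chosen below), set $\mathcal{P}_n^{\textup{yes}} = \set{\rho}{\tr\rbra{\rho^k} \geq b}$ and $\mathcal{P}_n^{\textup{no}} = \set{\rho}{\tr\rbra{\rho^k} \leq a}$. Any estimator with additive error below $\varepsilon$ decides $\mathcal{P}_n$, so the time complexity of estimation is at least $\mathsf{T}\rbra{\mathcal{P}_n}$.

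\emph{Permutation invariance.} Since $\tr\rbra{\rbra{U\rho U^\dag}^k} = \tr\rbra{\rho^k}$ for every unitary $U$, and in particular for every qubit-permutation operator $U_\pi$, the property $\mathcal{P}_n$ is permutation-invariant.

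\emph{Embedding.} Take $\sigma = \ketbra{0}{0}^{\otimes (n-1)}$. For every $1$-qubit state $\rho$ we have $\tr\rbra{\rbra{\rho\otimes\sigma}^k} = \tr\rbra{\rho^k}\tr\rbra{\sigma^k} = \tr\rbra{\rho^k}$. Hence the $1$-qubit property $\mathcal{Q}_1$, defined by the same thresholds, is embeddable in $\mathcal{P}_n$. \cref{thm:pi-intro} then gives $\mathsf{T}\rbra{\mathcal{P}_n} \geq n \cdot \mathsf{S}\rbra{\mathcal{Q}_1}$.

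The main obstacle is the sample lower bound $\mathsf{S}\rbra{\mathcal{Q}_1} = \Omega\rbra{k/\varepsilon^2}$ for a single qubit, with the correct dependence on $k$. I would use the family $\rho_p = \mathrm{diag}\rbra{p, 1-p}$ with $p$ near $1$, for which
\[
f\rbra{p} = \tr\rbra{\rho_p^k} = p^k + \rbra{1-p}^k .
\]
Choose $p$ so that $p^k$ is a constant, i.e.\ $1 - p = \Theta\rbra{1/k}$. At such $p$, $f'\rbra{p} \approx k p^{k-1} = \Theta\rbra{k}$. Therefore a change of $\delta = \Theta\rbra{\varepsilon/k}$ in $p$ moves $f$ by at least $2\varepsilon$.

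The states $\rho_p$ and $\rho_{p+\delta}$ are commuting, so they behave as Bernoulli distributions. Their infidelity is about $\delta^2/\rbra{p\rbra{1-p}} = \Theta\rbra{k\delta^2} = \Theta\rbra{\varepsilon^2/k}$. The standard fidelity (Hellinger) bound for distinguishing $\rho^{\otimes S}$ then forces $S = \Omega\rbra{k/\varepsilon^2}$.

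We take $a$ and $b$ to be the values of $f$ at $p$ and $p+\delta$ (relabelling so that $b$ is the larger). Then $\mathcal{Q}_1$ is nontrivial, and deciding it distinguishes these two states, which gives the bound. We should check that the chosen parameters satisfy $\delta \leq 1 - p$, i.e.\ $\varepsilon$ is at most a small constant. Alternatively, if a known sample lower bound $\Omega\rbra{k/\varepsilon^2}$ for estimating $\tr\rbra{\rho^k}$ on constant-dimensional states is available in the literature, I would cite it instead.

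Combining the three steps yields $\Omega\rbra{nk/\varepsilon^2}$.
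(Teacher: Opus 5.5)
Your proposal is correct and matches the paper's proof essentially step for step. The paper likewise embeds the $1$-qubit problem via $\ketbra{0}{0}^{\otimes (n-1)}$ and uses unitary (hence permutation) invariance with \cref{thm:main-sp}. It then gets the $\Omega\rbra{k/\varepsilon^2}$ single-qubit sample bound from the same hard pair $\mathrm{diag}\rbra{1-\tfrac{1}{k}\pm\tfrac{\delta}{k},\, \tfrac{1}{k}\mp\tfrac{\delta}{k}}$ with $\delta=\Theta\rbra{\varepsilon}$, whose infidelity is $O\rbra{\varepsilon^2/k}$, via the Helstrom--Holevo bound; this is exactly your parameterization, and your caveat that $\varepsilon$ must be a sufficiently small constant matches the paper's hypothesis.
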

    \begin{proof} [Proof sketch]
        This can be shown by \cref{thm:pi-intro} and noting the sample complexity lower bound $\Omega\rbra{k/\varepsilon^2}$ for estimating $\tr\rbra{\rho^k}$ \cite{CWYZ25}. 
    \end{proof}

    \paragraph{Time-optimality of product test.}

    For an $n$-partite pure state $\ket{\psi}$, the problem of productness testing is to determine whether $\ket{\psi}$ is an $n$-partite product state or $\varepsilon$-far (in trace distance) from any $n$-partite product states. 
    This problem was first considered in \cite{MKB05}.
    In \cite{HM13}, they presented an efficient tester for this task using $O\rbra{1/\varepsilon^2}$ samples of $\ket{\psi}$ and $O\rbra{nm/\varepsilon^2}$ gates if each of the $n$ parts is $m$-qubit.
    By \cref{thm:full-intro}, we can show that the productness tester in \cite{HM13} is time-optimal. 

    \begin{corollary} [Quantum time lower bounds for productness testing, \cref{thm:prod-test}] \label{corollary:prod-test}
        Given sample access to an $n$-partite quantum state $\ket{\psi}$ with each part consisting of $m$ qubits, determining whether $\ket{\psi}$ is a product state or $\varepsilon$-far from any product states requires quantum time complexity $\Omega\rbra{nm/\varepsilon^2}$. 
    \end{corollary}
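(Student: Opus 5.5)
We derive the bound from \cref{thm:full-intro}, applied with the group that permutes the $i$-th qubit of every part among themselves. Write the $nm$ qubits as pairs $(i,j)$, where $i \in \sbra{m}$ is the position inside a part and $j \in \sbra{n}$ is the part. Let $A_i = \cbra{(i,j) \colon j \in \sbra{n}}$ for $i \in \sbra{m}$, and set $\mathcal{G} = \Sym\rbra{A_1} \times \dots \times \Sym\rbra{A_m}$. An element of $\mathcal{G}$ independently permutes, for each $i$, the $i$-th qubits across the $n$ parts. Such a permutation does not preserve the multipartite structure in general.

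\textbf{Choice of embedded property.} The $m$ qubits of the first part (the set $\cbra{(i,1)}$) meet each $A_i$ in exactly one element, so $R = \min_i \floor{n/1} = n$. For the small property $\mathcal{Q}_m$ we need an $m$-qubit problem with sample complexity $\Omega\rbra{m/\varepsilon^2}$, since then $R \cdot \mathsf{S}\rbra{\mathcal{Q}_m} = \Omega\rbra{nm/\varepsilon^2}$.

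The difficulty is that productness must survive the $\mathcal{G}$-action. The cleanest way to arrange this is to make the embedded state fully product across qubits in the yes case. So I take $\mathcal{Q}_m$ to be: decide whether the $m$-qubit pure state $\ket{\phi}$ is a product over its $m$ qubits, or $\varepsilon$-far from such.

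\textbf{Embedding and invariance.} The embedding state is $\sigma = \ket{0}^{\otimes m(n-1)}$ on the other parts.
\begin{itemize}
\item Yes case: if $\ket{\phi}$ is a qubitwise product, then any qubit permutation of $\ket{\phi}\otimes\ket{0\cdots0}$ is still a qubitwise product. In particular it is an $n$-partite product.
\item No case: suppose $\ket{\phi}$ is $\varepsilon$-far from qubitwise products. Under $\pi \in \mathcal{G}$, qubit $(i,1)$ is sent to $(i,\pi_i(1))$, so the $m$ qubits of $\ket{\phi}$ land in possibly different parts, while every other qubit is $\ket{0}$. Those $m$ qubits all lie in distinct positions $i$. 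The question is whether being far from qubitwise products implies being far from the product across whatever grouping the parts induce. This fails if two of the $\phi$-qubits land in the same part, because then entanglement between them is allowed.
\end{itemize}

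\textbf{Fix: a distance-$\varepsilon$ notion that is grouping-robust.} I would therefore choose $\mathcal{Q}_m$ as a direct sum of $m$ independent one-qubit-type tasks that no grouping can hide. Concretely, I would use the hard instances behind the $\Omega\rbra{1/\varepsilon^2}$ productness/purity lower bound, lifted so that distinguishing requires $\Omega\rbra{m/\varepsilon^2}$ samples.

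If this proves delicate, a simpler fallback is to take $m=1$-type blocks. Let $\mathcal{Q}$ test a $2$-qubit state, product versus $\varepsilon$-far (sample complexity $\Omega\rbra{1/\varepsilon^2}$, known from \cite{HM13} or via the purity bound of \cref{corollary:purity-intro}). Place its two qubits in parts $1$ and $2$, in the same position class, using the group $\prod_i \Sym\rbra{A_i}$. Restrict to permutations that keep those two qubits in distinct parts, for example by taking $A_i$ only over parts $\ge 3$ for the remaining positions. Then $R \approx nm/2$, which recovers $\Omega\rbra{nm/\varepsilon^2}$.

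\textbf{Main obstacle.} The hard step is verifying the no-case invariance: that every permuted state remains $\varepsilon$-far from all $n$-partite product states. I would first try the fallback. It requires choosing the partition of the $nm$ qubits so that the two entangled qubits always end up in different parts, and checking that $R = \Omega\rbra{nm}$ still holds. Given that, \cref{thm:full-intro} immediately yields the claimed bound.
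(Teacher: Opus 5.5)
Your main route cannot be made to work, and the fallback as written does not give $R = \Omega\rbra{nm}$.

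In the main route, the embedded qubits are exactly the qubits of part $1$, and the identity permutation lies in $\mathcal{G}$. So $\ket{\phi} \otimes \ket{0}^{\otimes m(n-1)}$ is an $n$-partite product state for \emph{every} $\ket{\phi}$. This is not a matter of occasional collisions: no nontrivial $\mathcal{Q}_m$ placed inside a single part can be embedded, whatever distance notion you choose. Separately, the target $\mathsf{S}\rbra{\mathcal{Q}_m} = \Omega\rbra{m/\varepsilon^2}$ is unattainable. Any property embeddable in productness testing has sample complexity at most that of productness testing, which is $O\rbra{1/\varepsilon^2}$ by the product test of \cite{HM13}. So $R = n$ can never yield more than $\Omega\rbra{n/\varepsilon^2}$; the factor $m$ must come from $R$.

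In your fallback, both embedded qubits lie in the position class $A_1$, which has only $n$ elements, so $R = \floor{n/2}$. Every block of a position-class partition has size $n$. The suggested modification (``$A_i$ only over parts $\geq 3$'') neither defines a partition of $\sbra{nm}$ nor enlarges the blocks containing the embedded qubits. The claim $R \approx nm/2$ is therefore unsupported, and this is exactly the step you defer as the main obstacle.

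The missing idea is to use blocks that are unions of \emph{whole parts}, with the two embedded qubits in \emph{different} blocks. The paper works with $2n$ parts and makes these choices:
\begin{itemize}
\item $A_1$ is the set of all qubits of the odd-numbered parts, and $A_0$ is the set of all qubits of the even-numbered parts.
\item $\mathcal{G} = \Sym\rbra{A_0} \times \Sym\rbra{A_1}$.
\item $\textsc{Productness}\sbra{2,1,\varepsilon}$ is embedded with one qubit in part $1$, one in part $2$, and $\ket{0}$ elsewhere.
\end{itemize}
Each block has $nm$ qubits and contains exactly one embedded qubit, so $R = nm$. Since $\mathcal{G}$ never moves a qubit between odd and even parts, the two embedded qubits always land in distinct parts.

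The yes case is immediate, because the permuted state stays fully product. For the no case, take any product state $\bigotimes_j \ket{\psi_j}$. Contract each of the two relevant parts against $\ket{0}$ on its remaining qubits; this leaves sub-normalized single-qubit vectors. Every other part contributes a factor $\abs{\rbra{\bra{\psi_j}}\ket{0}^{\otimes m}}$ of modulus at most $1$. Hence the overlap is at most $\max_{a,b} \abs{\rbra{\bra{a} \otimes \bra{b}} \ket{\phi}}$, and the distance $\varepsilon$ is preserved.

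Finally, combine this with the $\Omega\rbra{1/\varepsilon^2}$ sample lower bound for $\textsc{Productness}\sbra{2,1,\varepsilon}$. That bound comes via purity testing with gap $\varepsilon^2$, as in \cite{SW22,CWZ24}; \cite{HM13} supplies only the matching upper bound. Then \cref{thm:main} gives $\Omega\rbra{nm/\varepsilon^2}$.
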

    \begin{proof}[Proof sketch]
        This time lower bound is based on the sample complexity lower bound $\Omega\rbra{1/\varepsilon^2}$ for productness testing \cite{SW22,CWZ24}. 
        To apply \cref{thm:full-intro}, we split the $nm$ qubits into two sets $A_0$ and $A_1$, each with $nm/2$ qubits. 
        Then, we show that the smaller problem with $n = 2$ and $m = 1$ is $\mathcal{G}$-invariantly embeddable in this larger problem, where $\mathcal{G} = \Sym\rbra{A_0} \times \Sym\rbra{A_1}$. 
    \end{proof}

    \paragraph{Time-optimality of LMR protocol.}

    The LMR protocol \cite{LMR14,KLL+17,GKP+24} allows us to approximately implement the unitary operator $e^{-i \rho t}$ using samples of $\rho$. 
    In particular, when $\rho$ is an $n$-qubit pure state, we can implement $e^{-i\rho t}$ to precision $\delta$ (in diamond norm) using $O\rbra{1/\delta}$ samples of $\rho$ and $O\rbra{n/\delta}$ two-qubit gates for any real number $t$. 
    By \cref{thm:pi-intro}, we can show that the LMR protocol is time-optimal for pure states. 

    \begin{corollary} [Time-optimality of LMR protocol, \cref{thm:lmr}] \label{corollary:lmr}
        Given sample access to an $n$-qubit pure quantum state $\rho$, implementing $e^{-i\rho t}$ to precision $\delta$ requires quantum time complexity $\Omega\rbra{n/\delta}$, even if $t = \pi$.
    \end{corollary}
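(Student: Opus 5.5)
We plan to derive \cref{corollary:lmr} from \cref{thm:pi-intro}. The idea is to turn any implementation of $e^{-i\rho t}$ at $t=\pi$ into a tester for a permutation-invariant property of pure states. That property must have $1$-qubit sample complexity $\Omega\rbra{1/\delta}$.

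For a pure state $\rho=\ketbra{\psi}{\psi}$ we have $e^{-i\pi\rho} = I - 2\ketbra{\psi}{\psi}$, which is the reflection about $\ket{\psi}$. Suppose a channel $\delta$-approximates this reflection in diamond norm using $T$ gates. We use it to build a tester for the following property $\mathcal{P}_n$:
\begin{enumerate}[(i)]
\item yes: $\ket{\psi}=\ket{\phi_0}^{\otimes n}$;
\item no: $\ket{\psi}=\ket{\phi_1}^{\otimes n}$.
\end{enumerate}
Here $\ket{\phi_0},\ket{\phi_1}$ are fixed single-qubit states. A tensor power is invariant under qubit permutations, so $\mathcal{P}_n$ is permutation-invariant.

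We choose $\mathcal{Q}_1$ to be the corresponding $1$-qubit problem, with yes instance $\ket{\phi_0}$ and no instance $\ket{\phi_1}$. It embeds into $\mathcal{P}_n$ only if $\ket{\phi}\otimes\sigma$ can be a tensor power, so we take $\sigma=\ketbra{\phi}{\phi}^{\otimes(n-1)}$. This $\sigma$ depends on $\phi$, which \cref{def:embed} does not allow.

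To repair this, we instead choose both properties so that the embedding state is fixed. Take $\ket{\phi_0}=\ket{0}$ and $\sigma=\ketbra{0}{0}^{\otimes (n-1)}$. Let the yes instance of $\mathcal{Q}_1$ be $\ket{0}$, with $\ket{0}^{\otimes n}$ the yes instance of $\mathcal{P}_n$. Let the no instance of $\mathcal{Q}_1$ be $\ket{\phi_\theta}=\cos\theta\ket{0}+\sin\theta\ket{1}$. Its embedding $\ket{\phi_\theta}\ket{0}^{\otimes(n-1)}$ and all qubit permutations of it form the no instances of $\mathcal{P}_n$. With the no set closed under permutations and the yes set fixed by them, $\mathcal{P}_n$ is permutation-invariant and $\mathcal{Q}_1$ embeds with a fixed $\sigma$.

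Next we turn the approximate reflection into a tester. Run the implemented channel on an auxiliary register holding a fixed known state $\ket{0^n}$, or better a known state orthogonal to the yes state such as $\ket{1}\ket{0}^{\otimes(n-1)}$. Then measure in a basis that separates the two exact outputs. For the yes state, $I-2\ketbra{0^n}{0^n}$ acts trivially on $\ket{1 0^{n-1}}$. For the no state, the reflection maps $\ket{10^{n-1}}$ to $\ket{10^{n-1}} - 2\sin\theta\cos\theta\,\ket{\psi}$.

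This output differs from the input by an amount of order $\theta$ in trace distance. So if the channel error satisfies $\delta \le c\theta$ for a small constant $c$, a constant-bias test succeeds. Repeating this a constant number of times raises the success probability to $2/3$. The tester uses $O(T+n)$ gates, and the extra $O(n)$ is harmless because $T\ge n$ anyway.

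\Cref{thm:pi-intro} then gives $T+O(n)\ge n\cdot\mathsf{S}\rbra{\mathcal{Q}_1}$. Distinguishing $\ket{0}$ from $\ket{\phi_\theta}$ with fidelity $\cos^2\theta$ requires $\Omega\rbra{1/\theta^2}$ samples. That yields only $n/\delta^2$ if the test above could detect a perturbation of size $\theta\approx\delta$, which is too strong to hold, so the constants and scalings must be rechecked.

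The main obstacle is calibrating the hard instance so that the deduced sample lower bound is exactly $\Omega\rbra{1/\delta}$. There are two natural routes, and we would try the second first.
\begin{enumerate}[(a)]
\item Take $\theta\approx\sqrt{\delta}$. Then the reflection output changes by about $\sqrt{\delta}\gg\delta$, which is clearly detectable, and $\mathsf{S}\rbra{\mathcal{Q}_1}=\Omega\rbra{1/\theta^2}=\Omega\rbra{1/\delta}$. This matches the claimed bound.
\item Reduce from a known $1$-qubit lower bound for implementing reflections. A single-qubit implementation of $e^{-i\pi\rho}$ to precision $\delta$ needs $\Omega\rbra{1/\delta}$ samples, as in the optimality analyses of \cite{KLL+17}. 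Phrase that task as a decision property, namely distinguishing two pure states via their reflections, and embed it as above.
\end{enumerate}
The delicate points are two. First, the decision problem must be permutation-invariant while the embedding state stays fixed. Second, the reflection channel must actually resolve the two instances at the scale $\theta^2\approx\delta$.
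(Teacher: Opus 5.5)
\textbf{There is a genuine gap in the step that turns the approximate reflection into a tester.} A single use of a channel that is $\delta$-close in diamond norm to $R_\psi(\cdot)R_\psi^\dag$ has limited distinguishing power. Apply it to any probe, even one entangled with a reference: it separates the yes and no instances with bias only $O\rbra{\theta+\delta}$. The reason is that $R_{\psi_0}-R_{\psi_1} = 2\rbra{\ketbra{\psi_1}{\psi_1}-\ketbra{\psi_0}{\psi_0}}$ has operator norm $2\sin\theta$.

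So your claim that ``a constant-bias test succeeds'' once $\delta\le c\theta$ is false, and it is exactly what produced the impossible $n/\delta^2$.
\begin{itemize}
\item Boosting a bias of order $\theta$ to $2/3$ by independent repetition costs $\Theta\rbra{1/\theta^2}$ runs. The tester then has time $\Theta\rbra{(T+n)/\theta^2}$, and \cref{thm:pi-intro} only yields $T+O(n)\ge\Omega(n)$, with no dependence on $\delta$.
\item Route (a) fails the same way: a $\sqrt{\delta}$ change in the output is a $\sqrt{\delta}$ bias, not a constant one.
\item Route (b) is no shortcut. \cref{thm:main} applies only to decision testers, and the $\Omega\rbra{1/\delta}$ sample bound for implementing reflections is itself a many-query phenomenon.
\end{itemize}

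There is also a separate defect in the probe. Your probe $\ket{1}\ket{0}^{\otimes(n-1)}$ is orthogonal to every permuted no-instance whose $\ket{\phi_\theta}$ is not on the first qubit. On those instances the reflection acts trivially on the probe, so the output coincides with the yes case. Yet invariant embeddability forces those instances into $\mathcal{P}_n^{\textup{no}}$.

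\textbf{The missing idea is coherent, Heisenberg-scaling amplification.} Use the approximate reflection $Q=\Theta\rbra{1/\theta}$ times inside a Grover or amplitude-estimation circuit. For instance, alternate it with the cheap reflection about $\ket{0}^{\otimes n}$: this is the identity on the yes instance, rotates every permuted no-instance by $2\theta$ per step, and treats all permutations alike.
\begin{itemize}
\item Choose $\theta=\Theta\rbra{\delta}$, so the accumulated error $Q\delta$ stays a small constant.
\item Then \cref{thm:pi-intro} with $\mathsf{S}\rbra{\mathcal{Q}_1}=\Omega\rbra{1/\theta^2}$ gives $Q\rbra{T+O(n)}\ge\Omega\rbra{n/\theta^2}$, i.e.\ $T=\Omega\rbra{n/\delta}$.
\end{itemize}

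This is in effect the paper's argument, packaged modularly:
\begin{itemize}
\item \cref{thm:time-puretd} (obtained from \cref{thm:pi-intro}) is combined with the $O\rbra{1/\varepsilon}$-query estimator of \cref{lemma:subroutine-td} to give the samplizer bound \cref{thm:time-samplizer}.
\item Plugging a $\delta/Q$-precise reflection into the samplizer construction of \cite{WZ24c} gives $Q\cdot T\rbra{\delta/Q}\ge\Omega\rbra{nQ^2/\delta}$, hence $T\rbra{\delta'}=\Omega\rbra{n/\delta'}$.
\end{itemize}
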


    \paragraph{Time-optimality of samplizer.}

    The samplizer \cite{WZ24} is a generalization of the quantum sample-to-query lifting \cite{WZ25}, which allows us to convert a quantum query algorithm, using $Q$ queries to the block-encoding of a quantum state $\rho$, to a quantum algorithm, using $\widetilde{O}\rbra{Q^2/\delta}$ samples of $\rho$ and $\widetilde{O}\rbra{nQ^2/\delta}$ (additional) gates if $\rho$ is $n$-qubit, to precision $\delta$ in the diamond norm distance. 
    In particular, as noted in \cite{WZ24c}, when $\rho$ is a pure state, the block-encoding of $\rho$ is equivalent to the reflection operator $e^{-i\rho \pi}$ that can be implemented by the LMR protocol \cite{LMR14,KLL+17,GKP+24}, and thus the polylogarithmic factors can be removed from the sample and time complexities given in \cite{WZ24}; this observation is also used in quantum (maximum) phase estimation with advice \cite{MdW23}. 
    By \cref{thm:pi-intro}, we can show that the implementation of the samplizer in \cite{WZ24c} is time-optimal for pure states. 

    \begin{corollary} [Time-optimality of samplizer, \cref{thm:time-samplizer}] \label{corollary:samplizer}
        Given sample access to an $n$-qubit pure quantum state $\rho$, for any quantum algorithm using $Q$ queries to the reflection operator $e^{-i\rho\pi}$, its samplized version to precision $\delta$ requires quantum time complexity $\Omega\rbra{nQ^2/\delta}$.
    \end{corollary}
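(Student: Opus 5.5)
The plan is to lift a sample lower bound for samplizing to a time lower bound, using the permutation-invariance mechanism of \cref{thm:pi-intro} in the same way as for \cref{corollary:lmr}. I assume that, like \cref{corollary:lmr}, the formal statement \cref{thm:time-samplizer} is about implementing a channel rather than testing a property. So the reduction behind \cref{thm:pi-intro} (and its formal version \cref{thm:main-sp}) must be applied in its ``task'' form.

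That form reads as follows. Suppose a task on $n$-qubit states is invariant under qubit permutations, and a $1$-qubit instance of it embeds by $\rho \mapsto \rho\otimes\ket{0}\!\bra{0}^{\otimes(n-1)}$. Then any circuit solving the $n$-qubit task must, for some $\pi\in\Sym(n)$, effectively touch a slot carrying a nontrivial copy of $\rho$ at most a $1/n$ fraction of its gates. Hence it yields a $1$-qubit solution using at most $\mathsf{T}/n$ samples. I would take this step from the proof of \cref{corollary:lmr} essentially verbatim.

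\textbf{Step 1: the hard $1$-qubit family.} I fix a single $Q$-query algorithm $\mathcal{A}$ whose samplization on qubit pure states already needs $\Omega(Q^2/\delta)$ samples. Take $\mathcal{A}$ to be $Q$ Grover-type iterations $(e^{-i\rho\pi}R_0)^Q$ followed by a measurement, where $R_0$ is a fixed reflection. Use two qubit pure states $\ket{\psi_0},\ket{\psi_1}$ at angle $\theta$. Then the outputs of $\mathcal{A}$ differ by roughly $Q\theta$, while distinguishing $\psi_0^{\otimes S}$ from $\psi_1^{\otimes S}$ requires $S=\Omega(1/\theta^2)$ by the fidelity bound $\abs{\langle\psi_0|\psi_1\rangle}^{2S}$.

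The parameters must be tuned so that a $\delta$-precise samplization certifies a distinguishing advantage. Concretely, I want the exact outputs to be $\Omega(\delta)$-separated and $\theta^2 \approx \delta/Q^2$, which gives $\Omega(Q^2/\delta)$ samples. If this tuning resists, I would fall back on the known $\Omega(Q^2/\delta)$ sample lower bound for samplizers from \cite{WZ24,WZ24c}, restricted to qubit pure states.

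\textbf{Step 2: permutation invariance and embedding.} The map $\rho\mapsto e^{-i\rho\pi}$ is unitary-covariant, hence permutation-covariant: conjugating $\rho$ by $U_\pi$ conjugates the reflection by $U_\pi$. So the samplizing task for $\mathcal{A}$, with $\mathcal{A}$'s fixed gates conjugated accordingly, is permutation-invariant. For the embedding, note that $\ket{\psi}\otimes\ket{0}^{\otimes(n-1)}$ has reflection $e^{-i\psi\pi}\otimes\ket{0}\!\bra{0}^{\otimes(n-1)} + I\otimes(I-\ket{0}\!\bra{0}^{\otimes(n-1)})$. On the subspace where the last $n-1$ qubits are $\ket{0}$, this acts exactly as the $1$-qubit reflection. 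Therefore the $n$-qubit samplization of the padded $\mathcal{A}$, restricted to that subspace, solves the $1$-qubit task. Applying the reduction of \cref{thm:pi-intro} then gives $\mathsf{T}\geq n\cdot\Omega(Q^2/\delta)$.

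\textbf{Main obstacle.} The delicate point is Step 1: converting ``$\delta$-precision in diamond norm'' into a genuine sample lower bound of order $Q^2/\delta$, not $Q^2/\delta^2$ or $Q/\delta$. The trouble is that a $\delta$-accurate channel only gives a $\Theta(\delta)$ distinguishing bias, not a constant one. My first attempt would mirror the proof of \cref{corollary:lmr}. There, the $Q=1$ case presumably uses states at angle $\sqrt{\delta}$, so that any $\delta$-precise implementation of the reflection lets one compute a bit with bias bounded away from the trivial value. That costs $\Omega(1/\delta)$ samples by Helstrom/fidelity bounds. I would then compose this with the $Q$-fold Grover amplification of the angle, so the same bias is achieved with $\theta\approx\sqrt{\delta}/Q$.
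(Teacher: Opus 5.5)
There are genuine gaps in both steps.

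\textbf{Step 2.} The sample-to-time reduction (\cref{thm:main}, \cref{thm:main-sp}) is proved only for testers. A tester is one fixed circuit with one measured output qubit, and it must be correct on every permuted input $U_\pi(\rho\otimes\sigma)U_\pi^\dag$. The pigeonholed slot $r^*$ is read off that one circuit. The paper has no ``task form'' for implementing channels. You also cannot borrow one from \cref{corollary:lmr}: there, \cref{thm:lmr} is derived \emph{from} the samplizer bound you are proving, so that would be circular.

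The obstruction is concrete. The samplized circuit depends on $\mathcal{A}$. Conjugating $\mathcal{A}$'s gates by $U_\pi$ gives a different circuit, possibly with a different cheap slot. So the permutation found by pigeonhole cannot be carried back to your padded instance. The argument works only when $\mathcal{A}$ is itself permutation-symmetric and ends in a measurement, so that samplize-then-measure is a tester for a permutation-invariant property.

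That is exactly the paper's route:
\begin{enumerate}
\item Samplize, at constant precision $0.01$, the estimator of $\mathrm{T}(\ket{\psi},\ket{0})$ from \cref{lemma:subroutine-td}, which uses $O(1/\varepsilon)$ queries and $O(n/\varepsilon)$ gates.
\item This gives a tester for $\textsc{PureTD}\sbra{n,0,2\varepsilon}$.
\item \cref{thm:time-puretd} then yields $T(\Theta(1/\varepsilon),0.01)+O(n/\varepsilon)\geq\Omega(n/\varepsilon^2)$, i.e., $T(Q,0.01)\geq cnQ^2$.
\end{enumerate}

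\textbf{Step 1.} Your tuning cannot produce the factor $1/\delta$. With $\theta\approx\sqrt{\delta}/Q$, the ideal outputs are about $Q\theta\approx\sqrt{\delta}$ apart. The $\delta$-precise outputs are then about $\sqrt{\delta}-2\delta$ apart, which is a bias of order $\sqrt{\delta}$, not a constant. Now $S$ copies at angle $\theta$ have trace distance about $\sqrt{S}\,\theta$. So data processing only forces $\sqrt{S}\,\theta\gtrsim\sqrt{\delta}$, i.e., $S\gtrsim Q^2$. Helstrom therefore gives only $\Omega(Q^2)$. Your $Q=1$ heuristic fails the same way: a single copy at angle $\sqrt{\delta}$ already achieves bias $\sqrt{\delta}$.

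More fundamentally, testers need constant bias, so no single application of \cref{thm:main-sp} can see $\delta$. The missing idea is composition:
\begin{enumerate}
\item Samplizing the first $Q_1$ queries and the last $Q_2$ queries separately gives $T(Q_1+Q_2,\delta_1+\delta_2)\leq T(Q_1,\delta_1)+T(Q_2,\delta_2)$.
\item Hence $T(mQ,m\delta)\leq m\,T(Q,\delta)$.
\item Take $m=\floor{1/(100\delta)}$ and use monotonicity in $\delta$. Then $T(Q,\delta)\geq\frac{1}{m}T(mQ,0.01)\geq cnmQ^2=\Omega(nQ^2/\delta)$.
\end{enumerate}
Because this is done directly on time complexity, you never need a precision-$\delta$ sample lower bound for a $1$-qubit samplizer, so your fallback citation is unnecessary.
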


    \cref{corollary:samplizer} also implies that the implementation of samplizer (for mixed states) in \cite{WZ24} is time-optimal up to polylogarithmic factors. 

    \paragraph{Time-optimality of pure-state trace distance estimation.}

    Recently, a sample-optimal quantum estimator for pure-state trace distance and (square root) fidelity was proposed in \cite{WZ24c}, which, for an estimate to within additive error $\varepsilon$, uses $O\rbra{1/\varepsilon^2}$ samples of $n$-qubit pure states and $O\rbra{n/\varepsilon^2}$ two-qubit gates. 
    By \cref{thm:pi-intro}, we can show that the estimator in \cite{WZ24c} is also time-optimal. 

    \begin{corollary} [Quantum time lower bounds for pure-state trace distance estimation, \cref{thm:time-puretd}] \label{corollary:td-estimation}
        Given sample access to two $n$-qubit pure quantum states, estimating their trace distance and (square root) fidelity to within additive error $\varepsilon$ requires quantum time complexity $\Omega\rbra{n/\varepsilon^2}$. 
    \end{corollary}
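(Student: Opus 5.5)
We derive \cref{corollary:td-estimation} from \cref{thm:pi-intro}, in its qudit version (\cref{remark:qudit}), in the same way as \cref{corollary:inner-product}. The input is two $n$-qubit pure states $\ket{\psi}$ and $\ket{\phi}$, accessed through samples. We encode the pair as a single state on $n$ ``super-qudits'' of dimension $d=4$. The $i$-th qudit is formed by the $i$-th qubit of $\ket{\psi}$ together with the $i$-th qubit of $\ket{\phi}$. A sample pair $\ket{\psi}\otimes\ket{\phi}$ is thus regarded as an $n$-qudit product-structured state $\Psi$. Permuting these qudits with a common $\pi\in\Sym\rbra{n}$ maps the pair to $(U_\pi\ket{\psi},U_\pi\ket{\phi})$. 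This leaves $\abs{\braket{\psi}{\phi}}$, and hence both the trace distance $\sqrt{1-\abs{\braket{\psi}{\phi}}^2}$ and the square-root fidelity, unchanged. So the estimation problem, recast as a promise (decision) property $\mathcal{P}_n$, is permutation-invariant in the sense required by \cref{thm:pi-intro}.

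Concretely, we define $\mathcal{P}_n^{\textup{yes}}$ as the set of such pairs with trace distance at least $a+2\varepsilon$, and $\mathcal{P}_n^{\textup{no}}$ as those with trace distance at most $a$, for a suitable constant $a$. The fidelity version is handled analogously, or by noting that fidelity and trace distance determine each other for pure states. Any $\varepsilon$-estimator then gives a tester, so it suffices to lower-bound $\mathsf{T}\rbra{\mathcal{P}_n}$.

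Next we need a $1$-qudit property $\mathcal{Q}_1$ embeddable in $\mathcal{P}_n$. We take $\mathcal{Q}_1$ to be the same decision problem for pairs of single-qubit pure states $\ket{\psi_1},\ket{\phi_1}$, encoded as one $4$-dimensional qudit. As the embedding state we use $\sigma=\rbra{\ketbra{0}{0}\otimes\ketbra{0}{0}}^{\otimes(n-1)}$, i.e. both states padded by $\ket{0^{n-1}}$. Then $\braket{\psi_1 0^{n-1}}{\phi_1 0^{n-1}}=\braket{\psi_1}{\phi_1}$, so the trace distance is preserved exactly and the yes/no membership transfers. A subtlety is that \cref{thm:pi-intro} speaks of samples $\rho^{\otimes S}$ of a single state. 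We must make sure the model ``sample access to two states'' matches this, by taking one sample of the pair, i.e. one copy of each, as one sample of the combined state $\Psi$. This is the same device presumably used for \cref{corollary:inner-product}, and it loses only constant factors when copies of $\ket{\psi}$ and $\ket{\phi}$ are requested in unequal numbers.

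We then invoke the sample lower bound $\mathsf{S}\rbra{\mathcal{Q}_1}=\Omega\rbra{1/\varepsilon^2}$. This is the step I expect to be the main obstacle, since it must hold already for single-qubit pure states, and with constant-dimensional inputs. I would obtain it from the sample-optimality lower bound of \cite{WZ24c}, or reduce from the inner product/fidelity lower bound of \cite{ALL22}. Alternatively, one can argue directly that distinguishing $\abs{\braket{\psi}{\phi}}^2=1/2$ from $1/2+\Theta\rbra{\varepsilon}$ with qubit states requires $\Omega\rbra{1/\varepsilon^2}$ copies. This follows from a two-hypothesis fidelity bound between the averaged (Haar-twirled over a common unitary) states on $\rbra{\mathbb{C}^2\otimes\mathbb{C}^2}^{\otimes S}$. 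Combining this with the qudit version of \cref{thm:pi-intro} gives $\mathsf{T}\rbra{\mathcal{P}_n}\geq n\cdot\Omega\rbra{1/\varepsilon^2}=\Omega\rbra{n/\varepsilon^2}$, where each $2$-qudit gate costs $O\rbra{1}$ two-qubit gates, so the bound carries over to qubit gate counts.
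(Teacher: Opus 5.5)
Your argument is correct, but for the trace-distance part it takes a different route from the paper. You keep both states variable and apply the $4$-dimensional qudit version of \cref{thm:pi-intro}, exactly as the paper does for \cref{corollary:inner-product}. The paper also handles the square-root fidelity part by citing that corollary.

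For trace distance, the paper instead fixes the second state to $\ket{0^n}$. The single-state property $\textsc{PureTD}\sbra{n,a,b}$ (distance of $\ket{\psi}$ from $\ket{0^n}$) is already permutation-invariant at the qubit level. $\textsc{PureTD}\sbra{1,a,b}$ embeds into it via $\ket{0}^{\otimes(n-1)}$, and the qubit version \cref{thm:main-sp} applies. The single-qubit sample bound is then a one-line Helstrom--Holevo computation for $\sqrt{1-\varepsilon^2}\ket{0}+\varepsilon\ket{1}$ versus $\ket{0}$, with thresholds $0$ versus $\varepsilon$ rather than thresholds around a constant $a$.

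The paper's route avoids qudits entirely and proves something stronger: hardness even when one of the two states is fixed and known. That is exactly the form used later by \cref{thm:time-samplizer,thm:lmr}, since the query algorithm of \cref{lemma:subroutine-td} only has reflections about $\ket{\psi}$.

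Your route addresses the symmetric two-state problem directly. The step you flag as the main obstacle is easy by the same trick. Take $\ket{\phi_1}=\ket{0}$ and $\ket{\psi_\pm}=\sqrt{1/2\pm\varepsilon}\ket{0}+\sqrt{1/2\mp\varepsilon}\ket{1}$. The trace distances $\sqrt{1/2\mp\varepsilon}$ differ by $\Theta(\varepsilon)$, while $\abs{\langle\psi_+|\psi_-\rangle}^2=1-4\varepsilon^2$. Helstrom--Holevo then gives $\Omega(1/\varepsilon^2)$ with no twirling needed.

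One small correction concerns transferring the qudit lower bound to two-qubit gate counts. What you need is that every two-qubit gate touches at most two of your $4$-dimensional qudits, so it is itself a $2$-qudit gate. Hence a qubit circuit of size $T$ is a qudit circuit of size at most $T$. The fact you cite, that a $2$-qudit gate costs $O(1)$ two-qubit gates, is the converse direction and does not carry a lower bound over.
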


    \begin{proof} [Proof sketch]
        For (square root) fidelity, the lower bound has been shown in \cref{corollary:inner-product}. 
        For trace distance, the lower bound can be shown by \cref{thm:pi-intro} and noting the sample complexity lower bound $\Omega\rbra{1/\varepsilon^2}$ for trace distance estimation \cite{Wan24}. 
    \end{proof}

    \cref{corollary:td-estimation} also serves as a starting point for proving \cref{corollary:lmr,corollary:samplizer}. 

    \begin{proof} [Proof sketch of \cref{corollary:lmr,corollary:samplizer}]
        This is done by reducing the problem of pure-state trace distance estimation. 
        Let $f_{\textup{LMR}}\rbra{\delta}$ and $f_{\textup{samplize}}\rbra{Q, \delta}$ be the quantum time complexities, respectively, for implementing $e^{-i\rho\pi}$ and for approximating a quantum algorithm with $Q$ queries to the block-encoding of $\rho$, both to precision $\delta$ in diamond norm. 
        Given the quantum algorithm in \cite{WZ24c} that estimates the trace distance between two pure states to within additive error $\varepsilon$ using $O\rbra{1/\varepsilon}$ queries to their reflection operators, this can be done with quantum time complexity $f_{\textup{samplize}}\rbra{O\rbra{1/\varepsilon}, 2/3}$ if given sample access to the pure states. 
        Combining \cref{corollary:td-estimation}, we have established the relation $f_{\textup{samplize}}\rbra{O\rbra{1/\varepsilon}, 2/3} \geq \Omega\rbra{n/\varepsilon^2}$, which gives $f_{\textup{samplize}}\rbra{Q, \delta} \geq \Omega\rbra{nQ^2/\delta}$ with further analyses.  
        To establish the lower bound $f_{\textup{LMR}}\rbra{\delta} \geq \Omega\rbra{n/\delta}$, we only have to note the relation $Q \cdot f_{\textup{LMR}}\rbra{\delta/Q} \geq f_{\textup{samplize}}\rbra{Q, \delta}$ implied in the implementation of the samplizer in \cite{WZ24c}. 
    \end{proof}

    \subsection{Techniques}

    Our main result is based on the light-cone argument. 
    As a warm-up, we first explain how to prove a lower bound for purity testing. 

    \paragraph{An Intuitive Example: Purity Testing.}
    Consider the purity testing problem: determine whether a mixed quantum state $\rho$ is pure or has purity $\tr\rbra{\rho^2} \leq 1 - \varepsilon$. 
    The $1$-qubit case of purity testing can be embedded into the $n$-qubit case by putting this qubit with the other $\rbra{n-1}$ qubits being some pure states. 
    We now want to show that $\Omega\rbra{n/\varepsilon}$ quantum gates are necessary for the $n$-qubit case of purity testing.
    If it is not the case, i.e., there is a tester for purity testing using $o\rbra{n/\varepsilon}$ gates, then we can find a qubit (out of $n$) that is touched only $o\rbra{1/\varepsilon}$ times, which, by the permutation invariance of purity testing, implies a tester for the $1$-qubit case with sample complexity $o\rbra{1/\varepsilon}$.
    On the other hand, this violates the $\Omega\rbra{1/\varepsilon}$ sample lower bound for the $1$-qubit case of purity testing (see \cref{lemma:samp-purity}). 

    The above arguments give an intuitive idea of how to prove \cref{thm:pi-intro,corollary:purity-intro}. 
    \cref{thm:full-intro} further extends this idea to less symmetric cases by carefully finding as few such ``useful'' qubits as possible. 

    \paragraph{Proof Sketch of \cref{thm:full-intro}.}
    From a high-level view, our proof is based on an algorithmic reduction of $\mathcal{Q}_m$ to $\mathcal{P}_n$. 
    Specifically, we are going to construct a tester $\overline{\mathcal{T}}$ for $\mathcal{Q}_m$ with sample complexity $\mathsf{T}\rbra{\mathcal{P}_n}/R$.
    It is noted that the construction of $\overline{\mathcal{T}}$ depends on the actual tester $\mathcal{T}$ for $\mathcal{P}_n$. 
    Here, we assume that the tester $\mathcal{T}$ for $\mathcal{P}_n$ has time complexity $T \geq \mathsf{T}\rbra{\mathcal{P}_n}$ and sample complexity $S$.
    
    To this end, we first figure out a subset of samples that are ``useful''. 
    Let $\sigma$ be a state through which $\mathcal{Q}_m$ is $\mathcal{G}$-invariantly embeddable in $\mathcal{P}_n$. That is, $\rho \in \mathcal{Q}_m^{X}$ if and only if $U_\pi \rbra{\rho \otimes \sigma} U_{\pi}^\dag \in \mathcal{P}_n^X$ for every $\pi \in \mathcal{G}$ and $X \in \cbra{\textup{yes}, \textup{no}}$.
    Then, on input $\rbra{\rho \otimes \sigma}^{\otimes S}$ ($S$ samples of $\rho$ and ignorable $\sigma$), the output of the tester $\mathcal{T}$ can be used to determine whether $\rho \in \mathcal{Q}_m^{\textup{yes}}$ or $\rho \in \mathcal{Q}_m^{\textup{no}}$. 
    In the following, we will show how $\mathcal{T}$ can be modified to a tester for $\mathcal{Q}_m$ with sample complexity $T/R$.

    \begin{enumerate}
        \item For each $1 \leq j \leq k$, we can divide $A_j$ into at least $R$ disjoint subsets $A_j^{\rbra{1}}, A_j^{\rbra{2}}, \dots, A_j^{\rbra{R}}$ such that $A_j^{\rbra{1}} \supseteq A_j \cap \sbra{m}$ and $\abs{A_j^{\rbra{r}}} \geq \abs{A_j^{\rbra{1}}}$ for every $1 \leq r \leq R$ (here we assume that $A_j^{\rbra{1}} \neq \emptyset$ without loss of generality). 
        \item It can be further shown that there exists an $r^*$ such that $A_1^{\rbra{r^*}}, A_2^{\rbra{r^*}}, \dots, A_k^{\rbra{r^*}}$ involve no more than $T/R$ samples of $\rho$; in other words, no more than $T/R$ samples of $\rho$ have at least one of their $n$ qubits that is numbered in $A_1^{\rbra{r^*}} \sqcup A_2^{\rbra{r^*}} \sqcup \dots \sqcup A_k^{\rbra{r^*}} \subseteq \sbra{n}$ and connected to the output qubit.\footnote{Two qubits $q_1$ and $q_2$ are connected in a quantum circuit, if (i) there is a two-qubit gate acting on them, or (ii) there is a two-qubit gate acting on $q_1$ and another qubit $q_3$ such that $q_3$ and $q_2$ are connected.} 
        To see this, for each $1 \leq r \leq R$, let $C_r \subseteq \sbra{S}$ be the set of samples of $\rho$ (out of $S$) with at least one of their qubits that is numbered in $A_1^{\rbra{r}} \sqcup A_2^{\rbra{r}} \sqcup \dots \sqcup A_k^{\rbra{r}}$ and connected to the output qubit. 
        Because of the connectivity of the quantum circuit of the quantum algorithm with time complexity $T$, we have $\sum_{r \in \sbra{R}} \abs{C_r} \leq T$. 
        Therefore, there exists an $r^*$ such that $\abs{C_{r^*}} \leq T/R$. 
        (This corresponds to Part 3 of the proof of \cref{thm:main}.)
    \end{enumerate}

    Now that we have obtained the ``useful'' set $C_{r^*}$ of samples of $\rho$, we can remove the useless samples by the following two steps.

    \begin{enumerate}
    \setcounter{enumi}{2}
        \item Let $\pi^* \in \mathcal{G}$ be the permutation that swaps the elements in $A_{j}^{\rbra{1}}$ and the elements in $A_{j}^{\rbra{r^*}}$ for all $1 \leq j \leq k$. 
        Then, as $\mathcal{Q}_m$ is $\mathcal{G}$-invariantly embeddable in $\mathcal{P}_n$ through $\sigma$, we have that for $X \in \cbra{\textup{yes}, \textup{no}}$, $U_{\pi^*} \rbra{\rho \otimes \sigma} U_{\pi^*}^\dag \in \mathcal{P}_{n}^X$ if and only if $\rho \otimes \sigma \in \mathcal{P}_n^X$. (This corresponds to Part 4 of the proof of \cref{thm:main}.) 
        \item Consider the following input state for $\mathcal{T}$:
        \begin{equation}
            \tilde\rho = \underbrace{\bigotimes_{s \in C_{r^*}} U_{\pi^*} \rbra{\rho \otimes \sigma} U_{\pi^*}^\dag}_{\textup{useful}} \otimes \underbrace{\bigotimes_{s \in \sbra{S} \setminus \sbra{C_{r^*}}} U_{\pi^*} \rbra{\ketbra{\bar0}{\bar0} \otimes \sigma} U_{\pi^*}^\dag}_{\textup{useless}}.
        \end{equation}
        It can be shown that the output of $\mathcal{T}$ on input $\tilde\rho$ obeys the same probability distribution as the output of $\mathcal{T}$ on input $U_{\pi^*}\rbra{\rho\otimes\sigma}U_{\pi^*}^\dag$. (This corresponds to Part 5 of the proof of \cref{thm:main}.)
    \end{enumerate}

    The proof is completed by noting that $T/R$ samples of $\rho$ suffice to prepare $\tilde\rho$. 
    This is simple, as $\tilde\rho$ can be obtained by performing $U_{\pi^*}^{\otimes S}$ on (re-ordered) $\rho^{\otimes \abs{C_{r^*}}} \otimes \sigma^{\otimes S} \otimes \ketbra{\bar0}{\bar0}^{\otimes \rbra{S-\abs{C_{r^*}}}}$, which only uses $\abs{C_{r^*}} \leq T/R$ samples of $\rho$. 
    Note that $\sigma$ and $\ket{\bar0}$ are independent of $\rho$. 

    \subsection{Related Work}

    The quantum time complexity for quantum query algorithms has recently been investigated in the literature \cite{CJOP20,BJY24,ABB+23}, as well as conditional lower bounds on quantum time complexity \cite{ACL+20,BPS21} related to the quantum strong exponential-time hypotheses.

    Permutation invariance is also useful for quantum state tomography \cite{TWG+10}, quantum error corrections \cite{PR04,Ouy14}, and the quantum complexity of Boolean functions \cite{AA14,Cha19,BDCG+24,GHYY25}. 
\iffalse
    Quantum time complexity for quantum query algorithms. 
    \url{https://arxiv.org/abs/2311.16401}
    \url{https://quantum-journal.org/papers/q-2024-08-23-1444/}
    \url{https://arxiv.org/abs/2005.01323}

    Conditional Quantum time lower bound. 
    \url{https://drops.dagstuhl.de/entities/document/10.4230/LIPIcs.CCC.2020.16}
    \url{https://drops.dagstuhl.de/entities/document/10.4230/LIPIcs.STACS.2021.19}

    PI code. \url{https://arxiv.org/pdf/quant-ph/0304153}

    Understanding PI. \url{https://arxiv.org/pdf/quant-ph/0301020}

    PI tomography. \url{https://arxiv.org/abs/1005.3313}

    PI Property quantum book \url{https://doi.org/10.1017/9781316848142.008}

    permunation-invariant functions. \url{https://arxiv.org/abs/2401.00454}
\fi
    
    \subsection{Discussion}

    In this paper, we proposed a method for proving lower bounds on the quantum time complexity for quantum property testing. 
    This method is especially useful for permutation-invariant properties. 
    Several tight lower bounds are obtained through this method, showing the time-optimality of a series of quantum algorithmic tools such as the SWAP test \cite{BCWdW01}, Shift test \cite{EAO+02}, multipartite productness test \cite{HM13}, LMR protocol \cite{LMR14}, and samplizer \cite{WZ24}. 
    In the following, we provide some questions for future research. 
    \begin{enumerate}
        \item Sample-optimal approaches to testing the mixedness and rank of an unknown quantum state are presented in \cite{CHW07,OW21}. Since these properties are also permutation-invariant, a meaningful question is: can we find a time-optimal approach to testing them? 
        The current approaches are based on weak Schur sampling \cite{CHW07}, and thus its time complexity has a polynomial overhead compared to its sample complexity. 
        It is also interesting to consider other property testing problems. 
        \item Time-efficiency is important in all cases of quantum computing. 
        Can we prove the time-optimality of any other existing quantum tools or can we develop new quantum tools that are time-optimal?
        \item We hope that our discovery can inspire further work on lower bounding quantum time complexity. 
        A central question is: can we extend the sample-to-time reduction in \cref{thm:full-intro} to a broader range of properties? 
        In addition, can we develop other methods for proving lower bounds on quantum time complexity?
    \end{enumerate}

    \section{Preliminaries}

    In this section, we define necessary notions for quantum state testing. 
    
    \paragraph{Basic notations.}
    We use $\sbra{n} = \cbra{1, 2, \dots, n}$. 
    In particular, $\sbra{0} = \varnothing$ is the empty set.
    Let $\Sym\rbra{A}$ denote the symmetric group over the set $A$ and we use the shorthand $\Sym\rbra{n} \coloneqq \Sym\rbra{\sbra{n}}$. 
    For two groups $\mathcal{G}$ and $\mathcal{G'}$, we denote $\mathcal{G}' \leq \mathcal{G}$ to mean that $\mathcal{G}'$ is a subgroup of $\mathcal{G}$. 

    \subsection{Properties of quantum states}

    Let $\mathcal{H}_2$ be the $2$-dimensional Hilbert space of qubits, where a qubit is described by a linear combination of $\ket{0}$ and $\ket{1}$. 
    The concept of qubits can be extended to qudits, where a qudit is in the $d$-dimensional Hilbert space $\mathcal{H}_d$ for some $d$, i.e., a linear combination of $\ket{0}$, $\ket{1}$, \dots, $\ket{d-1}$.
    Throughout this paper, all the concepts with respect to qubits can be naturally extended to qudits.
    For simplicity, we mainly consider the case of qubits.

    Let $\mathcal{D}\rbra{\mathcal{H}}$ be the set of density operators (or, equivalently, mixed quantum states) on $\mathcal{H}$, i.e., $\mathcal{D}\rbra{\mathcal{H}}$ consists of all the positive semidefinite operators $\rho$ on $\mathcal{H}$ satisfying $\tr\rbra{\rho} = 1$. 
    The trace distance and fidelity between the two mixed states $\rho, \sigma \in \mathcal{D}\rbra{\mathcal{H}}$ are, respectively, defined by 
    \begin{equation}
        \mathrm{T}\rbra{\rho, \sigma} = \frac{1}{2}\tr\rbra{\abs{\rho - \sigma}}, \qquad \mathrm{F}\rbra{\rho, \sigma} = \tr\rbra*{\sqrt{\sqrt{\sigma}\rho\sqrt{\sigma}}}.
    \end{equation}

    A property of $n$-qubit mixed quantum states (``$n$-qubit property'' for short), denoted as $\mathcal{P}_n = \rbra{\mathcal{P}_n^{\textup{yes}}, \mathcal{P}_n^{\textup{no}}} \subseteq \mathcal{D}\rbra{\mathcal{H}_2^{\otimes n}} \times \mathcal{D}\rbra{\mathcal{H}_2^{\otimes n}}$, is a pair of disjoint sets of $n$-qubit mixed quantum states.
    The definition of permutation-invariant properties is given as follows. 

    \begin{definition} [Permutation-invariant properties] \label{def:perm-inv}
        An $n$-qubit property $\mathcal{P}_n$ is said to be permutation-invariant, if for every permutation $\pi \in \Sym\rbra{n}$ and $X \in \cbra{\textup{yes}, \textup{no}}$, $U_{\pi} \rho U_{\pi}^\dag \in \mathcal{P}_n^X$ if and only if $\rho \in \mathcal{P}_n^X$, where 
        \begin{equation} \label{eq:def-U-pi}
        U_\pi \colon \ket{\psi_1}\ket{\psi_2}\dots\ket{\psi_n} \mapsto \ket{\psi_{\pi\rbra{1}}}\ket{\psi_{\pi\rbra{2}}}\dots\ket{\psi_{\pi\rbra{n}}}
        \end{equation}
        for every $\ket{\psi_1}, \ket{\psi_2}, \dots, \ket{\psi_n} \in \mathcal{H}_2$.
    \end{definition}

    We call a property $\mathcal{Q}$ embeddable in another property $\mathcal{P}$, if $\mathcal{Q}$ can be considered as a special case of $\mathcal{P}$ so that $\mathcal{P}$ is harder than $\mathcal{Q}$ by reduction.
    Formally, we have the following definition. 

    \begin{definition} [Embeddability] \label{def:embed}
        We say that an $m$-qubit property $\mathcal{Q}_m$ is embeddable in another $n$-qubit property $\mathcal{P}_n$ with $m < n$, denoted as $\mathcal{Q}_m \hookrightarrow \mathcal{P}_n$, if there is an $\rbra{n-m}$-qubit mixed state $\sigma$ such that for every $\rho \in \mathcal{Q}_m^X$ and $X \in \cbra{\textup{yes}, \textup{no}}$, we have $\rho \otimes \sigma \in \mathcal{P}_n^X$.
        When the state $\sigma$ should be made clear from the context, we write $\mathcal{Q}_m \xhookrightarrow[]{\sigma} \mathcal{P}_n$. 
    \end{definition}

    To extract the necessary conditions that are enough for our results, we characterize a weaker type of embeddability in terms of permutation groups. 

    \begin{definition} [Group-invariant embeddability] \label{def:inv-embed}
        Let $\mathcal{P}_n$ and $\mathcal{Q}_m$ be $n$- and $m$-qubit properties, respectively, with $m < n$. 
        Let $\mathcal{G} \leq \Sym\rbra{n}$ be a permutation group. 
        For an $\rbra{n-m}$-qubit mixed quantum state $\sigma$, $\mathcal{Q}_m$ is said to be $\mathcal{G}$-invariantly embeddable in $\mathcal{P}_n$ through $\sigma$, 
        denoted as $\mathcal{Q}_m \xhookrightarrow[\mathcal{G}]{\sigma} \mathcal{P}_n$, if $U_\pi \rbra{\rho \otimes \sigma} U_\pi^\dag \in \mathcal{P}_n^X$ for every $\rho \in \mathcal{Q}_m^X$, $\pi \in \mathcal{G}$, and $X \in \cbra{\textup{yes}, \textup{no}}$, where $U_\pi$ is defined by \cref{eq:def-U-pi}. 
    \end{definition}

    Embeddability with permutation invariance can be seen as a special case of group-invariant embeddability, shown as follows. 

    \begin{fact} [Embeddability with permutation invariance]
        If $\mathcal{Q}_m \xhookrightarrow[]{\sigma} \mathcal{P}_n$ and $\mathcal{P}_n$ is permutation-invariant, then $\mathcal{Q}_m \xhookrightarrow[\Sym\rbra{n}]{\sigma} \mathcal{P}_n$.
    \end{fact}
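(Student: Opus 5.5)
The plan is to unfold the two definitions and check the condition of \cref{def:inv-embed} directly. Fix $\rho \in \mathcal{Q}_m^X$ for some $X \in \cbra{\textup{yes}, \textup{no}}$, and fix an arbitrary $\pi \in \Sym\rbra{n}$. We must show that $U_\pi \rbra{\rho \otimes \sigma} U_\pi^\dag \in \mathcal{P}_n^X$.

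First, the hypothesis $\mathcal{Q}_m \xhookrightarrow[]{\sigma} \mathcal{P}_n$ from \cref{def:embed} gives $\rho \otimes \sigma \in \mathcal{P}_n^X$. Second, we apply the permutation invariance of $\mathcal{P}_n$ from \cref{def:perm-inv} to the state $\rho \otimes \sigma$ and the permutation $\pi$. Only the ``if'' direction is needed: since $\rho\otimes\sigma \in \mathcal{P}_n^X$, it follows that $U_\pi\rbra{\rho\otimes\sigma}U_\pi^\dag \in \mathcal{P}_n^X$.

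Since $\rho$, $X$, and $\pi$ were arbitrary, and $\Sym\rbra{n}$ is trivially a subgroup of itself, the condition of \cref{def:inv-embed} holds with $\mathcal{G} = \Sym\rbra{n}$ and the same embedding state $\sigma$. The requirement $m<n$ is shared by both definitions, so it carries over.

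There is no real obstacle here. The only point to be careful about is that \cref{def:perm-inv} quantifies over every $\pi \in \Sym\rbra{n}$ with the same operator $U_\pi$ from \cref{eq:def-U-pi} that appears in \cref{def:inv-embed}. The two definitions therefore match verbatim, and no inverse or relabelling issue arises.
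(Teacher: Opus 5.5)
Your proposal is correct: embeddability gives $\rho\otimes\sigma\in\mathcal{P}_n^X$, and the forward direction of permutation invariance then gives $U_\pi(\rho\otimes\sigma)U_\pi^\dag\in\mathcal{P}_n^X$ for every $\pi\in\Sym(n)$. The paper states this fact without proof because it follows immediately from the definitions, and your direct unfolding is exactly that argument.
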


    \subsection{Testers} \label{sec:tester}

    Suppose that $\mathcal{T} = \rbra{\mathcal{T}_1, \mathcal{T}_2, \dots, \mathcal{T}_n, \dots}$ is a family of (non-uniform) testers, where $\mathcal{T}_n$ is a tester for $n$-qubit states described by a quantum unitary circuit acting on $\rbra{\mathcal{H}_2^{\otimes n}}^{\otimes S\rbra{n}} \otimes \mathcal{H}_2^{\otimes \ell\rbra{n}}$ for some functions $S, \ell \colon \mathbb{N} \to \mathbb{N}$.
    Specifically, $\mathcal{T}_n$ has the form
    \begin{equation}
        \mathcal{T}_n = U_{T\rbra{n}-1} \cdot \cdots \cdot U_2 \cdot U_1, 
    \end{equation}
    where $U_t$ for $1 \leq t \leq T\rbra{n}-1$ is a two-qubit unitary gate for some function $T \colon \mathbb{N} \to \mathbb{N}$.\footnote{The reason that there are $\rbra{T\rbra{n}-1}$ two-qubit gates but not $T\rbra{n}$ is that the last operation of the tester is fixed to perform a quantum measurement on the first qubit in the computational basis, which theoretically should be considered to take one unit of time. With this definition, our results can be written in a graceful form.}
    We call $\mathcal{T}$ a family of testers with sample complexity $S$, time complexity $T$, and auxiliary space complexity $\ell$. 
    For every $\rho \in \mathcal{D}\rbra{\mathcal{H}_2^{\otimes n}}$, the probability that $\mathcal{T}_n$ accepts $\rho$ is defined by 
    \begin{equation} \label{eq:def-T-prob}
        \Pr\sbra{\mathcal{T}_n \text{ accepts } \rho} = \tr\rbra*{ \Pi \mathcal{T}_n \rbra*{ \rho^{\otimes S\rbra{n}} \otimes \ketbra{0}{0}^{\otimes \ell\rbra{n}} } \mathcal{T}_n^\dag }, 
    \end{equation}
    where $\Pi = \ketbra{0}{0} \otimes I_2^{\otimes \rbra{n S\rbra{n} + \ell\rbra{n} - 1}}$ is the projector onto the subspace of $\mathcal{H}_2^{\otimes \rbra{nS\rbra{n} + \ell\rbra{n}}}$ with the first qubit being $\ket{0}$ and $I_2$ is the identity operator on $\mathcal{H}_2$. 

    Let $\mathcal{P} = \rbra{\mathcal{P}_1, \mathcal{P}_2, \dots, \mathcal{P}_n, \dots}$ be a family of properties of mixed quantum states. 
    Tester $\mathcal{T}_n$ is said to be a tester for $\mathcal{P}_n$, if for every $\rho^\textup{yes} \in \mathcal{P}_n^{\textup{yes}}$, $\Pr\sbra{\mathcal{T}_n \text{ accepts } \rho^{\textup{yes}}} \geq 2/3$, and for every $\rho^\textup{no} \in \mathcal{P}_n^{\textup{no}}$, $\Pr\sbra{\mathcal{T}_n \text{ accepts } \rho^{\textup{no}}} \leq 1/3$. 
    The sample complexity of $\mathcal{P}_n$, denoted as $\mathsf{S}\rbra{\mathcal{P}_n}$, is the minimum sample complexity of $\mathcal{T}_n$ over all testers $\mathcal{T}_n$ for $\mathcal{P}_n$. 
    The time complexity of $\mathcal{P}_n$, denoted as $\mathsf{T}\rbra{\mathcal{P}_n}$, is the minimum time complexity of $\mathcal{T}_n$ over all testers $\mathcal{T}_n$ for $\mathcal{P}_n$. 

    \section{Sample-to-Time Reduction for Permutation-Invariant Properties}

    In this section, we state our main theorem as follows. 

    \begin{theorem}[Sample-to-time reduction for partially permutation-invariant embeddability] \label{thm:main}
        Let $\mathcal{P}_n$ and $\mathcal{Q}_m$ be $n$- and $m$-qubit properties, respectively, with $1 \leq m < n$.
        Let $\mathcal{G} = \Sym\rbra{A_{1}} \times \Sym\rbra{A_{2}} \times \dots \times \Sym\rbra{A_{k}} \leq \Sym\rbra{n}$ be a permutation group, where $A_1, A_2, \dots, A_k$ form a partition of $\sbra{n}$. 
        If $\mathcal{Q}_m \xhookrightarrow[\mathcal{G}]{\sigma} \mathcal{P}_n$, then
        \begin{equation}
            \mathsf{T}\rbra{\mathcal{P}_n} \geq R \cdot \mathsf{S}\rbra{\mathcal{Q}_m},
        \end{equation}
        where
        \begin{equation}
            R = \min_{j \in \sbra{k} \colon A_j \cap \sbra{m} \neq \emptyset} \floor*{\frac{\abs{A_j}}{\abs*{A_j \cap \sbra{m}}}}.
        \end{equation}
    \end{theorem}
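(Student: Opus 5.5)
We prove the contrapositive-style statement: starting from a tester $\mathcal{T}_n$ for $\mathcal{P}_n$ with time complexity $T=\mathsf{T}\rbra{\mathcal{P}_n}$ and sample complexity $S$, we construct a tester for $\mathcal{Q}_m$ that uses at most $\floor{T/R}$ samples. This gives $\mathsf{S}\rbra{\mathcal{Q}_m}\le T/R$. The construction uses the light cone of the output measurement.

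\textbf{Step 1: partition the blocks.} For each $j$ with $A_j\cap\sbra{m}\neq\emptyset$, set $a_j=\abs{A_j\cap\sbra{m}}$. Since $\abs{A_j}\ge R a_j$, we can choose $R$ pairwise disjoint subsets $A_j^{(1)},\dots,A_j^{(R)}\subseteq A_j$, each of size exactly $a_j$, with $A_j^{(1)}=A_j\cap\sbra{m}$. For blocks with $A_j\cap\sbra{m}=\emptyset$, set all $A_j^{(r)}=\emptyset$. Define $B_r=\bigsqcup_j A_j^{(r)}$. Then $B_1=\sbra{m}$, the sets $B_r$ are pairwise disjoint, and for each $r$ there is a permutation $\pi_r\in\mathcal{G}$ that swaps $A_j^{(1)}$ with $A_j^{(r)}$ inside each $A_j$. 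This permutation lies in $\mathcal{G}$ because it preserves every $A_j$, and it is an involution.

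\textbf{Step 2: counting (the key step).} Look at the circuit $U_{T-1}\cdots U_1$ and trace the backward light cone of the first (measured) qubit. A gate is in the light cone if it lies on some causal path to the final measurement. Say an input wire is \emph{live} if it is touched by a light-cone gate, or if it is the measured qubit itself. Each light-cone gate touches at most two input wires for the first time. Moreover, the light cone restricted to the live input wires forms a connected structure of at most $T-1$ gates. A tree argument therefore shows that the number of live input wires is at most $(T-1)+1=T$.

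For each $r$, let $C_r\subseteq\sbra{S}$ be the set of sample indices $s$ such that some qubit of the $s$-th copy with position in $B_r$ is live. Because the $B_r$ are disjoint, every live wire is counted in at most one $C_r$. Hence $\sum_r\abs{C_r}\le T$, and there is an $r^*$ with $\abs{C_{r^*}}\le T/R$. The main obstacle is handling this bookkeeping precisely. In particular, the bound "live wires $\le T$" must be checked carefully, including the $-1$ convention for the measurement and gates that act on two fresh wires at once.

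\textbf{Step 3: substitution and correctness.} For each copy $s$ we would like to feed $\rho_s'=U_{\pi_{r^*}}(\rho\otimes\sigma)U_{\pi_{r^*}}^\dag$. This state belongs to $\mathcal{P}_n^X$ whenever $\rho\in\mathcal{Q}_m^X$, by $\mathcal{G}$-invariant embeddability. In it, $\rho$ occupies exactly the positions $B_{r^*}$. For $s\notin C_{r^*}$ we instead feed $U_{\pi_{r^*}}(\ketbra{\bar0}{\bar0}\otimes\sigma)U_{\pi_{r^*}}^\dag$.

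The acceptance probability depends only on the reduced state of the live input wires. The swapped copies differ only on the dead wires at positions $B_{r^*}$ of copies $s\notin C_{r^*}$. To make this rigorous, note that gates outside the light cone commute past the final measurement and can be dropped. The remaining circuit acts trivially on dead wires, so we can trace them out. Replacing $\rho$ by $\ketbra{\bar0}{\bar0}$ on these wires changes only the dead part of the state. Here we use that $\rho\otimes\sigma$ is a product state across the $B_{r^*}$ versus complement cut after the permutation, so the marginal on the live wires is unchanged.

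Consequently, the acceptance probability on the modified input equals the acceptance probability on $(\rho_s')^{\otimes S}$. That quantity is $\ge 2/3$ or $\le 1/3$ according to whether $\rho$ is a yes or no instance. The modified input can be prepared from $\rho^{\otimes\abs{C_{r^*}}}$, with $\sigma$, $\ket{\bar0}$ and the permutation fixed in advance. The result is a tester for $\mathcal{Q}_m$ with at most $T/R$ samples, and since sample counts are integers, $\mathsf{S}\rbra{\mathcal{Q}_m}\le\floor{T/R}$, which gives $T\ge R\cdot\mathsf{S}\rbra{\mathcal{Q}_m}$.
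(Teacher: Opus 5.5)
Your proposal is correct and follows essentially the same route as the paper. The steps match: disjoint blocks $A_j^{(r)}$ with $A_j^{(1)} = A_j \cap \sbra{m}$; a pigeonhole bound $\sum_r \abs{C_r} \leq T$ on the copies whose block-$r$ qubits can influence the measured qubit; the involutive swap $\pi \in \mathcal{G}$ moving $\rho$ onto $B_{r^*}$; replacing $\rho$ by $\ketbra{0}{0}^{\otimes m}$ on the remaining copies; and tracing out the wires on which $\mathcal{T}^\dag \Pi \mathcal{T}$ acts trivially. The only cosmetic difference is that you use the time-ordered backward light cone, whereas the paper uses the (larger) connected component of qubit $1$ in the undirected gate graph, which makes the factorization $\mathcal{T} = U_K \otimes V_{\overline{K}}$ immediate; both sets contain at most $T$ wires by the same tree argument.
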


    As a special case of \cref{thm:main}, we have a simple form for permutation-invariant properties. 

    \begin{theorem} [Sample-to-time reduction for permutation-invariant properties] \label{thm:main-sp}
        Let $\mathcal{P}_n$ be an $n$-qubit permutation-invariant property and $\mathcal{Q}_1$ be a $1$-qubit property. 
        If $\mathcal{Q}_1 \hookrightarrow \mathcal{P}_n$, then 
        \begin{equation}
            \mathsf{T}\rbra{\mathcal{P}_n} \geq n \cdot \mathsf{S}\rbra{\mathcal{Q}_1}. 
        \end{equation}
    \end{theorem}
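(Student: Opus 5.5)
We derive \cref{thm:main-sp} directly from \cref{thm:main}, taking $m = 1$, $k = 1$ and $A_1 = \sbra{n}$, so that $\mathcal{G} = \Sym\rbra{n}$. First we check the hypotheses of \cref{thm:main}. We need $1 \leq m < n$; the embedding $\mathcal{Q}_1 \hookrightarrow \mathcal{P}_n$ of \cref{def:embed} already requires $1 < n$. Next, we need $\mathcal{Q}_1 \xhookrightarrow[\Sym\rbra{n}]{\sigma} \mathcal{P}_n$ for some $(n-1)$-qubit state $\sigma$. Let $\sigma$ be the state witnessing $\mathcal{Q}_1 \xhookrightarrow{\sigma} \mathcal{P}_n$. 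Then for every $\rho \in \mathcal{Q}_1^X$ we have $\rho\otimes\sigma \in \mathcal{P}_n^X$. Since $\mathcal{P}_n$ is permutation-invariant (\cref{def:perm-inv}), it follows that $U_\pi(\rho\otimes\sigma)U_\pi^\dag \in \mathcal{P}_n^X$ for every $\pi \in \Sym\rbra{n}$. This is exactly the Fact on embeddability with permutation invariance, so the group-invariant embeddability of \cref{def:inv-embed} holds with $\mathcal{G} = \Sym\rbra{n}$.

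Next we compute $R$. With $k = 1$, the only index is $j = 1$, and $A_1 \cap \sbra{1} = \cbra{1} \neq \emptyset$. Hence
\[
R = \floor*{\frac{\abs{\sbra{n}}}{\abs{\cbra{1}}}} = n.
\]
Substituting into \cref{thm:main} gives $\mathsf{T}\rbra{\mathcal{P}_n} \geq n \cdot \mathsf{S}\rbra{\mathcal{Q}_1}$, which is the claim.

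None of these steps is a real obstacle, since all the substantive work is in \cref{thm:main}. The only care needed is to confirm that the partition $\cbra{\sbra{n}}$ with $k=1$ is an allowed instance of the group form in \cref{thm:main}, and that its minimum over $j$ is taken over a nonempty index set. Both hold because qubit $1$ lies in $A_1$.
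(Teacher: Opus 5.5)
Your proposal is correct and follows the paper's proof: it instantiates \cref{thm:main} with $m = k = 1$ and $A_1 = \sbra{n}$, obtaining $R = n$. Your explicit check that permutation invariance upgrades the plain embedding to $\Sym\rbra{n}$-invariant embeddability is the paper's Fact on embeddability with permutation invariance, which the paper's one-line proof uses implicitly.
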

    \begin{proof}
        This is immediately obtained by \cref{thm:main} with $m = k = 1$ and $A_1 = \sbra{n}$. 
        Then, by simple calculations, we have $R = n$. 
    \end{proof}

    \begin{remark} [Necessity of permutation invariance]
        Some readers may wonder if the permutation-invariant condition is necessary for the ratio $R$ (of time complexity to sample complexity) to be at least $n$, the number of qubits, as in \cref{thm:pi-intro}. As shown in \cref{fn:tight-example}, the ratio can be as low as a constant in the presence of useless qubits. 
        Actually, even if there is no useless qubit, the ratio can be as low as $\polylog\rbra{n}$ (see \cref{sec:low-ratio}). 
        Here, useless qubits mean those that are not involved in the definition of the property of quantum states.
    \end{remark}

    Note that \cref{thm:main} (and thus \cref{thm:main-sp}) can also hold for the qudit case by the same arguments.
    For completeness, we present the qudit version of \cref{thm:main,thm:main-sp} as follows.

    \begin{theorem}[Qudit version of \cref{thm:main}] \label{thm:main-qudit}
        Let $\mathcal{P}_n$ and $\mathcal{Q}_m$ be $n$- and $m$-qudit properties, respectively, with $1 \leq m < n$.
        Let $\mathcal{G} = \Sym\rbra{A_{1}} \times \Sym\rbra{A_{2}} \times \dots \times \Sym\rbra{A_{k}} \leq \Sym\rbra{n}$ be a permutation group, where $A_1, A_2, \dots, A_k$ form a partition of $\sbra{n}$. 
        If $\mathcal{Q}_m \xhookrightarrow[\mathcal{G}]{\sigma} \mathcal{P}_n$, then
        \begin{equation}
            \mathsf{T}\rbra{\mathcal{P}_n} \geq R \cdot \mathsf{S}\rbra{\mathcal{Q}_m},
        \end{equation}
        where
        \begin{equation}
            R = \min_{j \in \sbra{k} \colon A_j \cap \sbra{m} \neq \emptyset} \floor*{\frac{\abs{A_j}}{\abs*{A_j \cap \sbra{m}}}}.
        \end{equation}
    \end{theorem}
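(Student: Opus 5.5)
The plan is to repeat the argument for \cref{thm:main} with qubits replaced by qudits throughout; the proof never uses the local dimension $2$ except through the notion of an elementary gate. Fix a tester $\mathcal{T}_n$ for $\mathcal{P}_n$ with time complexity $T = \mathsf{T}\rbra{\mathcal{P}_n}$ and sample complexity $S$. Here $\mathcal{T}_n$ is a circuit of $T-1$ two-qudit gates followed by a computational-basis measurement of the first qudit, which we take to accept on outcome $0$. We then build a tester for $\mathcal{Q}_m$ that uses at most $T/R$ samples of $\rho$. Throughout, $U_\pi$ is the qudit-permutation operator.

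\textbf{Partition step.} For each $j$ with $A_j \cap \sbra{m} \neq \emptyset$, split $A_j$ into $R$ disjoint blocks $A_j^{(1)}, \dots, A_j^{(R)}$. Set $A_j^{(1)} = A_j \cap \sbra{m}$, and make every other block have the same size $\abs{A_j \cap \sbra{m}}$. This is possible because $R \le \floor{\abs{A_j}/\abs{A_j\cap\sbra{m}}}$. Blocks $A_j^{(r)}$ for $j$ with $A_j \cap \sbra{m} = \emptyset$ are taken empty. Let $B_r = \bigsqcup_j A_j^{(r)}$, so $B_1 = \sbra{m}$ and $B_1, \dots, B_R$ are pairwise disjoint.

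\textbf{Counting step (light cone).} Consider the backward light cone of the output qudit, i.e., the input wires connected to it. Let $C_r \subseteq \sbra{S}$ be the set of samples having some qudit indexed in $B_r$ inside this light cone. The light cone has at most $T$ input wires: a connected circuit of $T-1$ two-qudit gates touches at most $T$ wires, and the output wire is counted once. Each such wire belongs to exactly one pair (sample, block), because the blocks are disjoint. Hence $\sum_r \abs{C_r} \le T$, and some $r^*$ satisfies $\abs{C_{r^*}} \le T/R$. Getting this counting right, in particular $T$ rather than $2T$ and the role of the final measurement, is the step I expect to need the most care. The footnote convention of $T-1$ gates plus one unit for the measurement is exactly what makes the bound land at $T$.

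\textbf{Permutation and simulation step.} Let $\pi^* \in \mathcal{G}$ be the product over $j$ of bijections swapping $A_j^{(1)}$ with $A_j^{(r^*)}$ elementwise; this lies in $\mathcal{G}$ since each swap stays inside $A_j$. By $\mathcal{G}$-invariant embeddability, $U_{\pi^*}(\rho\otimes\sigma)U_{\pi^*}^\dag \in \mathcal{P}_n^X$ whenever $\rho \in \mathcal{Q}_m^X$. So $\mathcal{T}_n$ run on $S$ copies of this state decides $\mathcal{Q}_m$. Next, replace $\rho$ by the fixed state $\ketbra{\bar 0}{\bar 0}$ in every sample $s \notin C_{r^*}$. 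In such a sample, $\rho$ sits (after $\pi^*$) exactly on the positions $B_{r^*}$, and none of those wires lies in the light cone. The reduced state of the light-cone wires is therefore unchanged. It remains unchanged even though $\rho\otimes\sigma$ may be correlated across positions within a sample, since $\rho\otimes\sigma$ is a product across the $B_{r^*}$ and non-$B_{r^*}$ parts. The acceptance probability depends only on the marginal on the light cone, because gates outside it commute past the measurement. So the acceptance probability is unchanged. The modified input is prepared from $\abs{C_{r^*}} \le T/R$ copies of $\rho$ together with copies of $\sigma$ and $\ket{\bar0}$ (all independent of $\rho$) by applying $U_{\pi^*}^{\otimes S}$. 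Thus $\mathsf{S}\rbra{\mathcal{Q}_m} \le T/R$, which gives the claim.
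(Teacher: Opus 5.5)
Your proposal is correct and matches the paper's proof essentially step for step. The paper proves the qudit version by repeating the argument for \cref{thm:main} verbatim: blocks $A_j^{(r)}$ with $A_j^{(1)} = A_j \cap \sbra{m}$; pigeonhole over the connected component of the output wire, which has at most $T$ vertices because there are $T-1$ edges; the involution $\pi \in \mathcal{G}$ swapping $A_j^{(1)}$ with $A_j^{(r^*)}$; and the factorization $\mathcal{T}^\dag \Pi \mathcal{T} = P \otimes I_{\mathsf{X}^*}$, which justifies replacing $\rho$ by $\ket{0}^{\otimes m}$ on the useless samples. The two routine details you leave implicit, which the paper spells out, are preparing $\sigma$ from all-zero ancillas via a purifying unitary $\mathcal{O}_\sigma$ on extra registers $\mathsf{Z}_s$, and relabeling samples with the permutation $V_\tau$ so that the $c_{r^*}$ genuine copies of $\rho$ come first.
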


    \begin{theorem} [Qudit version of \cref{thm:main-sp}] \label{thm:main-sp-qudit}
        Let $\mathcal{P}_n$ be an $n$-qudit permutation-invariant property and $\mathcal{Q}_1$ be a $1$-qudit property. 
        If $\mathcal{Q}_1 \hookrightarrow \mathcal{P}_n$, then 
        \begin{equation}
            \mathsf{T}\rbra{\mathcal{P}_n} \geq n \cdot \mathsf{S}\rbra{\mathcal{Q}_1}. 
        \end{equation}
    \end{theorem}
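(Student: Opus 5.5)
The statement to prove is \cref{thm:main-sp-qudit}. I will derive it from \cref{thm:main-qudit}, in the same way \cref{thm:main-sp} follows from \cref{thm:main}. So the plan has two parts: first make sure \cref{thm:main-qudit} holds, then specialize.

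For the first part, I will reuse the proof of \cref{thm:main} and replace qubits by qudits throughout. The argument only uses three ingredients. First, elementary gates act on two registers, so each gate touches at most two qudits. Second, the set of input qudits connected to the output register (its light cone) has total size bounded by the time complexity $T$. Third, the qudit-permutation operators $U_\pi$, defined by the same formula as \cref{eq:def-U-pi} but over $\mathcal{H}_d$, are unitaries on $\mathcal{H}_d^{\otimes n}$ that can be applied sample-wise.

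The steps then go through unchanged:
\begin{enumerate}
\item Partition each $A_j$ into $R$ blocks $A_j^{(1)},\dots,A_j^{(R)}$ with $A_j^{(1)} \supseteq A_j\cap[m]$ and every block at least as large as $A_j^{(1)}$.
\item Let $C_r$ be the set of samples having a qudit from the $r$-th blocks inside the light cone of the measured register. Because the light cone has size at most $T$ and the blocks are disjoint, $\sum_r |C_r| \le T$, so some $r^*$ has $|C_{r^*}| \le T/R$.
\item Let $\pi^*\in\mathcal{G}$ swap the $1$st and $r^*$-th blocks within each $A_j$.
\item Replace every sample outside $C_{r^*}$ by $U_{\pi^*}(\ketbra{\bar0}{\bar0}\otimes\sigma)U_{\pi^*}^\dag$. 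Light-cone locality ensures the acceptance probability does not change.
\end{enumerate}
The new input costs only $|C_{r^*}| \le T/R$ samples of $\rho$, and $\mathcal{G}$-invariant embeddability keeps the yes/no answer correct. Hence $\mathsf{S}(\mathcal{Q}_m)\le T/R$.

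For the specialization, suppose $\mathcal{Q}_1\xhookrightarrow{\sigma}\mathcal{P}_n$ and $\mathcal{P}_n$ is permutation-invariant. The qudit analogue of the Fact after \cref{def:inv-embed} gives $\mathcal{Q}_1\xhookrightarrow[\Sym(n)]{\sigma}\mathcal{P}_n$: for $\rho\in\mathcal{Q}_1^X$ we have $\rho\otimes\sigma\in\mathcal{P}_n^X$, and permutation invariance then gives $U_\pi(\rho\otimes\sigma)U_\pi^\dag\in\mathcal{P}_n^X$ for every $\pi$. Now apply \cref{thm:main-qudit} with $m=k=1$ and $A_1=[n]$. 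The only index $j$ in the minimum is $j=1$, with $|A_1\cap[1]|=1$, so $R=\lfloor n/1\rfloor=n$. This yields $\mathsf{T}(\mathcal{P}_n)\ge n\cdot\mathsf{S}(\mathcal{Q}_1)$. The hypothesis $m<n$ is automatic, since $\sigma$ is an $(n-1)$-qudit state, which requires $n\ge 2$.

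The only real obstacle is confirming that the light-cone step survives the move to dimension $d$. In particular, the bound $\sum_r|C_r|\le T$ (with the final measurement counted as one unit of time) must be justified by a counting over gates that does not depend on the local dimension. I expect this to hold because each two-qudit gate adds at most one new qudit to the connected component of the output, just as in the qubit case.
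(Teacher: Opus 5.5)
Your proposal is correct and matches the paper's route. The paper also states that the proof of \cref{thm:main} carries over verbatim to qudits; the ingredients are the connected-component bound $\sum_r c_r \le T$, the block swap $\pi \in \mathcal{G}$, replacing the $\rho$-part of samples outside $C_{r^*}$ by $\ketbra{0}{0}^{\otimes m}$, and re-preparing $\sigma$ from a purification. It then specializes to $m=k=1$, $A_1=[n]$, which gives $R=n$, and you have correctly identified that the counting step depends only on the number of two-register gates, not on the local dimension $d$.
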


    The remainder of this section is the proof of \cref{thm:main}.

    \subsection{Proof of Theorem \ref{thm:main}}

    The proof of \cref{thm:main} consists of 8 steps. 

    \paragraph{Part 1: Basic set-up.}
    Let $\mathcal{P}_n = \rbra{\mathcal{P}_n^{\textup{yes}}, \mathcal{P}_n^{\textup{no}}}$. 
    Suppose that $\mathcal{T}$ is a tester for $\mathcal{P}_n$ with sample complexity $S$, time complexity $T = \mathsf{T}\rbra{\mathcal{P}_n}$, and auxiliary space complexity $\ell$. 
    That is, for every $\rho \in \mathcal{D}\rbra{\mathcal{H}_2^{\otimes n}}$, the probability that $\mathcal{T}$ accepts $\rho$ is defined by 
    \begin{equation} \label{def:Tester}
        \Pr\sbra{\mathcal{T} \text{ accepts } \rho} = \tr\rbra*{ \Pi \mathcal{T} \rbra*{ \rho^{\otimes S} \otimes \ketbra{0}{0}^{\otimes \ell} } \mathcal{T}^\dag }, 
    \end{equation}
    where $\Pi = \ketbra{0}{0} \otimes I_2^{\otimes \rbra{n S + \ell - 1}}$ is the projector onto the subspace of $\mathcal{H}_2^{\otimes \rbra{nS + \ell}}$ with the first qubit being $\ket{0}$ and $I_2$ is the identity operator on $\mathcal{H}_2$. 
    We number the qubits from $1$ to $nS + \ell$ as follows:
    \begin{itemize}
        \item The $x$-th qubit of the $s$-th sample of the input state $\rho$ is called qubit $\rbra{s - 1} n + x$ for $s \in \sbra{S}$ and $x \in \sbra{n}$. 
        \item The $j$-th ancilla qubit is called qubit $nS + j$ for $j \in \sbra{\ell}$.
    \end{itemize}
    Specifically, $\mathcal{T}$ can be described as a quantum circuit acting on $nS+\ell$ qubits and consisting of $T - 1$ two-qubit unitary gates:
    \begin{equation}
        \mathcal{T} = U_{T-1} \cdot \cdots \cdot U_2 \cdot U_1,
    \end{equation}
    where $U_t$ acts on the $x_t$-th and $y_t$-th qubits with $1 \leq x_t < y_t \leq nS+\ell$ for $t \in \sbra{T-1}$. 

    \paragraph{Part 2: Embedding.}
    Since $\mathcal{Q}_m \xhookrightarrow[\mathcal{G}]{\sigma} \mathcal{P}_n$ with $\sigma$ an $\rbra{n-m}$-qubit state, we have $\rho \otimes \sigma \in \mathcal{P}_n^{X}$ for every $\rho \in \mathcal{Q}_m^X$ and $X \in \cbra{\textup{yes}, \textup{no}}$. 
    By the definition, the tester $\mathcal{T}$ accepts $\rho \otimes \sigma$ with probability 
    \begin{equation} \label{eq:T-prob}
        \Pr\sbra{\mathcal{T} \textup{ accepts } \rho \otimes \sigma} = \tr\rbra*{\Pi \mathcal{T} \rbra*{\rbra{\rho \otimes \sigma}^{\otimes S} \otimes \ketbra{0}{0}^{\otimes \ell}} \mathcal{T}^\dag}. 
    \end{equation}
    Moreover,
    \begin{itemize}
        \item $\Pr\sbra{\mathcal{T} \textup{ accepts } \rho \otimes \sigma} \geq 2/3$ if $\rho \in \mathcal{Q}_m^{\textup{yes}}$, 
        \item $\Pr\sbra{\mathcal{T} \textup{ accepts } \rho \otimes \sigma} \leq 1/3$ if $\rho \in \mathcal{Q}_m^{\textup{no}}$.
    \end{itemize}

    For clarification, we divide qubits into groups:
    \begin{itemize}
        \item $\mathsf{X}_s = \set{\rbra{s-1}n+x}{x \in \sbra{n}}$ for $s \in \sbra{S}$. In particular, we write $\mathsf{X}_{s, x} = \cbra{\rbra{s-1}n+x}$ to denote qubit $\rbra{s-1}n+x$ for $s \in \sbra{S}$ and $x \in \sbra{n}$. 
        \item $\mathsf{W} = \set{nS+j}{j \in \sbra{\ell}}$.
    \end{itemize}
    Then, \cref{eq:T-prob} can be written as
    \begin{equation}
        \Pr\sbra{\mathcal{T} \textup{ accepts } \rho \otimes \sigma} = \tr\rbra*{\Pi_\mathsf{XW} \mathcal{T}_\mathsf{XW} \rbra*{ \bigotimes_{s \in \sbra{S}} \rbra*{\rho \otimes \sigma}_{\mathsf{X}_s} \otimes \ketbra{0}{0}^{\otimes \ell}_{\mathsf{W}}} \mathcal{T}_\mathsf{XW}^\dag}, 
    \end{equation}
    where
    \begin{equation} \label{eq:def-Pi-XYW}
        \Pi_{\mathsf{XW}} = \ketbra{0}{0}_{\mathsf{X}_{1,1}} \otimes I_{\mathsf{X}_{1, 2} \dots \mathsf{X}_{1, n}} \otimes I_{\mathsf{X}_2 \dots \mathsf{X}_S\mathsf{W}}. 
    \end{equation}
    Here, the subscripts denote the registers on which an operator acts, and sometimes the subscripts can be ignored if they are clear from the context. 

    \paragraph{Part 3: Connectivity.}
    The connectivity between qubits can be described by an undirected graph $G = \rbra{V, E}$ (possibly with duplicated edges), where
    \begin{equation} \label{eq:def-G}
        V = \sbra{nS+\ell}, \qquad E = \set{\cbra{x_t, y_t}}{t \in \sbra{T-1}}.
    \end{equation}

    We list the elements in each $A_j$ for $j \in \sbra{k}$ in ascending order:
    \begin{equation}
        A_j = \set{a_{j, p}}{p \in \sbra{\abs{A_j}}} \textup{ with } a_{j, 1} < a_{j, 2} < \dots < a_{j, \abs{A_j}}. 
    \end{equation}
    Let
    \begin{equation}
        R = \min_{j \in \sbra{k}} \floor*{\frac{\abs{A_{j}}}{m_{j}}}, \textup{ where } m_{j} = \abs*{A_{j} \cap \sbra{m}}.
    \end{equation}
    For each $j \in \sbra{k}$, we select $R$ disjoint subsets of qubits from $A_{j}$, with the $r$-th subset chosen by 
    \begin{equation}
        A_{j}^{\rbra{r}} = \set{a_{j, p}}{\rbra{r-1}m_j < p \leq r m_j}.
    \end{equation}
    Let $c_r = \abs{C_r}$, where
    \begin{equation}
        C_r = \set{s \in \sbra{S}}{\textup{vertices $\rbra{s-1}n+x$ and $1$ are connected in $G$ for some $x \in A_{j}^{\rbra{r}}$ and $j \in \sbra{k}$}}.
    \end{equation}
    Here, two vertices $u$ and $v$ are connected, if there is a sequence $u_0, u_1, \dots, u_{d}$ such that (i) $u_0 = u$, (ii) $u_d = v$, and (iii) $\cbra{u_{i-1}, u_{i}} \in E$ for all $i \in \sbra{d}$. 
    
    The summation over $c_r$ is upper bounded by the number of vertices in $G$ that are connected to vertex $1$. As there are $T - 1$ edges in $G$, we have
    \begin{equation}
        \sum_{r \in \sbra{R}} c_r \leq T. 
    \end{equation}
    By the Pigeonhole Principle, there exists an $r^* \in \sbra{R}$ such that 
    \begin{equation} \label{eq:c-vs-T/R}
        c_{r^*} \leq \frac{T}{R}. 
    \end{equation}

    \vspace{10pt}
    Now we are going to construct a tester for $\mathcal{Q}_m$ from the implementation of $\mathcal{T}$ with sample complexity $c_{r^*}$.
    
    \paragraph{Part 4: Permutation invariance.} 
    We choose the following permutation over $\sbra{n}$:
    \begin{equation} \label{def:pi}
        \pi = \prod_{j \in \sbra{k}} \pi_j \in \mathcal{G}, 
    \end{equation}
    where
    \begin{equation}
        \pi_j = \prod_{p \in \sbra{m_j}} \rbra{a_{j, p}~a_{j, \rbra{r^*-1}m_j+p}} \in \Sym\rbra{A_j}.
    \end{equation}
    Note that $\pi^2 = \rbra{1}\rbra{2} \dots \rbra{n}$ is the identity permutation, and thus $\pi = \pi^{-1}$. 
    Because of the permutation invariance that $\mathcal{Q}_m \xhookrightarrow[\mathcal{G}]{\sigma} \mathcal{P}_n$, we have 
    \begin{itemize}
        \item $\Pr\sbra{\mathcal{T} \textup{ accepts } U_{\pi}\rbra{\rho \otimes \sigma}U_{\pi}^\dag} \geq 2/3$ if $\rho \in \mathcal{Q}_m^{\textup{yes}}$, 
        \item $\Pr\sbra{\mathcal{T} \textup{ accepts } U_{\pi}\rbra{\rho \otimes \sigma}U_{\pi}^\dag} \leq 1/3$ if $\rho \in \mathcal{Q}_m^{\textup{no}}$,
    \end{itemize}
    where $U_\pi \colon \ket{\psi_1} \ket{\psi_2} \dots \ket{\psi_{n}} \mapsto \ket{\psi_{\pi\rbra{1}}} \ket{\psi_{\pi\rbra{2}}} \dots \ket{\psi_{\pi\rbra{n}}}$ for every $\ket{\psi_1}, \ket{\psi_2}, \dots, \ket{\psi_n} \in \mathcal{H}_2$.
    Note that 
    \begin{equation} \label{eq:prob-T-pi}
        \Pr\sbra*{\mathcal{T} \textup{ accepts } U_{\pi}\rbra{\rho \otimes \sigma}U_{\pi}^\dag} = \tr\rbra*{\Pi \mathcal{T} \rbra*{ \bigotimes_{s \in \sbra{S}} \rbra*{U_{\pi}\rbra{\rho \otimes \sigma}U_{\pi}^\dag}_{\mathsf{X}_s} \otimes \ketbra{0}{0}^{\otimes \ell}_{\mathsf{W}}} \mathcal{T}^\dag}. 
    \end{equation}

    \paragraph{Part 5: Instance encoding with tester $\widetilde{\mathcal{T}}$.}
    Now we consider the state 
    \begin{equation} \label{def:eta}
        \eta_{\mathsf{X}} = \bigotimes_{s \in C_{r^*}} \rbra*{\rho \otimes \sigma}_{\mathsf{X}_s} \otimes \bigotimes_{s \in \sbra{S} \setminus C_{r^*}} \rbra*{\ketbra{0}{0}^{\otimes m} \otimes \sigma}_{\mathsf{X}_s}.
    \end{equation}
    Let 
    \begin{equation} \label{def:tildeT}
        \widetilde{\mathcal{T}} = \mathcal{T} \cdot \rbra*{ \bigotimes_{s \in \sbra{S}} \rbra{U_{\pi}}_{\mathsf{X}_s} \otimes I_{\mathsf{W}} }. 
    \end{equation}
    Then, it can be shown that (see \cref{lemma:prob-tildeT}) 
    \begin{equation}
        \tr\rbra*{\Pi \widetilde{\mathcal{T}} \rbra*{ \eta_{\mathsf{X}} \otimes \ketbra{0}{0}^{\otimes \ell}_{\mathsf{W}}} \widetilde{\mathcal{T}}^\dag} = \Pr\sbra*{\mathcal{T} \textup{ accepts } U_{\pi}\rbra{\rho \otimes \sigma}U_{\pi}^\dag}.
    \end{equation}

    \paragraph{Part 6: Instance encoding with tester $\widehat{\mathcal{T}}$.} 
    We write
    \begin{equation}
        C_{r^*} = \set{s_{j}}{j \in \sbra{c_{r^*}}}.
    \end{equation}
    Let $\tau \in \Sym\rbra{S}$ be a permutation such that $\tau\rbra{C_{r^*}} = \sbra{c_{r^*}}$, i.e., $\tau\rbra{s_j} = j$ for all $j \in \sbra{c_{r^*}}$. 
    Then, let
    \begin{equation} \label{def:hatT}
        \widehat{\mathcal{T}} = \widetilde{\mathcal{T}} \cdot \rbra*{\rbra{V_\tau}_{\mathsf{X}} \otimes I_{\mathsf{W}}},
    \end{equation}
    where 
    \begin{equation}
        V_{\tau} \ket{\phi_1} \ket{\phi_2} \dots \ket{\phi_S} = \ket{\phi_{\tau\rbra{1}}} \ket{\phi_{\tau\rbra{2}}} \dots \ket{\phi_{\tau\rbra{S}}}
    \end{equation}
    for any $\ket{\phi_1}, \ket{\phi_2}, \dots, \ket{\phi_S} \in \mathcal{H}_2^{\otimes n}$; or, equivalently, 
    \begin{equation}
        \rbra{V_{\tau}}_{\mathsf{X}} \cdot \bigotimes_{s \in \sbra{S}} \ket{\phi_s}_{\mathsf{X}_{s}} = \bigotimes_{s \in \sbra{S}} \ket{\phi_{\tau\rbra{s}}}_{\mathsf{X}_{s}} = \bigotimes_{s \in \sbra{S}} \ket{\phi_{s}}_{\mathsf{X}_{\tau^{-1}\rbra{s}}}. 
    \end{equation}
    Then, it can be shown that (see \cref{lemma:prob-tildeT=prob-hatT})
    \begin{equation}
        \tr\rbra*{\Pi \widehat{\mathcal{T}} \rbra*{ \bigotimes_{s \in \sbra{c_{r^*}}} \rbra*{\rho \otimes \sigma}_{\mathsf{X}_s} \otimes \bigotimes_{s = c_{r^*}+1}^{S} \rbra*{\ketbra{0}{0}^{\otimes m} \otimes \sigma}_{\mathsf{X}_s} \otimes \ketbra{0}{0}^{\otimes \ell}_{ \mathsf{W}} } \widehat{\mathcal{T}}^\dag} = \tr\rbra*{\Pi \widetilde{\mathcal{T}} \rbra*{ \eta_{\mathsf{X}} \otimes \ketbra{0}{0}^{\otimes \ell}_{\mathsf{W}}} \widetilde{\mathcal{T}}^\dag}.
    \end{equation}

    \begin{figure} [!htp]
    \centering
    \begin{quantikz}
        \lstick[6]{$\mathsf{X}_1$} \qquad \qquad \qquad \qquad \quad & \lstick{\scriptsize$1$} \wireoverride{n} & & & \gate[17]{\quad V_\tau \quad} \gategroup[23,steps=3,style={dashed,rounded corners,fill=blue!20,inner sep=6pt},background]{\textcolor{blue}{$\widehat{\mathcal{T}}$}} & \gate[6]{~ U_\pi ~} \gategroup[23,steps=2,style={dashed,rounded corners,fill=red!20,inner sep=1pt},background]{\textcolor{red}{$\widetilde{\mathcal{T}}$}} & \gate[23]{\qquad \qquad \mathcal{T} \qquad \qquad} & \meter{} \\
        & \lstick{\scriptsize\vdots} \setwiretype{n} \\
        & \lstick{\scriptsize$m$} \wireoverride{n} & & & & & & \\
        & \lstick{\scriptsize$m+1$} \wireoverride{n} & \gate[6]{\mathcal{O}_\sigma} & & & & & \\
        & \lstick{\scriptsize\vdots} \setwiretype{n} & & & & \\
        & \lstick{\scriptsize$n$} \wireoverride{n} & & & & & & \\
        \lstick[3]{$\mathsf{Z}_1$} \qquad \qquad \qquad \qquad \quad & \lstick{\scriptsize$Sn+\ell+1$} \wireoverride{n} & & \meterD{} \\
        \setwiretype{n} & \lstick{\scriptsize\vdots}  & \\
        & \lstick{\scriptsize$Sn+\ell+n-m$} \wireoverride{n} & & \meterD{} \\
        \lstick[2]{\huge\vdots} \qquad \qquad \qquad \qquad \quad & \lstick{\vdots} \setwiretype{n} &  \\
        & \setwiretype{n} & \\
        \lstick[6]{$\mathsf{X}_S$} \qquad \qquad \qquad \qquad \quad & \lstick{\scriptsize$\rbra{S-1}n+1$} \wireoverride{n} & & & & \gate[6]{~U_\pi~} & & \\
        & \lstick{\scriptsize\vdots} \setwiretype{n} \\
        & \lstick{\scriptsize$\rbra{S-1}n+m$} \wireoverride{n} & & & & & & \\
        & \lstick{\scriptsize$\rbra{S-1}n+m+1$} \wireoverride{n} & \gate[6]{\mathcal{O}_\sigma} & & & & & \\
        & \lstick{\scriptsize\vdots} \setwiretype{n} & & & & & & \\
        & \lstick{\scriptsize$Sn$} \wireoverride{n} & & & & & & \\
        \lstick[3]{$\mathsf{Z}_S$} \qquad \qquad \qquad \qquad \quad & \lstick{\scriptsize$Sn+\ell+\rbra{S-1}\rbra{n-m}+1$} \wireoverride{n} & & \meterD{} \\
        \setwiretype{n} & \lstick{\scriptsize\vdots}  & \\
        & \lstick{\scriptsize$Sn+\ell+S\rbra{n-m}$} \wireoverride{n} & & \meterD{} \\
        \lstick[3]{$\mathsf{W}$} \qquad \qquad \qquad \qquad \quad & \lstick{\scriptsize$Sn+1$} \wireoverride{n} & & & & & & \\
        & \lstick{\scriptsize\vdots} \setwiretype{n} & & & & & \\
        & \lstick{\scriptsize$Sn+\ell$} \wireoverride{n} & & & & & & 
    \end{quantikz}
    \caption{Quantum circuit for tester $\overline{\mathcal{T}}$.}
    \label{fig:view}
    \end{figure}

    \paragraph{Part 7: Instance encoding with tester $\overline{\mathcal{T}}$.}
    To complete our construction, we introduce extra registers of qubits: $\mathsf{Z}_1, \mathsf{Z}_2, \dots, \mathsf{Z}_{S}$, where 
    \begin{itemize}
        \item $\mathsf{Z}_{s} = \set{nS+\ell+\rbra{s-1}\rbra{n-m}+x}{x \in \sbra{n-m}}$ for $s \in \sbra{S}$. 
    \end{itemize}
    Let $\mathcal{O}_{\sigma}$ be a quantum unitary operator that prepares a purification of $\sigma$, i.e., 
    \begin{equation} \label{def:psi}
        \rbra*{\mathcal{O}_{\sigma}}_{\mathsf{X}'_s\mathsf{Z}_s} \cdot \ket{0}_{\mathsf{X}'_s\mathsf{Z}_s} = \ket{\psi}_{\mathsf{X}'_s\mathsf{Z}_s}, 
        \text{ and } \tr_{\mathsf{Z}_s}\rbra*{\ketbra{\psi}{\psi}_{\mathsf{X}'_s\mathsf{Z}_s}} = \sigma_{\mathsf{X}'_s},
    \end{equation}
    where we write $\mathsf{X}'_s$ to denote the registers $\mathsf{X}_{s,m+1}\dots\mathsf{X}_{s,n}$.
    Now we define
    \begin{equation} \label{def:barT}
        \overline{\mathcal{T}}_{\mathsf{XWZ}} = \rbra*{\widehat{\mathcal{T}}_{\mathsf{XW}} \otimes I_{\mathsf{Z}}} \cdot \rbra*{ \bigotimes_{s \in \sbra{S}} \rbra*{\mathcal{O}_\sigma}_{\mathsf{X}'_s\mathsf{Z}_s} \otimes \bigotimes_{s \in \sbra{S}} I_{\mathsf{X}_{s,1}\dots\mathsf{X}_{s,m}} \otimes I_{\mathsf{W}}},
    \end{equation}
    where the circuit implementation of $\overline{\mathcal{T}}$ is visualized in \cref{fig:view}. 
    Then, $\overline{\mathcal{T}}$ can be understood as a tester that takes $c_{r^*}$ samples of $\rho$ as input.
    For clarity, we define that
    \begin{equation} \label{def:prob-barT}
    \Pr\sbra{\overline{\mathcal{T}} \textup{ accepts } \rho} = \tr\rbra*{ \overline{\Pi}_{\mathsf{XWZ}} \cdot  \overline{\mathcal{T}}_{\mathsf{XWZ}} \rbra*{ \bigotimes_{s \in \sbra{c_{r^*}}} \overline{\rho}_{\mathsf{X}_s} \otimes \bigotimes_{s = c_{r^*}+1}^{S} \ketbra{0}{0}^{\otimes n}_{\mathsf{X}_s} \otimes \ketbra{0}{0}^{\otimes \rbra{\ell + \rbra{n-m}S}}_{\mathsf{WZ}} } \overline{\mathcal{T}}^\dag_{\mathsf{XWZ}} },
    \end{equation}
    where $\overline{\Pi}_{\mathsf{XWZ}} = \Pi_{\mathsf{XW}} \otimes I_{\mathsf{Z}}$ and $\overline{\rho} = \rho \otimes \ketbra{0}{0}^{\otimes \rbra{n-m}}$.

    Then, we can show (in \cref{lemma:prob-barT=prob-T}) that $\Pr\sbra{\overline{\mathcal{T}} \textup{ accepts } \rho} = \Pr\sbra{\mathcal{T} \textup{ accepts } U_{\pi}\rbra{\rho \otimes \sigma}U_{\pi}^\dag}$. This means that
    \begin{itemize}
        \item $\Pr\sbra{\overline{\mathcal{T}} \textup{ accepts } \rho} \geq 2/3$ if $\rho \in \mathcal{Q}_m^{\textup{yes}}$, 
        \item $\Pr\sbra{\overline{\mathcal{T}} \textup{ accepts } \rho} \leq 1/3$ if $\rho \in \mathcal{Q}_m^{\textup{no}}$.
    \end{itemize}
    That is, $\overline{\mathcal{T}}$ is a tester for $\mathcal{Q}_{m}$ with sample complexity $c_{r^*}$ and auxiliary space complexity $\rbra{2n-m}S-mc_{r^*}+\ell$. 

    \paragraph{Part 8: The sample complexity of $\overline{\mathcal{T}}$.}
    Since $\overline{\mathcal{T}}$ is a tester for $\mathcal{Q}_m$ with sample complexity $c_{r^*}$, by the definition of sample complexity, we have $c_{r^*} \geq \mathsf{S}\rbra{\mathcal{Q}_m}$.
    By \cref{eq:c-vs-T/R}, we have
    \begin{equation}
        \frac{T}{R} \geq \mathsf{S}\rbra{\mathcal{Q}_m},
    \end{equation}
    which gives
    \begin{equation}
        \mathsf{T}\rbra{\mathcal{P}_n} = T \geq R \cdot \mathsf{S}\rbra{\mathcal{Q}_m},
    \end{equation}
    thereby completing the proof.

    \subsection{Technical lemmas}

    \begin{lemma} \label{lemma:acts-trivial}
        $\mathcal{T}^\dag \Pi \mathcal{T}$ acts trivially on $\mathsf{X}_{s,x}$ for all $s \in \sbra{S} \setminus C_{r^*}$ and $x \in \pi\rbra{\sbra{m}}$, where $\pi$ is a permutation defined by \cref{def:pi}.
        That is, $\mathcal{T}^\dag \Pi \mathcal{T}$ can be written as
        \begin{equation}
            \mathcal{T}^\dag \Pi \mathcal{T} = P_{\mathsf{XW} \setminus \mathsf{X}^*} \otimes I_{\mathsf{X}^*}, 
        \end{equation}
        where $P_{\mathsf{XW} \setminus \mathsf{X}^*}$ is a projection operator that acts on all registers except those in $\mathsf{X}^*$, and $\mathsf{X}^*$ denotes the set of all registers $\mathsf{X}_{s, x}$ for $s \in \sbra{S}\setminus C_{r^*}$ and $x \in \pi\rbra{\sbra{m}}$. 
        Moreover,
        \begin{equation}
            \tr\rbra*{\mathcal{T}^\dag \Pi \mathcal{T} \rho} = \tr\rbra*{P_{\mathsf{XW} \setminus \mathsf{X}^*} \cdot \tr_{\mathsf{X}^*}\rbra*{\rho}}.
        \end{equation}
    \end{lemma}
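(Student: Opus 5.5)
The plan is a standard light-cone (Heisenberg-picture) argument: show that the support of $\mathcal{T}^\dag \Pi \mathcal{T}$ lies inside the connected component of vertex $1$ in the graph $G$ of \cref{eq:def-G}, and then show that every qubit in $\mathsf{X}^*$ lies outside that component.

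First I identify $\pi\rbra{\sbra{m}}$. Since $\sbra{m}$ consists of the $m$ smallest labels in $\sbra{n}$, the set $A_j \cap \sbra{m}$ consists of the $m_j$ smallest elements of $A_j$. These are $a_{j,1},\dots,a_{j,m_j}$, so $A_j \cap \sbra{m} = A_j^{\rbra{1}}$. The permutation $\pi$ of \cref{def:pi} swaps $a_{j,p}$ with $a_{j,\rbra{r^*-1}m_j+p}$ for each $p \in \sbra{m_j}$. Hence
\[
\pi\rbra{\sbra{m}} = \bigcup_{j \in \sbra{k}} A_j^{\rbra{r^*}}.
\]
(If $r^*=1$, $\pi$ is the identity and the identity still holds.) By the definition of $C_{r^*}$, for $s \notin C_{r^*}$ and $x \in A_j^{\rbra{r^*}}$, the vertex $\rbra{s-1}n+x$ is not connected to vertex $1$ in $G$. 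So every qubit of $\mathsf{X}^*$ lies outside the connected component $K$ of vertex $1$.

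Next I prove by backward induction on $t$ that
\[
O_t \coloneqq U_t^\dag U_{t+1}^\dag \cdots U_{T-1}^\dag \, \Pi \, U_{T-1} \cdots U_{t+1} U_t
\]
acts trivially outside $K_t$. Here $K_t$ is the connected component of vertex $1$ in the subgraph with edges $\cbra{x_{t'},y_{t'}}$ for $t' \geq t$. The base case $O_T = \Pi$ has support $\cbra{1}$.

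For the inductive step, write $O_{t+1} = M \otimes I$ with $M$ supported on $K_{t+1}$. Then:
\begin{itemize}
\item If $\cbra{x_t,y_t}$ is disjoint from $K_{t+1}$, then $U_t$ commutes with $O_{t+1}$, so $O_t = O_{t+1}$.
\item Otherwise, $U_t^\dag O_{t+1} U_t$ is supported on $K_{t+1} \cup \cbra{x_t,y_t}$, and this set is contained in $K_t$ because the edge $\cbra{x_t,y_t}$ attaches to $K_{t+1}$.
\end{itemize}
Since $K_t \subseteq K$ for all $t$, $\mathcal{T}^\dag\Pi\mathcal{T} = O_1$ acts trivially on every qubit outside $K$, in particular on $\mathsf{X}^*$. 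So $\mathcal{T}^\dag\Pi\mathcal{T} = P \otimes I_{\mathsf{X}^*}$. The only care needed here is bookkeeping of supports; nothing is conceptually hard.

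It remains to check the two final claims.
\begin{itemize}
\item $P$ is a projection. $\mathcal{T}^\dag\Pi\mathcal{T}$ is a Hermitian idempotent, so $P^2 \otimes I = P \otimes I$ and $P^\dag \otimes I = P \otimes I$. Because $I_{\mathsf{X}^*} \neq 0$, this forces $P^2 = P = P^\dag$.
\item The trace identity. By the defining property of the partial trace, $\tr\rbra{\rbra{P \otimes I_{\mathsf{X}^*}}\rho} = \tr\rbra{P \cdot \tr_{\mathsf{X}^*}\rbra{\rho}}$.
\end{itemize}
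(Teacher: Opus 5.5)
Your proposal is correct and follows essentially the same light-cone argument as the paper: identify $\pi([m]) = \bigcup_{j} A_j^{(r^*)}$, conclude that $\mathsf{X}^*$ lies outside the connected component $K$ of vertex $1$, and show that $\mathcal{T}^\dag \Pi \mathcal{T}$ is supported on $K$. The differences are cosmetic: the paper gets the support claim by factoring $\mathcal{T} = U_K \otimes V_{\overline{K}}$ (every gate lies entirely in $K$ or in $\overline{K}$) and writes $P$ explicitly, whereas you use a backward Heisenberg-picture induction and derive that $P$ is a projection from idempotence; you also justify the identity for $\pi([m])$, which the paper only asserts.
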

    \begin{proof}
    Let $K \subseteq \sbra{nS+\ell}$ be the connected component of vertex $1$ in the graph $G$ defined by \cref{eq:def-G}. 
    Let $\overline{K} = \sbra{nS+\ell}\setminus K$.

    First, we show that every register in $\mathsf{X}^*$ belongs to $\overline{K}$. Recall that
    \begin{equation}
        \bigcup_{j \in \sbra{k}} A_j^{\rbra{r^*}} = \set{\pi\rbra{x}}{x \in \sbra{m}} = \pi\rbra{\sbra{m}}.
    \end{equation}
    Thus, for any $s \in \sbra{S} \setminus C_{r^*}$ and $x\in\pi\rbra{\sbra{m}}$, we have: (i) $x\in A_j^{\rbra{r^*}}$ for some $j\in\sbra{k}$; (2) by the definition of $C_{r^*}$, the vertices $\rbra{s-1}n+x$ and $1$ are not connected in $G$. Hence, $\mathsf{X}_{s,x}\subseteq \overline{K}$ for all $s \in \sbra{S} \setminus C_{r^*}$ and $x\in\pi\rbra{\sbra{m}}$, which gives $\mathsf{X}^*\subseteq \overline{K}$.

    Since $K$ and $\overline{K}$ are distinct connected components, no edge of $G$ has one endpoint in $K$ and the other endpoint in $\overline{K}$. 
    Therefore, every two-qubit gate in $\mathcal{T}$ acts either entirely on $K$ or entirely on $\overline{K}$. Since any gates acting on disjoint registers commute, the circuit $\mathcal{T}$ can be written in the form
    \begin{equation}
        \mathcal{T} = U_{K} \otimes V_{\overline{K}},
    \end{equation}
    for some unitary operators $U$ and $V$.

    On the other hand, since vertex $1$ belongs to $K$, the measurement projector has the form
    \begin{equation}
        \Pi
        =
        \rbra*{\ketbra{0}{0}_{\mathsf{X}_{1,1}}
        \otimes I_{K\setminus\{1\}}}
        \otimes I_{\overline{K}}.
    \end{equation}
    It follows that
    \begin{align}
        \mathcal{T}^{\dag}\Pi\mathcal{T}
        &=
        \rbra*{
            U_{K}^{\dag}
            \rbra*{\ketbra{0}{0}_{\mathsf{X}_{1,1}}
            \otimes I_{K\setminus\{1\}}}
            U_{K}
        }
        \otimes I_{\overline{K}} \\
        & = \rbra*{
            U_{K}^{\dag}
            \rbra*{\ketbra{0}{0}_{\mathsf{X}_{1,1}}
            \otimes I_{K\setminus\{1\}}}
            U_{K}
        }
        \otimes I_{\overline{K} \setminus \mathsf{X}^*} \otimes I_{\mathsf{X}^*}.
    \end{align}
    Hence, there exists a projection operator
    $P_{\mathsf{XW}\setminus\mathsf{X}^*}$ such that
    \begin{equation}
        \mathcal{T}^{\dag}\Pi\mathcal{T}
        =
        P_{\mathsf{XW}\setminus\mathsf{X}^*}
        \otimes I_{\mathsf{X}^*},
    \end{equation}
    where
    \begin{equation}
        P_{\mathsf{XW}\setminus\mathsf{X}^*} = \rbra*{
            U_{K}^{\dag}
            \rbra*{\ketbra{0}{0}_{\mathsf{X}_{1,1}}
            \otimes I_{K\setminus\{1\}}}
            U_{K}
        }
        \otimes I_{\overline{K} \setminus \mathsf{X}^*}.
    \end{equation}

    Finally, for any density operator $\rho$ on $\mathsf{XW}$,
    \begin{equation}
        \tr\rbra*{\mathcal{T}^{\dag}\Pi\mathcal{T}\rho}
        =
        \tr\rbra*{
            \rbra*{P_{\mathsf{XW}\setminus\mathsf{X}^*}
            \otimes I_{\mathsf{X}^*}}\rho
        }
        =
        \tr\rbra*{
            P_{\mathsf{XW}\setminus\mathsf{X}^*}
            \cdot
            \tr_{\mathsf{X}^*}\rbra*{\rho}
        },
    \end{equation}
    which completes the proof.
    \end{proof}

    \begin{lemma} \label{lemma:prob-tildeT}
        Let $\eta$ and $\widetilde{\mathcal{T}}$ be defined by \cref{def:eta,def:tildeT}. 
        Then, 
        \begin{equation}
            \tr\rbra*{\Pi \widetilde{\mathcal{T}} \rbra*{ \eta_{\mathsf{X}} \otimes \ketbra{0}{0}^{\otimes \ell}_{\mathsf{W}}} \widetilde{\mathcal{T}}^\dag} = \Pr\sbra*{\mathcal{T} \textup{ accepts } U_{\pi}\rbra{\rho \otimes \sigma}U_{\pi}^\dag}.
        \end{equation}
    \end{lemma}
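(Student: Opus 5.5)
We have to prove that $\tr\rbra{\Pi \widetilde{\mathcal{T}} (\eta \otimes \ketbra{0}{0}^{\otimes \ell}) \widetilde{\mathcal{T}}^\dag}$ coincides with the acceptance probability of $\mathcal{T}$ on $U_\pi(\rho\otimes\sigma)U_\pi^\dag$. The plan has three steps: move the permutations $U_\pi$ from the tester onto the input state, invoke \cref{lemma:acts-trivial} to discard the registers in $\mathsf{X}^*$, and check that the two inputs agree on all remaining registers.

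\textbf{Step 1: move $U_\pi$ onto the state.} Using \cref{def:tildeT} and the cyclicity of the trace, the left-hand side equals
\begin{equation*}
\tr\rbra*{\mathcal{T}^\dag \Pi \mathcal{T} \cdot \rbra*{\eta' \otimes \ketbra{0}{0}^{\otimes \ell}_{\mathsf{W}}}}, \qquad \eta' = \bigotimes_{s\in C_{r^*}} \rbra*{U_\pi(\rho\otimes\sigma)U_\pi^\dag}_{\mathsf{X}_s} \otimes \bigotimes_{s\notin C_{r^*}} \rbra*{U_\pi(\ketbra{0}{0}^{\otimes m}\otimes\sigma)U_\pi^\dag}_{\mathsf{X}_s}.
\end{equation*}
By \cref{eq:prob-T-pi}, the right-hand side equals $\tr\rbra{\mathcal{T}^\dag\Pi\mathcal{T}\cdot(\xi\otimes\ketbra{0}{0}^{\otimes\ell})}$, where $\xi = \bigotimes_{s\in[S]} (U_\pi(\rho\otimes\sigma)U_\pi^\dag)_{\mathsf{X}_s}$.

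\textbf{Step 2: apply \cref{lemma:acts-trivial}.} The lemma writes $\mathcal{T}^\dag\Pi\mathcal{T} = P_{\mathsf{XW}\setminus\mathsf{X}^*}\otimes I_{\mathsf{X}^*}$. Hence each trace depends only on the state after tracing out $\mathsf{X}^*$. It therefore suffices to show
\begin{equation*}
\tr_{\mathsf{X}^*}\rbra{\eta'} = \tr_{\mathsf{X}^*}\rbra{\xi}.
\end{equation*}
Both states are tensor products over samples $s$, so the partial trace factorizes sample by sample. For $s\in C_{r^*}$ nothing is traced out, and the factors of $\eta'$ and $\xi$ are identical. For $s\notin C_{r^*}$ we trace out the qubits $\mathsf{X}_{s,x}$ with $x\in\pi([m])$. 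It remains to show that
\begin{equation*}
\tr_{\pi([m])}\rbra*{U_\pi(\tau\otimes\sigma)U_\pi^\dag}
\end{equation*}
is independent of the $m$-qubit state $\tau$, applied with $\tau=\rho$ and $\tau=\ketbra{0}{0}^{\otimes m}$.

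\textbf{Step 3: the partial-trace identity (main obstacle).} Since $U_\pi$ only relabels qubits, $U_\pi(\tau\otimes\sigma)U_\pi^\dag$ places the content of $\tau$ on the positions $\pi^{-1}([m]) = \pi([m])$, using $\pi=\pi^{-1}$. It places $\sigma$ on the complementary positions, in a fixed order determined by $\pi$. Tracing out $\pi([m])$ removes $\tau$ completely, using $\tr(\tau)=1$, and leaves $\sigma$ arranged on $[n]\setminus\pi([m])$ independently of $\tau$.

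The only delicate point is the index convention in \cref{eq:def-U-pi}: position $i$ receives $\ket{\psi_{\pi(i)}}$. So the first $m$ tensor factors land on the positions $i$ with $\pi(i)\in[m]$, i.e.\ on $\pi^{-1}([m])$. This matches $\pi([m])$ only because $\pi$ is an involution. I would verify this on product operators (pure-product $\tau$ and $\sigma$ decompositions) and extend by linearity.

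Combining Steps 1–3 gives equality of the two traces, which proves the lemma.
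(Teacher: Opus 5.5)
Your proposal is correct and follows essentially the same route as the paper's proof. Both conjugate $U_\pi^{\otimes S}$ onto the input, invoke \cref{lemma:acts-trivial} to reduce the claim to equality of the states after tracing out $\mathsf{X}^*$, and observe that on every sample $s \notin C_{r^*}$ the $m$-qubit factor sits exactly on the traced-out positions $\pi^{-1}([m]) = \pi([m])$. Your uniform ``independent of $\tau$'' formulation, with the explicit involution check, is a slightly cleaner packaging of the paper's ``similarly'' step, which the paper carries out only for $\ketbra{0}{0}^{\otimes m}$.
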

    \begin{proof}
        Direct calculation shows that
        \begin{align}
            & \tr\rbra*{\Pi \widetilde{\mathcal{T}} \rbra*{ \eta_{\mathsf{X}} \otimes \ketbra{0}{0}^{\otimes \ell}_{\mathsf{W}}} \widetilde{\mathcal{T}}^\dag} \\
            & = \tr\rbra*{ \Pi \mathcal{T} \rbra*{ \bigotimes_{s \in \sbra{S}} \rbra{U_{\pi}}_{\mathsf{X}_s} \otimes I_{\mathsf{W}} } \rbra*{ \eta_{\mathsf{X}} \otimes \ketbra{0}{0}^{\otimes \ell}_{\mathsf{W}} } \rbra*{ \bigotimes_{s \in \sbra{S}} \rbra{U_{\pi}^\dag}_{\mathsf{X}_s} \otimes I_{\mathsf{W}} } \mathcal{T}^{\dag} } \\
            & = \tr\rbra*{ \Pi \mathcal{T} \rbra*{ \bigotimes_{s \in C_{r^*}} \rbra*{U_{\pi} \rbra*{\rho \otimes \sigma} U_{\pi}^\dag}_{\mathsf{X}_s} \otimes \bigotimes_{s \in \sbra{S} \setminus C_{r^*}} \rbra*{U_{\pi} \rbra*{\ketbra{0}{0}^{\otimes m} \otimes \sigma} U_{\pi}^\dag}_{\mathsf{X}_s} \otimes \ketbra{0}{0}^{\otimes \ell}_{\mathsf{W}} } \mathcal{T}^\dag }. \label{eq:prob-T-Cr}
        \end{align}
        By \cref{lemma:acts-trivial},  $\mathcal{T}^\dag \Pi \mathcal{T}$ has the form
        \begin{equation}
            \mathcal{T}^\dag \Pi \mathcal{T} = P_{\mathsf{XW} \setminus \mathsf{X}^*} \otimes I_{\mathsf{X}^*}, 
        \end{equation}
        for some projection operator $P_{\mathsf{XW} \setminus \mathsf{X}^*}$ on registers ${\mathsf{XW} \setminus \mathsf{X}^*}$, and $\mathsf{X}^*$ denotes the set of all registers $\mathsf{X}_{s, x}$ for $s \in \sbra{S}\setminus C_{r^*}$ and $x \in \pi\rbra{\sbra{m}}$, where $\pi$ is a permutation defined by \cref{def:pi}. 
        On the other hand, for every $s \in \sbra{S}$, 
        \begin{align}
            \rbra*{U_{\pi} \rbra*{\ketbra{0}{0}^{\otimes m} \otimes \sigma} U_{\pi}^\dag}_{\mathsf{X}_s}
            & = \rbra{U_{\pi}}_{\mathsf{X}_s} \rbra*{\ketbra{0}{0}^{\otimes m}_{\mathsf{X}_{s,1}\dots\mathsf{X}_{s,m}} \otimes \sigma_{\mathsf{X}_{s,m+1}\dots\mathsf{X}_{s,n}}} \rbra{U_{\pi}^\dag}_{\mathsf{X}_s} \\
            & = \ketbra{0}{0}^{\otimes m}_{\mathsf{X}_{s,\pi\rbra{1}}\dots\mathsf{X}_{s,\pi\rbra{m}}} \otimes \sigma_{\mathsf{X}_{s,\pi\rbra{m+1}}\dots\mathsf{X}_{s,\pi\rbra{n}}},
        \end{align}
        which yields
        \begin{equation}
            \bigotimes_{s \in \sbra{S} \setminus C_{r^*}} \rbra*{U_{\pi} \rbra*{\ketbra{0}{0}^{\otimes m} \otimes \sigma} U_{\pi}^\dag}_{\mathsf{X}_s}
            = \rbra*{\bigotimes_{s \in \sbra{S} \setminus C_{r^*}} \sigma_{\mathsf{X}_{s,\pi\rbra{m+1}}\dots\mathsf{X}_{s,\pi\rbra{n}}} } \otimes \ketbra{0}{0}^{\otimes m\rbra{S-c_{r^*}}}_{\mathsf{X}^*}.
        \end{equation}
        Then, 
        \begin{equation}
            \eqref{eq:prob-T-Cr} 
            = \tr\rbra*{ P_{\mathsf{XW} \setminus \mathsf{X}^*} \rbra*{ \bigotimes_{s \in C_{r^*}} \rbra*{U_{\pi} \rbra*{\rho \otimes \sigma} U_{\pi}^\dag}_{\mathsf{X}_s} \otimes \bigotimes_{s \in \sbra{S} \setminus C_{r^*}} \sigma_{\mathsf{X}_{s,\pi\rbra{m+1}}\dots\mathsf{X}_{s,\pi\rbra{n}}} \otimes \ketbra{0}{0}^{\otimes \ell}_{\mathsf{W}} } }.
        \end{equation}
        Similarly, we have
        \begin{equation}
            \eqref{eq:prob-T-pi} = \tr\rbra*{ P_{\mathsf{XW} \setminus \mathsf{X}^*} \rbra*{ \bigotimes_{s \in C_{r^*}} \rbra*{U_{\pi} \rbra*{\rho \otimes \sigma} U_{\pi}^\dag}_{\mathsf{X}_s} \otimes \bigotimes_{s \in \sbra{S} \setminus C_{r^*}} \sigma_{\mathsf{X}_{s,\pi\rbra{m+1}}\dots\mathsf{X}_{s,\pi\rbra{n}}} \otimes \ketbra{0}{0}^{\otimes \ell}_{\mathsf{W}} } }.
        \end{equation}
        Therefore, $\eqref{eq:prob-T-pi} = \eqref{eq:prob-T-Cr}$, i.e.,
        \begin{equation}
            \Pr\sbra*{\mathcal{T} \textup{ accepts } U_{\pi}\rbra{\rho \otimes \sigma}U_{\pi}^\dag} = 
            \tr\rbra*{\Pi \widetilde{\mathcal{T}} \rbra*{ \eta_{\mathsf{X}} \otimes \ketbra{0}{0}^{\otimes \ell}_{\mathsf{W}}} \widetilde{\mathcal{T}}^\dag}.
        \end{equation}
    \end{proof}

    \begin{lemma} \label{lemma:prob-tildeT=prob-hatT}
        Let $\widetilde{\mathcal{T}}$ and $\widehat{\mathcal{T}}$ be defined by \cref{def:tildeT,def:hatT}, respectively. Then,
        \begin{equation}
            \tr\rbra*{\Pi \widehat{\mathcal{T}} \rbra*{ \bigotimes_{s \in \sbra{c_{r^*}}} \rbra*{\rho \otimes \sigma}_{\mathsf{X}_s} \otimes \bigotimes_{s = c_{r^*}+1}^{S} \rbra*{\ketbra{0}{0}^{\otimes m} \otimes \sigma}_{\mathsf{X}_s} \otimes \ketbra{0}{0}^{\otimes \ell}_{ \mathsf{W}} } \widehat{\mathcal{T}}^\dag} = \tr\rbra*{\Pi \widetilde{\mathcal{T}} \rbra*{ \eta_{\mathsf{X}} \otimes \ketbra{0}{0}^{\otimes \ell}_{\mathsf{W}}} \widetilde{\mathcal{T}}^\dag}.
        \end{equation}
    \end{lemma}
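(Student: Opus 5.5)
The proof is a direct unfolding of the definition $\widehat{\mathcal{T}} = \widetilde{\mathcal{T}} \cdot \rbra{\rbra{V_\tau}_{\mathsf{X}} \otimes I_{\mathsf{W}}}$. The key identity to establish is
\begin{equation}
    V_\tau \rbra*{ \bigotimes_{s \in \sbra{c_{r^*}}} \rbra*{\rho \otimes \sigma}_{\mathsf{X}_s} \otimes \bigotimes_{s = c_{r^*}+1}^{S} \rbra*{\ketbra{0}{0}^{\otimes m} \otimes \sigma}_{\mathsf{X}_s} } V_\tau^\dag = \eta_{\mathsf{X}},
\end{equation}
where $\eta$ is defined in \cref{def:eta}. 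Once this holds, substituting it into the left-hand side gives $\tr\rbra{\Pi \widetilde{\mathcal{T}} \rbra{V_\tau \otimes I}\rbra{\cdots}\rbra{V_\tau^\dag \otimes I}\widetilde{\mathcal{T}}^\dag} = \tr\rbra{\Pi \widetilde{\mathcal{T}}\rbra{\eta_{\mathsf{X}} \otimes \ketbra{0}{0}^{\otimes \ell}_{\mathsf{W}}}\widetilde{\mathcal{T}}^\dag}$, which is the claim.

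To prove the identity, I first record how $V_\tau$ acts on product operators. Define $\omega_s = \rho \otimes \sigma$ for $s \le c_{r^*}$ and $\omega_s = \ketbra{0}{0}^{\otimes m} \otimes \sigma$ otherwise. From the stated rule $V_\tau \bigotimes_s \ket{\phi_s}_{\mathsf{X}_s} = \bigotimes_s \ket{\phi_s}_{\mathsf{X}_{\tau^{-1}(s)}}$, applied to both kets and bras and extended by linearity to arbitrary operators (writing each $\omega_s$ as a sum of $\ketbra{\phi}{\phi'}$ terms), I get
\begin{equation}
    V_\tau \rbra*{\bigotimes_{s} \rbra{\omega_s}_{\mathsf{X}_s}} V_\tau^\dag = \bigotimes_{s} \rbra{\omega_s}_{\mathsf{X}_{\tau^{-1}(s)}} = \bigotimes_{s'} \rbra{\omega_{\tau(s')}}_{\mathsf{X}_{s'}}.
\end{equation}

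Next I check which operator lands on each register. Since $\tau(C_{r^*}) = \sbra{c_{r^*}}$ and $\tau$ is a bijection, $s' \in C_{r^*}$ exactly when $\tau(s') \le c_{r^*}$. So register $\mathsf{X}_{s'}$ carries $\rho \otimes \sigma$ for $s' \in C_{r^*}$ and $\ketbra{0}{0}^{\otimes m} \otimes \sigma$ otherwise, which is precisely $\eta_{\mathsf{X}}$. The $\mathsf{W}$ register is left untouched throughout.

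The only delicate point is keeping the direction of $\tau$ versus $\tau^{-1}$ straight in the action of $V_\tau$. The problem statement resolves this through its stated identity, and even the direction is harmless here, because the argument only uses the fact that $\tau$ maps $C_{r^*}$ bijectively onto $\sbra{c_{r^*}}$.
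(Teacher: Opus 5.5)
Your proof is correct and follows the paper's own argument. The paper likewise uses $\tau(C_{r^*}) = \sbra{c_{r^*}}$ to write $\eta_{\mathsf{X}}$ as the conjugation of the reordered input by $(V_\tau)_{\mathsf{X}}$, then absorbs $V_\tau$ into $\widehat{\mathcal{T}} = \widetilde{\mathcal{T}} \cdot \rbra{(V_\tau)_{\mathsf{X}} \otimes I_{\mathsf{W}}}$.

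One caveat, about your closing aside only: the choice of $\tau$ versus $\tau^{-1}$ is not harmless in general. The reversed convention would need $\tau(\sbra{c_{r^*}}) = C_{r^*}$, which does not follow from $\tau(C_{r^*}) = \sbra{c_{r^*}}$ for an arbitrary such $\tau$ (it does if, e.g., $\tau$ is an involution). Your main computation uses the correct direction, so this does not affect the proof.
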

    \begin{proof}
        Note that $\tau \in \Sym\rbra{S}$ is a permutation such that $\tau\rbra{s_j} = j$ for all $j \in \sbra{c_{r^*}}$, and thus $\tau\rbra{C_{r^*}} = \sbra{c_{r^*}}$ and $\tau\rbra{\sbra{S} \setminus C_{r^*}} = \sbra{S}\setminus\sbra{c_{r^*}}$. 
        Then, 
        \begin{align}
            \eta_{\mathsf{X}}
            & = \bigotimes_{s \in \tau^{-1}\rbra{\sbra{c_{r^*}}}} \rbra*{\rho \otimes \sigma}_{\mathsf{X}_s} \otimes \bigotimes_{s \in \tau^{-1}\rbra{\sbra{S}\setminus\sbra{c_{r^*}}}} \rbra*{\ketbra{0}{0}^{\otimes m} \otimes \sigma}_{\mathsf{X}_s} \\
            & = \rbra{V_\tau}_{\mathsf{X}} \cdot \rbra*{\bigotimes_{s \in \sbra{c_{r^*}}} \rbra*{\rho \otimes \sigma}_{\mathsf{X}_s} \otimes \bigotimes_{s = c_{r^*}+1}^{S} \rbra*{\ketbra{0}{0}^{\otimes m} \otimes \sigma}_{\mathsf{X}_s}} \cdot \rbra{V_\tau^\dag}_{\mathsf{X}},
        \end{align}
        which yields the proof. 
    \end{proof}

    \begin{lemma} \label{lemma:prob-barT=prob-T}
        Let $\mathcal{T}$ be a tester satisfying the condition in \cref{def:Tester}. 
        Let $\overline{\mathcal{T}}$ be defined by \cref{def:barT}. Then,
        $\Pr\sbra{\overline{\mathcal{T}} \textup{ accepts } \rho} = \Pr\sbra{\mathcal{T} \textup{ accepts } U_{\pi}\rbra{\rho \otimes \sigma}U_{\pi}^\dag}$. 
    \end{lemma}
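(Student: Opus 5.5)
The plan is to reduce the claim to \cref{lemma:prob-tildeT=prob-hatT} and \cref{lemma:prob-tildeT}. Together they already give
\[
\tr\rbra*{\Pi\widehat{\mathcal{T}}\rbra*{\textstyle\bigotimes_{s\in\sbra{c_{r^*}}}\rbra{\rho\otimes\sigma}_{\mathsf{X}_s}\otimes\bigotimes_{s>c_{r^*}}\rbra{\ketbra{0}{0}^{\otimes m}\otimes\sigma}_{\mathsf{X}_s}\otimes\ketbra{0}{0}^{\otimes\ell}_{\mathsf{W}}}\widehat{\mathcal{T}}^\dag} = \Pr\sbra*{\mathcal{T}\textup{ accepts }U_\pi\rbra{\rho\otimes\sigma}U_\pi^\dag}.
\]
So it suffices to show that the left-hand side equals $\Pr\sbra{\overline{\mathcal{T}}\textup{ accepts }\rho}$ as defined in \cref{def:prob-barT}. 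In other words, the only remaining work is to check that the purification layer $\bigotimes_s(\mathcal{O}_\sigma)_{\mathsf{X}'_s\mathsf{Z}_s}$, followed by discarding $\mathsf{Z}$, reproduces the input state fed to $\widehat{\mathcal{T}}$.

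Concretely, I will first compute the state on $\mathsf{XWZ}$ after applying $\bigotimes_s(\mathcal{O}_\sigma)_{\mathsf{X}'_s\mathsf{Z}_s}$ to the input of \cref{def:prob-barT}.
\begin{itemize}
\item For $s\le c_{r^*}$, register $\mathsf{X}_s$ holds $\overline{\rho}=\rho\otimes\ketbra{0}{0}^{\otimes(n-m)}$. The first $m$ qubits are untouched and carry $\rho$, while $\mathsf{X}'_s\mathsf{Z}_s$ starts in $\ket{0}$. By \cref{def:psi}, $\mathsf{X}'_s\mathsf{Z}_s$ therefore becomes $\ketbra{\psi}{\psi}$, giving $\rho_{\mathsf{X}_{s,1..m}}\otimes\ketbra{\psi}{\psi}_{\mathsf{X}'_s\mathsf{Z}_s}$.
\item For $s>c_{r^*}$, the same reasoning gives $\ketbra{0}{0}^{\otimes m}\otimes\ketbra{\psi}{\psi}_{\mathsf{X}'_s\mathsf{Z}_s}$.
\end{itemize}
Since $\rho$ occupies only the first $m$ qubits of each $\mathsf{X}_s$ and $\mathcal{O}_\sigma$ acts on disjoint registers, this is a plain tensor product. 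The one point to note here is that the first $m$ qubits of $\overline{\rho}$ are exactly where $\rho$ sits, so no reordering is needed.

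Next, I will use that $\overline{\Pi}=\Pi_{\mathsf{XW}}\otimes I_{\mathsf{Z}}$ and that $\widehat{\mathcal{T}}\otimes I_{\mathsf{Z}}$ does not act on $\mathsf{Z}$. Hence the trace over $\mathsf{Z}$ can be taken first, via the identity $\tr\rbra{(A_{\mathsf{XW}}\otimes I_{\mathsf{Z}})\omega}=\tr\rbra{A\,\tr_{\mathsf{Z}}\omega}$ with $A=\widehat{\mathcal{T}}^\dag\Pi\widehat{\mathcal{T}}$. By \cref{def:psi}, $\tr_{\mathsf{Z}_s}\ketbra{\psi}{\psi}=\sigma_{\mathsf{X}'_s}$, so the reduced state on $\mathsf{XW}$ is exactly the input of $\widehat{\mathcal{T}}$ in \cref{lemma:prob-tildeT=prob-hatT}. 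Chaining this with the two lemmas above finishes the argument.

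I do not expect any real obstacle. The only step needing care is the bookkeeping of registers, namely confirming that $\mathcal{O}_\sigma$ touches only $\mathsf{X}'_s\mathsf{Z}_s$ and that partial trace commutes past operators acting trivially on $\mathsf{Z}$.
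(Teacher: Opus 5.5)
Your proposal is correct and follows essentially the same route as the paper. You chain \cref{lemma:prob-tildeT} and \cref{lemma:prob-tildeT=prob-hatT}, then use that $\overline{\mathcal{T}}$ first applies $\bigotimes_s(\mathcal{O}_\sigma)_{\mathsf{X}'_s\mathsf{Z}_s}$ and that $\overline{\Pi}$ and $\widehat{\mathcal{T}}$ act trivially on $\mathsf{Z}$; tracing out $\mathsf{Z}$ then turns each $\ketbra{\psi}{\psi}$ into $\sigma$ and recovers exactly the input of $\widehat{\mathcal{T}}$. The only cosmetic difference is that you phrase the partial-trace step through the operator $\widehat{\mathcal{T}}^\dag\Pi\widehat{\mathcal{T}}$, whereas the paper pushes $\tr_{\mathsf{Z}}$ through $\widehat{\mathcal{T}}$ acting on the state.
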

    \begin{proof}
        By \cref{lemma:prob-tildeT,lemma:prob-tildeT=prob-hatT}, we have 
        \begin{equation}
            \Pr\sbra{\mathcal{T} \textup{ accepts } U_{\pi}\rbra{\rho \otimes \sigma}U_{\pi}^\dag} = \tr\rbra*{\Pi \widehat{\mathcal{T}} \rbra*{ \bigotimes_{s \in \sbra{c_{r^*}}} \rbra*{\rho \otimes \sigma}_{\mathsf{X}_s} \otimes \bigotimes_{s = c_{r^*}+1}^{S} \rbra*{\ketbra{0}{0}^{\otimes m} \otimes \sigma}_{\mathsf{X}_s} \otimes \ketbra{0}{0}^{\otimes \ell}_{ \mathsf{W}} } \widehat{\mathcal{T}}^\dag}.
        \end{equation}
        By \cref{def:prob-barT}, we have 
        \begin{equation}
            \Pr\sbra{\overline{\mathcal{T}} \textup{ accepts } \rho} = \tr\rbra*{ \Pi \tr_{\mathsf{Z}}\rbra*{\overline{\mathcal{T}}_{\mathsf{XWZ}} \rbra*{ \bigotimes_{s \in \sbra{c_{r^*}}} \overline{\rho}_{\mathsf{X}_s} \otimes \bigotimes_{s = c_{r^*}+1}^{S} \ketbra{0}{0}^{\otimes n}_{\mathsf{X}_s} \otimes \ketbra{0}{0}^{\otimes \rbra{\ell + \rbra{n-m}S}}_{\mathsf{WZ}} } \overline{\mathcal{T}}^\dag_{\mathsf{XWZ}}} },
        \end{equation}
        which first traces out the system $\mathsf{Z}$. 
        By \cref{def:barT}, $\overline{\mathcal{T}}_{\mathsf{XWZ}}$ first applies the state-preparation circuit $\mathcal{O}_{\sigma}^{\otimes S}$ for $\sigma$, and then performs $\widehat{\mathcal{T}}_{\mathsf{XW}}$. 
        Note that
        \begin{align}
             & \tr_{\mathsf{Z}}\rbra*{\overline{\mathcal{T}}_{\mathsf{XWZ}} \rbra*{ \bigotimes_{s \in \sbra{c_{r^*}}} \overline{\rho}_{\mathsf{X}_s} \otimes \bigotimes_{s = c_{r^*}+1}^{S} \ketbra{0}{0}^{\otimes n}_{\mathsf{X}_s} \otimes \ketbra{0}{0}^{\otimes \rbra{\ell + \rbra{n-m}S}}_{\mathsf{WZ}} } \overline{\mathcal{T}}^\dag_{\mathsf{XWZ}}} \\
             & \quad = \widehat{\mathcal{T}}_{\mathsf{XW}} \rbra*{\bigotimes_{s \in \sbra{c_{r^*}}} \rho_{\overline{\mathsf{X}}_s} \otimes \bigotimes_{s = c_{r^*}+1}^{S} \ketbra{0}{0}^{\otimes m}_{\overline{\mathsf{X}}_s} \otimes \bigotimes_{s \in \sbra{S}} \tr_{\mathsf{Z}} \rbra*{\ketbra{\psi}{\psi}_{\mathsf{X}'_s\mathsf{Z}_s}} \otimes \ketbra{0}{0}^{\otimes \ell}_{ \mathsf{W}} } \widehat{\mathcal{T}}_{\mathsf{XW}}^\dag \\
             & \quad = \widehat{\mathcal{T}}_{\mathsf{XW}} \rbra*{\bigotimes_{s \in \sbra{c_{r^*}}} \rho_{\overline{\mathsf{X}}_s} \otimes \bigotimes_{s = c_{r^*}+1}^{S} \ketbra{0}{0}^{\otimes m}_{\overline{\mathsf{X}}_s} \otimes \sigma^{\otimes S}_{\mathsf{X}'} \otimes \ketbra{0}{0}^{\otimes \ell}_{ \mathsf{W}} } \widehat{\mathcal{T}}_{\mathsf{XW}}^\dag \\
             & \quad = \widehat{\mathcal{T}}_{\mathsf{XW}} \rbra*{ \bigotimes_{s \in \sbra{c_{r^*}}} \rbra*{\rho \otimes \sigma}_{\mathsf{X}_s} \otimes \bigotimes_{s = c_{r^*}+1}^{S} \rbra*{\ketbra{0}{0}^{\otimes m} \otimes \sigma}_{\mathsf{X}_s} \otimes \ketbra{0}{0}^{\otimes \ell}_{ \mathsf{W}} } \widehat{\mathcal{T}}_{\mathsf{XW}}^\dag,
        \end{align}
        where $\ket{\psi}$ is defined by \cref{def:psi} and $\overline{\mathsf{X}}_s$ denotes the registers $\mathsf{X}_{s, 1} \dots \mathsf{X}_{s, m}$. 
        
        Therefore, we conclude that $\Pr\sbra{\overline{\mathcal{T}} \textup{ accepts } \rho} = \Pr\sbra{\mathcal{T} \textup{ accepts } U_{\pi}\rbra{\rho \otimes \sigma}U_{\pi}^\dag}$. 
    \end{proof}

    \section{Applications}

    In this section, we present the applications of our main theorem. 

    \subsection{Purity estimation} \label{sec:purity}

    We first study the problem of purity estimation. 
    The definitions of its variants are given as follows. 

    \begin{definition} [Purity estimation]
        Let $n \geq 1$ be an integer and $0 \leq a < b \leq 1$. 
        We define $\textsc{Purity}\sbra{n, a, b}$ to be an $n$-qubit property such that
        \begin{align}
            \textsc{Purity}\sbra{n, a, b}^{\textup{yes}} & = \set{\rho \in \mathcal{D}\rbra{\mathcal{H}_2^{\otimes n}}}{\tr\rbra{\rho^2} \geq b}, \\
            \textsc{Purity}\sbra{n, a, b}^{\textup{no}} & = \set{\rho \in \mathcal{D}\rbra{\mathcal{H}_2^{\otimes n}}}{\tr\rbra{\rho^2} \leq a}. 
        \end{align}
    \end{definition}

    First of all, we note that purity is permutation-invariant. 

    \begin{fact} \label{fact:purity-perm-invariant}
        $\textsc{Purity}\sbra{n, a, b}$ is permutation-invariant. 
    \end{fact}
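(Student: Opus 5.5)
The plan is to show directly that the purity is invariant under conjugation by the qubit-permutation operators $U_\pi$. Since the yes/no sets of $\textsc{Purity}\sbra{n, a, b}$ are defined solely by thresholds on $\tr\rbra{\rho^2}$, it suffices to prove that $\tr\rbra{\rbra{U_\pi \rho U_\pi^\dag}^2} = \tr\rbra{\rho^2}$ for every $\pi \in \Sym\rbra{n}$ and every $\rho \in \mathcal{D}\rbra{\mathcal{H}_2^{\otimes n}}$.

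First, I would check that $U_\pi$ as defined in \cref{eq:def-U-pi} is unitary. On product computational basis states $\ket{i_1}\ket{i_2}\dots\ket{i_n}$, $U_\pi$ maps to $\ket{i_{\pi\rbra{1}}}\dots\ket{i_{\pi\rbra{n}}}$. This is a bijection of the computational basis onto itself, so extending linearly gives a permutation matrix, which is unitary. Consequently, $U_\pi \rho U_\pi^\dag$ is again a density operator, so it lies in $\mathcal{D}\rbra{\mathcal{H}_2^{\otimes n}}$.

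Second, using $U_\pi^\dag U_\pi = I$ together with the cyclicity of the trace, I would compute
\[
\tr\rbra{\rbra{U_\pi \rho U_\pi^\dag}^2} = \tr\rbra{U_\pi \rho^2 U_\pi^\dag} = \tr\rbra{\rho^2}.
\]
This gives the two required membership statements. If $\tr\rbra{\rho^2} \geq b$, then the same holds for $U_\pi\rho U_\pi^\dag$. If $\tr\rbra{\rho^2} \leq a$, then likewise. The converse directions follow by applying the same argument to $\pi^{-1}$, since $U_{\pi^{-1}} = U_\pi^{-1}$. This establishes the "if and only if" required by \cref{def:perm-inv} for both $X = \textup{yes}$ and $X = \textup{no}$.

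I do not expect any real obstacle. The only point that needs care is the unitarity of $U_\pi$, which follows from the fact that it permutes a basis. In fact, the purity is invariant under conjugation by every unitary, of which permutation invariance is a special case.
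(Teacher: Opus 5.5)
Your proof is correct and takes the same route as the paper, which just says that purity is unitarily invariant and hence permutation-invariant. You also check that $U_\pi$ is a permutation matrix (hence unitary) and write out the trace-cyclicity computation, which is exactly the detail the paper leaves implicit.
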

    \begin{proof}
        Actually, it is unitarily invariant, and thus permutation-invariant.
    \end{proof}

    Then, we can establish a general lower bound for $\textsc{Purity}\sbra{n, a, b}$.

    \begin{lemma} \label{lemma:general-purity}
        $\mathsf{T}\rbra{\textsc{Purity}\sbra{n, a, b}} \geq n \cdot \mathsf{S}\rbra{\textsc{Purity}\sbra{1, a, b}}$.
    \end{lemma}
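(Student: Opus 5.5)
The plan is to apply \cref{thm:main-sp} with $\mathcal{P}_n = \textsc{Purity}\sbra{n, a, b}$ and $\mathcal{Q}_1 = \textsc{Purity}\sbra{1, a, b}$. Two hypotheses are needed. First, $\mathcal{P}_n$ must be permutation-invariant, which is exactly \cref{fact:purity-perm-invariant}. Second, we need $\mathcal{Q}_1 \hookrightarrow \mathcal{P}_n$, i.e., an $\rbra{n-1}$-qubit embedding state $\sigma$ as in \cref{def:embed}.

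For the embedding I would take the pure state $\sigma = \ketbra{0}{0}^{\otimes \rbra{n-1}}$. Purity is multiplicative under tensor products:
\[
\tr\rbra{\rbra{\rho\otimes\sigma}^2} = \tr\rbra{\rho^2}\tr\rbra{\sigma^2} = \tr\rbra{\rho^2}.
\]
Hence for $\rho \in \textsc{Purity}\sbra{1,a,b}^{\textup{yes}}$ we have $\tr\rbra{\rbra{\rho\otimes\sigma}^2} \geq b$, so $\rho\otimes\sigma \in \textsc{Purity}\sbra{n,a,b}^{\textup{yes}}$. Likewise, $\rho$ in the no set gives $\rho\otimes\sigma$ with purity at most $a$, which lies in the no set. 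This establishes $\mathcal{Q}_1 \xhookrightarrow{\sigma} \mathcal{P}_n$.

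The case $n = 1$ needs separate handling, since \cref{def:embed} requires $m < n$. Here the claimed bound reads $\mathsf{T} \geq \mathsf{S}$ for the same property, which holds by the trivial relation $\mathsf{T}\rbra{\mathcal{P}_n} \geq \mathsf{S}\rbra{\mathcal{P}_n}$ noted in the introduction. With both hypotheses in place for $n \geq 2$, \cref{thm:main-sp} yields the stated inequality directly.

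There is no real obstacle. The only point needing care is that the tensor identity preserves the thresholds exactly, which holds because $\sigma$ is pure, and this is precisely why a pure embedding state is chosen.
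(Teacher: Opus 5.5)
Your proposal is correct and matches the paper's proof: you embed via $\sigma = \ketbra{0}{0}^{\otimes \rbra{n-1}}$, invoke the permutation invariance of purity (\cref{fact:purity-perm-invariant}), and apply \cref{thm:main-sp}. Your separate treatment of $n = 1$ through the trivial bound $\mathsf{T} \geq \mathsf{S}$ is a small piece of care the paper leaves implicit, since \cref{def:embed} requires $m < n$.
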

    \begin{proof}
        It can be verified that $\textsc{Purity}\sbra{1, a, b} \xhookrightarrow[]{\ketbra{0}{0}^{\otimes \rbra{n-1}}} \textsc{Purity}\sbra{n, a, b}$.
        Note that $\textsc{Purity}\sbra{n, a, b}$ is permutation-invariant (by \cref{fact:purity-perm-invariant}). 
        Then, the proof is completed by applying \cref{thm:main-sp}. 
    \end{proof}

    To establish tight lower bounds on the time complexity of purity estimation, we mention the known lower bounds on its sample complexity in certain cases. 

    \begin{lemma} [Sample lower bounds for purity estimation, adapted from \cite{SW22,CWLY23,CWZ24,GHYZ24}] \label{lemma:samp-purity}
        We have 
        $\mathsf{S}\rbra{\textsc{Purity}\sbra{1, 1 - \varepsilon, 1}} = \Omega\rbra{1/\varepsilon}$ for $0 < \varepsilon \leq \frac{1}{2}$ and $\mathsf{S}\rbra{\textsc{Purity}\sbra{1, \frac{5}{9} - \varepsilon, \frac{5}{9}}} = \Omega\rbra{1/\varepsilon^2}$ for $0 < \varepsilon < \frac{1}{36}$. 
    \end{lemma}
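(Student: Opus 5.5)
The plan is to prove both bounds with one argument. By unitary averaging I reduce distinguishing quantum states to distinguishing two classical coin distributions, and then I bound the total variation distance between the coin distributions.

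\textbf{Reduction to coins.} Fix $S$ and suppose $\mathcal{T}$ is a tester for $\textsc{Purity}\sbra{1,a,b}$ with $S$ samples. For $\lambda\in[0,1]$ let $D_\lambda=\operatorname{diag}(\lambda,1-\lambda)$. Every state $UD_\lambda U^\dag$ has purity $\lambda^2+(1-\lambda)^2$, and purity is unitarily invariant (\cref{fact:purity-perm-invariant}). Pick $\lambda_{\rm yes},\lambda_{\rm no}$ whose purities are $\geq b$ and $\leq a$ respectively. Then $\mathcal{T}$ accepts every $UD_{\lambda_{\rm yes}}U^\dag$ with probability $\geq 2/3$ and every $UD_{\lambda_{\rm no}}U^\dag$ with probability $\leq 1/3$. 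Averaging over Haar-random $U$, it accepts $\mathbb{E}_U (UD_{\lambda}U^\dag)^{\otimes S}$ with probability $\geq 2/3$ for $\lambda=\lambda_{\rm yes}$ and $\leq 1/3$ for $\lambda=\lambda_{\rm no}$. Hence these two averaged states are at trace distance at least $1/3$.

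Now observe that $D_\lambda^{\otimes S}$ is a diagonal (classical) state encoding the distribution $\mathrm{Bern}(\lambda)^{\otimes S}$ on $\{0,1\}^S$. The averaged state is the image of $D_\lambda^{\otimes S}$ under the fixed channel $\Phi(X)=\mathbb{E}_U U^{\otimes S}XU^{\dag\otimes S}$, which does not depend on $\lambda$. By data processing, $1/3 \leq d_{\mathrm{TV}}\rbra{\mathrm{Bern}(\lambda_{\rm yes})^{\otimes S},\mathrm{Bern}(\lambda_{\rm no})^{\otimes S}}$. It then remains to show this classical distance is small unless $S$ is large.

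\textbf{Purity testing.} Take $\lambda_{\rm yes}=1$, which gives purity $1$. Take $\lambda_{\rm no}=1-\delta$ with $\delta$ chosen so that $(1-\delta)^2+\delta^2=1-\varepsilon$, i.e.\ $2\delta(1-\delta)=\varepsilon$. Then $\delta\leq\varepsilon$, and such a $\delta\le 1/2$ exists since $\varepsilon\leq 1/2$. The total variation distance between the point mass on $1^S$ and $\mathrm{Bern}(1-\delta)^{\otimes S}$ is $1-(1-\delta)^S\leq S\delta\leq S\varepsilon$. Therefore $S\geq 1/(3\varepsilon)$, which gives $\Omega(1/\varepsilon)$.

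\textbf{Purity estimation.} Purity $5/9$ corresponds to $\lambda_{\rm yes}=2/3$. Take $\lambda_{\rm no}=2/3-\eta$ with purity exactly $5/9-\varepsilon$. Since $\frac{d}{d\lambda}(\lambda^2+(1-\lambda)^2)=4\lambda-2$ equals $2/3$ at $\lambda=2/3$, we get $\eta=\Theta(\varepsilon)$. The condition $\varepsilon<1/36$ keeps $\lambda_{\rm no}$ inside a fixed compact subinterval of $(1/2,1)$, so the constants are uniform. For two Bernoulli parameters bounded away from $0$ and $1$, standard bounds (Pinsker together with $\mathrm{KL}=O(\eta^2)$ per sample, or Hellinger tensorization) give $d_{\mathrm{TV}}\leq O(\sqrt{S}\,\eta)$. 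Hence $S=\Omega(1/\eta^2)=\Omega(1/\varepsilon^2)$.

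\textbf{Main obstacle.} The only delicate point is the averaging step: the channel $\Phi$ must be independent of $\lambda$, and the tester's guarantee must pass to the averaged states. The first holds because conjugation by $U^{\otimes S}$ does not depend on $\lambda$. The second holds because the acceptance probability is linear in the input state and the guarantee holds pointwise for every $U$. The rest is routine estimation of the Bernoulli distances.
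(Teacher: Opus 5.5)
Your proof is correct and uses exactly the paper's hard instances: $\ketbra{0}{0}$ versus $(1-\delta)\ketbra{0}{0}+\delta\ketbra{1}{1}$, and $\frac{2}{3}\ketbra{0}{0}+\frac{1}{3}\ketbra{1}{1}$ versus a $\Theta(\varepsilon)$ perturbation of it. The only real difference is that the paper defers the indistinguishability bound to the cited literature, while you derive it directly from the total variation distance between Bernoulli products (and your parameter choices for $\delta$ and $\eta$ check out in the stated ranges of $\varepsilon$).

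The Haar-averaging step is harmless but unnecessary. The diagonal states $D_{\lambda_{\mathrm{yes}}}$ and $D_{\lambda_{\mathrm{no}}}$ are themselves yes and no instances, and they commute. So the tester's guarantee already gives $\frac{1}{2}\lVert D_{\lambda_{\mathrm{yes}}}^{\otimes S}-D_{\lambda_{\mathrm{no}}}^{\otimes S}\rVert_1\geq 1/3$, which is exactly the classical total variation distance, with no channel or data-processing argument needed.
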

    \begin{proof}
        The proof of each case is well-known in the literature. 
        Here, we have to emphasize that their hard instances can involve only $1$-qubit states.
        For $\textsc{Purity}\sbra{1, 1 - \varepsilon, 1}$, the hard instance is:
        \begin{equation}
            \rho^{\textup{yes}} = \ketbra{0}{0}, \qquad \rho^{\textup{no}} = \rbra{1-\delta}\ketbra{0}{0} + \delta\ketbra{1}{1}. 
        \end{equation}
        For $\textsc{Purity}\sbra{1, \frac{5}{9} - \varepsilon, \frac{5}{9}}$, the hard instance is:
        \begin{equation}
            \rho^{\textup{yes}} = \frac{2}{3}\ketbra{0}{0} + \frac{1}{3}\ketbra{1}{1}, \qquad \rho^{\textup{no}} = \rbra*{\frac{2}{3}-\delta}\ketbra{0}{0} + \rbra*{\frac{1}{3}+\delta}\ketbra{1}{1}. 
        \end{equation}
        In both cases, $\delta = \Theta\rbra{\varepsilon}$.
    \end{proof}

    Now we are ready to establish tight lower bounds on the time complexity of purity estimation. 

    \begin{theorem} [Quantum time lower bounds for purity estimation] \label{thm:lb-purity}
        We have $\mathsf{T}\rbra{\textsc{Purity}\sbra{n, 1 - \varepsilon, 1}} \geq \Omega\rbra{n/\varepsilon}$ for $0 < \varepsilon \leq \frac{1}{2}$ and $\mathsf{T}\rbra{\textsc{Purity}\sbra{n, \frac{5}{9} - \varepsilon, \frac{5}{9}}} \geq \Omega\rbra{n/\varepsilon^2}$ for $0 < \varepsilon < \frac{1}{36}$. 
    \end{theorem}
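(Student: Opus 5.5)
The plan is to combine \cref{lemma:general-purity} with the one-qubit sample lower bounds of \cref{lemma:samp-purity}. Both cases of the statement have the same form $\textsc{Purity}\sbra{n, a, b}$: the first with $(a,b) = (1-\varepsilon, 1)$, the second with $(a,b) = (\frac{5}{9}-\varepsilon, \frac{5}{9})$. In each case $0 \leq a < b \leq 1$ holds in the stated parameter ranges, so the property is well defined.

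\cref{lemma:general-purity} gives $\mathsf{T}\rbra{\textsc{Purity}\sbra{n, a, b}} \geq n \cdot \mathsf{S}\rbra{\textsc{Purity}\sbra{1, a, b}}$. This lemma in turn rests on the embedding $\textsc{Purity}\sbra{1,a,b} \xhookrightarrow[]{\ketbra{0}{0}^{\otimes(n-1)}} \textsc{Purity}\sbra{n,a,b}$, which is valid because $\tr\rbra{(\rho\otimes\ketbra{0}{0}^{\otimes(n-1)})^2} = \tr\rbra{\rho^2}$. Together with \cref{fact:purity-perm-invariant} and \cref{thm:main-sp}, this yields the lemma.

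We then substitute the sample bounds from \cref{lemma:samp-purity}:
\begin{itemize}
\item For $0<\varepsilon\leq\frac12$, $\mathsf{S}\rbra{\textsc{Purity}\sbra{1,1-\varepsilon,1}} = \Omega\rbra{1/\varepsilon}$, which gives $\mathsf{T} \geq \Omega\rbra{n/\varepsilon}$.
\item For $0<\varepsilon<\frac1{36}$, $\mathsf{S}\rbra{\textsc{Purity}\sbra{1,\frac59-\varepsilon,\frac59}} = \Omega\rbra{1/\varepsilon^2}$, which gives $\mathsf{T}\geq \Omega\rbra{n/\varepsilon^2}$.
\end{itemize}
The constants hidden in $\Omega$ are independent of $n$, since the sample lower bounds concern one-qubit instances only.

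The only point needing care is that the hard instances behind \cref{lemma:samp-purity} must genuinely be one-qubit states, so that the lower bound applies to $\textsc{Purity}\sbra{1,\cdot,\cdot}$ rather than only to high-dimensional versions. The paper already asserts this and lists the instances; as a sanity check I would confirm the purities against the thresholds.
\begin{itemize}
\item For $\rho^{\textup{no}} = (1-\delta)\ketbra00+\delta\ketbra11$, the purity is $1-2\delta+2\delta^2 \leq 1-\varepsilon$ once $\delta=\Theta(\varepsilon)$ is chosen suitably.
\item For $\rho^{\textup{no}} = (\frac23-\delta)\ketbra00 + (\frac13+\delta)\ketbra11$, the purity is $\frac59 - \frac23\delta + 2\delta^2 \leq \frac59-\varepsilon$ for $\delta$ a suitable constant multiple of $\varepsilon$ in the given range.
\end{itemize}
Distinguishing $\rho^{\textup{yes}}$ from $\rho^{\textup{no}}$ then requires $\Omega(1/\delta)$ and $\Omega(1/\delta^2)$ copies respectively, by the standard fidelity and infidelity-of-copies argument. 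With these checks in place, the theorem follows directly.
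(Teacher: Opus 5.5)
Your proposal is correct and matches the paper's proof, which simply combines \cref{lemma:general-purity} with the one-qubit sample lower bounds of \cref{lemma:samp-purity}. Your purity computations for the hard instances, $1-2\delta+2\delta^2$ and $\frac{5}{9}-\frac{2}{3}\delta+2\delta^2$, are correct and confirm that suitable choices with $\delta=\Theta(\varepsilon)$ exist in the stated ranges; the paper leaves this check implicit.
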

    \begin{proof}
        This is obtained by applying \cref{lemma:general-purity} with \cref{lemma:samp-purity}. 
    \end{proof}

    \subsection{Productness testing}

    In this section, we consider productness testing. 
    The formal definition is given as follows. 

    \begin{definition} [Productness testing]
        Let $n, m \geq 1$ be integers and $0 < \varepsilon < 1$. 
        We define $\textsc{Productness}\sbra{n, m, \varepsilon}$ to be an $nm$-qubit property such that 
        \begin{equation}
        \textsc{Productness}\sbra{n, m, \varepsilon}^{\textup{yes}} = \set{\ket{\psi} = \bigotimes_{j \in \sbra{n}} \ket{\psi_j}}{ \ket{\psi_j} \in \mathcal{H}_2^{\otimes m} \textup{ for } j \in \sbra{n} },
    \end{equation}
    \begin{equation}
        \textsc{Productness}\sbra{n, m, \varepsilon}^{\textup{no}} = \set{ \ket{\psi} }{ \mathrm{T}\rbra*{\ket{\psi}, \bigotimes_{j \in \sbra{n}} \ket{\psi_j}} \geq \varepsilon \textup{ for any } \ket{\psi_j} \in \mathcal{H}_2^{\otimes m}, j \in \sbra{n} }.
    \end{equation}
    \end{definition}
    
    Although a matching quantum sample complexity lower bound $\Omega\rbra{1/\varepsilon^2}$ for productness testing was already noted in \cite{SW22,CWZ24}, we re-derive it here for completeness. 
    \begin{lemma} [Sample lower bound for productness testing, adapted from \cite{SW22,CWZ24}] \label{lemma:s-product}
        For every $0 < \varepsilon \leq \frac{1}{2}$, we have $\mathsf{S}\rbra{\textsc{Productness}\sbra{2, 1, \varepsilon}} \geq \Omega\rbra{{1}/{\varepsilon^2}}$. 
    \end{lemma}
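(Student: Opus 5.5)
The plan is to reduce from a two-point distinguishing problem between pure $2$-qubit states and lower bound its sample complexity by a trace-distance argument on $S$ copies. Fix the product state $\ket{\psi^{\textup{yes}}} = \ket{0}\ket{0}$, and for a parameter $\delta$ take
\begin{equation}
    \ket{\psi^{\textup{no}}} = \sqrt{1-\delta^2}\,\ket{0}\ket{0} + \delta\,\ket{1}\ket{1}.
\end{equation}
Any tester for $\textsc{Productness}\sbra{2, 1, \varepsilon}$ with $S$ samples must distinguish $\ketbra{\psi^{\textup{yes}}}{\psi^{\textup{yes}}}^{\otimes S}$ from $\ketbra{\psi^{\textup{no}}}{\psi^{\textup{no}}}^{\otimes S}$ with success probability at least $2/3$. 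This holds as long as $\ket{\psi^{\textup{no}}}$ lies in the no set, i.e., is $\varepsilon$-far from every product state.

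First, lower bound the distance of $\ket{\psi^{\textup{no}}}$ to product states. The maximal overlap $\max \abs{\langle \phi_1\phi_2 | \psi^{\textup{no}}\rangle}^2$ equals the square of the largest Schmidt coefficient, namely $1-\delta^2$. For pure states $\mathrm{T} = \sqrt{1 - \abs{\langle\cdot|\cdot\rangle}^2}$, so the minimal trace distance to product states is exactly $\delta$. Choosing $\delta = \varepsilon$ (valid since $\varepsilon \le 1/2$) places $\ket{\psi^{\textup{no}}}$ in the no set.

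Second, upper bound the distinguishability. By Helstrom, success probability $2/3$ requires $\mathrm{T}(\psi_{\textup{yes}}^{\otimes S}, \psi_{\textup{no}}^{\otimes S}) \ge 1/3$. For pure states this trace distance is $\sqrt{1 - \abs{\langle \psi^{\textup{yes}}|\psi^{\textup{no}}\rangle}^{2S}} = \sqrt{1-(1-\varepsilon^2)^S}$. Requiring this to be at least $1/3$ gives $(1-\varepsilon^2)^S \le 8/9$. Since $(1-\varepsilon^2)^S \ge 1 - S\varepsilon^2$, we get $S\varepsilon^2 \ge 1/9$, i.e., $S = \Omega(1/\varepsilon^2)$.

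The only delicate step is the first: identifying the distance to the set of product states with the Schmidt coefficient. This is standard, since the maximum of $\abs{\langle \phi_1 \otimes \phi_2|\psi\rangle}$ is the operator norm of the coefficient matrix. Note that the hard instance involves only $2$ qubits ($n=2$, $m=1$), as needed for the later embedding via \cref{thm:main}.
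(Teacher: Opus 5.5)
Your argument is correct, and it takes a different route from the paper.

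\textbf{The paper's route.} The paper does not analyze a hard instance directly. It invokes the proof of \cite[Theorem 4.13]{CWZ24} to obtain $\mathsf{S}(\textsc{Productness}[2,1,\varepsilon]) \geq \mathsf{S}(\textsc{Purity}[1,1-\varepsilon^2,1])$. That reduction relies on the local-test machinery of CWZ24: unitarily invariant properties of bipartite pure states can be tested optimally by acting on one part only, so a productness tester becomes a purity tester for the marginal. The paper then plugs in the $\Omega(1/\varepsilon)$ purity-testing bound of \cref{lemma:samp-purity}, with $\varepsilon^2 \leq 1/4$ in place of $\varepsilon$.

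\textbf{Your route.} You take the purification of that same purity hard instance, with $\delta=\varepsilon^2$, and apply the Helstrom bound to it directly. Your Schmidt-coefficient computation certifies that $\sqrt{1-\varepsilon^2}\ket{00}+\varepsilon\ket{11}$ is at trace distance exactly $\varepsilon$ from the nearest product state. This meets the non-strict threshold in the definition of the no set. The pure-state formula $\sqrt{1-(1-\varepsilon^2)^S}$, together with Bernoulli's inequality, then gives $S \geq 1/(9\varepsilon^2)$.

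\textbf{Comparison.} Your argument is self-contained, avoids the local-test theorem entirely, and yields an explicit constant. It also makes visible that the hard instance is a pure $2$-qubit state, which is what the embedding in \cref{thm:prod-test} requires. The paper's version is shorter on the page and frames productness as purity of the reduced state, but nothing in this lemma needs that extra structure.
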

    \begin{proof}
        According to the proof of \cite[Theorem 4.13]{CWZ24}, we have
        \begin{equation} \label{eq:product-purity}
            \mathsf{S}\rbra{\textsc{Productness}\sbra{2, 1, \varepsilon}} \geq \mathsf{S}\rbra{\textsc{Purity}\sbra{1, 1-\varepsilon^2, 1}}.
        \end{equation}
        Then, by \cref{lemma:samp-purity}, we have $\eqref{eq:product-purity} \geq \Omega\rbra{1/\varepsilon^2}$. 
    \end{proof}

    Now we are ready to establish a tight lower bound on the quantum time complexity of productness testing. 

    \begin{theorem} [Quantum time lower bounds for productness testing] \label{thm:prod-test}
        For $n \geq 1$, $m \geq 1$, and $0 < \varepsilon \leq \frac{1}{2}$, we have $\mathsf{T}\rbra{\textsc{Productness}\sbra{2n, m, \varepsilon}} \geq \Omega\rbra{{nm}/{\varepsilon^2}}$.
    \end{theorem}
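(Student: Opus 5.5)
Following the proof sketch of \cref{corollary:prod-test}, we apply \cref{thm:main} with $\mathcal{Q}_2 = \textsc{Productness}\sbra{2, 1, \varepsilon}$ (a $2$-qubit property) and $\mathcal{P}_{2nm} = \textsc{Productness}\sbra{2n, m, \varepsilon}$. We then combine the result with \cref{lemma:s-product}. Since the $2n$ parts are ordered blocks of $m$ qubits, the property is not fully permutation-invariant. Instead, we use the group $\mathcal{G} = \Sym\rbra{A_1} \times \Sym\rbra{A_2}$, where $A_1$ consists of the qubits of parts $1, \dots, n$ and $A_2$ of parts $n+1, \dots, 2n$. To match the convention that $\rho$ occupies qubits $\sbra{2}$, we first relabel qubits, which is harmless since time complexity is invariant under relabeling wires. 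We put qubit $1$ as the first qubit of part $1$ (in $A_1$) and qubit $2$ as the first qubit of part $n+1$ (in $A_2$). Then $\abs{A_j} = nm$ and $\abs{A_j \cap \sbra{2}} = 1$ for $j = 1, 2$, giving $R = nm$. Therefore \cref{thm:main} yields $\mathsf{T}\rbra{\mathcal{P}} \geq nm \cdot \mathsf{S}\rbra{\mathcal{Q}_2} = \Omega\rbra{nm/\varepsilon^2}$.

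The embedding state is $\sigma = \ketbra{0}{0}^{\otimes \rbra{2nm-2}}$. The substance of the proof is verifying $\mathcal{G}$-invariant embeddability. Fix a $2$-qubit pure state $\ket{\phi}$ and $\pi \in \mathcal{G}$. The state $U_\pi\rbra{\ket{\phi}\otimes\ket{0\cdots0}}$ has $\ket{\phi}$ supported on one qubit $a \in A_1$ and one qubit $b \in A_2$, with every other qubit in $\ket{0}$. Let $P$ be the part containing $a$ and $P'$ the part containing $b$; then $P \in \cbra{1,\dots,n}$ and $P' \in \cbra{n+1,\dots,2n}$, so $P \neq P'$. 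This is exactly why the split into two halves is needed: a general permutation could put both qubits in the same part.

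\emph{Yes case.} If $\ket{\phi} = \ket{\alpha}\ket{\beta}$, the state is a product across parts. Part $P$ holds $\ket{\alpha}$ tensored with zeros, part $P'$ holds $\ket{\beta}$ with zeros, and all other parts hold $\ket{0}^{\otimes m}$.

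\emph{No case.} Suppose the state were $\varepsilon'$-close with $\varepsilon' < \varepsilon$ to a product $\bigotimes_j \ket{\psi_j}$. The state is $\ket{\phi}_{ab} \otimes \ket{0\cdots0}_{\text{rest}}$, and for pure states trace distance is a function of the overlap. Maximizing the overlap over product states across parts is at most the maximum over states that are product with respect to the coarser bipartition $\cbra{a \text{ side}} \mid \cbra{b \text{ side}}$. Concretely, the overlap $\abs{\bra{\phi}\bra{0\cdots0}\bigotimes_j\ket{\psi_j}}$ is bounded by $\abs{\bra{\phi}\rbra{\ket{u}\otimes\ket{v}}}$, where $\ket{u} = \rbra{I \otimes \bra{0\cdots0}}\ket{\psi_P}$ is the unnormalized vector obtained by contracting the other qubits of part $P$, and $\ket{v}$ is defined similarly for $P'$. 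Both have norm at most $1$, so the overlap is at most $\max_{\alpha,\beta}\abs{\bra{\phi}\alpha\beta\rangle}$. Hence the trace distance of the embedded state to any $2n$-partite product is at least that of $\ket{\phi}$ to any $2$-qubit product, which is at least $\varepsilon$.

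The main obstacle is making this overlap/contraction inequality precise; the rest is bookkeeping. After that, \cref{thm:main} and \cref{lemma:s-product} finish the proof.
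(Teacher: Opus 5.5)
Your proposal is correct and follows essentially the paper's proof: the same embedding state $\ket{0}^{\otimes(2nm-2)}$, a two-block group $\mathcal{G}$ giving $R = nm$, and \cref{thm:main} combined with \cref{lemma:s-product}. The only cosmetic difference is that the paper splits the parts by parity and uses the interleaved numbering $(k-1)2n+j$, which puts qubits $1,2$ in parts $1,2$ with no separate relabeling step, whereas you split into first and second halves and relabel; your overlap-contraction argument for the no case correctly justifies the embeddability that the paper only asserts (``It can be shown that'').
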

    \begin{proof}
        $\textsc{Productness}\sbra{2n, m, \varepsilon}$ is a property of $2nm$-qubit pure quantum states. 
        Here, we number the $2nm$ qubits from $1$ to $2nm$ as follows:
        \begin{itemize}
            \item The $k$-th qubit of the $j$-th part is numbered $\rbra{k - 1}2n+j$ for $j \in \sbra{2n}$ and $k \in \sbra{m}$. 
        \end{itemize}
        Let 
        \begin{equation}
            \mathcal{G} = \Sym\rbra{A_0} \times \Sym\rbra{A_1} \subseteq \Sym\rbra{2nm},
        \end{equation}
        where
        \begin{equation}
            A_{b} = \set{\rbra{k - 1}2n+j}{j \in \sbra{2n} \textup{ with } j \equiv b \pmod 2, k \in \sbra{m}}
        \end{equation}
        for $b \in \cbra{0, 1}$.
        It can be shown that 
        \begin{equation}
            \textsc{Productness}\sbra{2, 1, \varepsilon} \xhookrightarrow[\mathcal{G}]{\ket{0}^{\otimes \rbra{2nm-2}}} \textsc{Productness}\sbra{2n, m, \varepsilon}.
        \end{equation}
        Then, by \cref{thm:main}, we have
        \begin{equation} \label{eq:t-s-product}
            \mathsf{T}\rbra{\textsc{Productness}\sbra{2n, m, \varepsilon}} \geq R \cdot \mathsf{S}\rbra{\textsc{Productness}\sbra{2, 1, \varepsilon}},
        \end{equation}
        where
        \begin{equation}
            R = \min_{b \in \cbra{0, 1}} \floor*{\frac{\abs{A_b}}{\abs{A_b \cap \sbra{2}}}} = nm.
        \end{equation}
        By \cref{lemma:s-product}, we have $\eqref{eq:t-s-product} \geq \Omega\rbra{{nm}/{\varepsilon^2}}$.
    \end{proof}

    \subsection{Inner product estimation}

    In this section, we consider the problem of inner product estimation. 
    The formal definition is given as follows.

    \begin{definition} [Inner product estimation]
        Let $n \geq 1$ be an integer and $0 \leq a < b \leq 1$. 
        We define $\textsc{InnerProduct}\sbra{n, a, b}$ to be a $2n$-qubit property such that
        \begin{align}
            \textsc{InnerProduct}\sbra{n, a, b}^{\textup{yes}} & = \set{ \rho \otimes \sigma }{\rho, \sigma \in \mathcal{D}\rbra{\mathcal{H}_2^{\otimes n}}, \tr\rbra{\rho\sigma} \geq b}, \\
            \textsc{InnerProduct}\sbra{n, a, b}^{\textup{no}} & = \set{ \rho \otimes \sigma }{\rho, \sigma \in \mathcal{D}\rbra{\mathcal{H}_2^{\otimes n}}, \tr\rbra{\rho\sigma} \leq a}. 
        \end{align}
    \end{definition}

    Note that $\textsc{InnerProduct}\sbra{n, a, b}$ is not unitarily invariant but it is still permutation-invariant. 

    \begin{fact} \label{fact:inner-prod}
        $\textsc{InnerProduct}\sbra{n, a, b}$ is permutation-invariant (from the perspective of $4$-dimensional qudits). 
    \end{fact}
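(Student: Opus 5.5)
The plan is to regroup the $2n$ qubits of $\rho\otimes\sigma$ into $n$ qudits of dimension $d=4$, where the $i$-th qudit consists of the $i$-th qubit of $\rho$ together with the $i$-th qubit of $\sigma$. Concretely, fix the unitary relabeling $W$ that maps the qubit ordering $(1,\dots,n,n+1,\dots,2n)$ to $(1,n+1,2,n+2,\dots,n,2n)$. The property $\textsc{InnerProduct}\sbra{n,a,b}$ is then viewed as an $n$-qudit property. Its yes/no sets are $\cbra{W(\rho\otimes\sigma)W^\dag}$ with $\tr(\rho\sigma)\geq b$ (resp.\ $\leq a$). The claim to prove is that for every $\pi\in\Sym\rbra{n}$ acting on the qudits by $U_\pi^{(4)}$, we have $U_\pi^{(4)} W(\rho\otimes\sigma)W^\dag U_\pi^{(4)\dag}\in\textsc{InnerProduct}^X$ if and only if $W(\rho\otimes\sigma)W^\dag\in\textsc{InnerProduct}^X$.

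The key step is an operator identity. Since each $4$-dimensional qudit is $\mathcal{H}_2\otimes\mathcal{H}_2$, the qudit permutation factors as $U_\pi^{(4)} = W\,(U_\pi^{(2)}\otimes U_\pi^{(2)})\,W^\dag$. Here $U_\pi^{(2)}$ is the $n$-qubit permutation operator of \cref{eq:def-U-pi}. I would verify this on product basis vectors $\bigotimes_i(\ket{\alpha_i}\ket{\beta_i})$: permuting qudits permutes the $\alpha$'s and the $\beta$'s simultaneously by the same $\pi$, which is exactly $U_\pi^{(2)}\otimes U_\pi^{(2)}$ in the unregrouped ordering. Linearity then extends the identity to all states. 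Consequently
\[
U_\pi^{(4)} W(\rho\otimes\sigma)W^\dag U_\pi^{(4)\dag} = W\bigl(\rho'\otimes\sigma'\bigr)W^\dag,\qquad \rho' = U_\pi^{(2)}\rho U_\pi^{(2)\dag},\ \sigma'=U_\pi^{(2)}\sigma U_\pi^{(2)\dag},
\]
so the image is again of product form. Unitary invariance of the trace (cyclicity) then gives $\tr(\rho'\sigma') = \tr(U_\pi^{(2)}\rho\sigma U_\pi^{(2)\dag}) = \tr(\rho\sigma)$. Hence the image lies in the same class (yes, no, or neither) as the original. Because $\pi^{-1}$ also lies in $\Sym\rbra{n}$, applying the same argument to $\pi^{-1}$ gives the converse direction, which establishes the ``if and only if'' required by \cref{def:perm-inv}.

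The only real obstacle is bookkeeping rather than mathematics. One must fix the regrouping convention so that the qudit version of \cref{def:perm-inv} (\cref{thm:main-sp-qudit}) applies literally. One must also check that $\rho'\otimes\sigma'$ is still a legitimate member of the property. The latter holds because the property is defined over all product pairs of $n$-qubit density operators, and $\rho',\sigma'$ are again density operators. I would also note explicitly that the argument does not need unitary invariance under general $U\otimes V$; it uses only that the \emph{same} permutation acts on both halves. This is precisely why the statement holds for qudits but would fail for arbitrary qubit permutations of all $2n$ qubits, such as one swapping a qubit of $\rho$ with a qubit of $\sigma$.
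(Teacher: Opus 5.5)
Your proposal is correct and follows the paper's argument. The paper likewise observes that $\rho\otimes\sigma$ lies in a given class if and only if $U_\pi\rho U_\pi^\dag\otimes U_\pi\sigma U_\pi^\dag$ does, and then regroups the $j$-th qubits of $\rho$ and $\sigma$ into the $j$-th $4$-dimensional qudit; your regrouping unitary $W$, the identity $U_\pi^{(4)} = W(U_\pi^{(2)}\otimes U_\pi^{(2)})W^\dag$, and the $\pi^{-1}$ argument for the converse make that terse proof rigorous.
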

    \begin{proof}
        Let $\pi \in \Sym\rbra{n}$ and $U_{\pi}$ defined by \cref{eq:def-U-pi}. 
        Then, for $X \in \cbra{\textup{yes}, \textup{no}}$, we have $\rho \otimes \sigma \in \textsc{InnerProduct}\sbra{n, a, b}^X$ if and only if $U_\pi \rho U_\pi^\dag \otimes U_\pi \sigma U_\pi^\dag \in \textsc{InnerProduct}\sbra{n, a, b}^X$. 
        If we regard the $j$-th qubit of $\rho$ and the $j$-th qubit of $\sigma$ as a whole as the $j$-th ($4$-dimensional) qudit of $\rho \otimes \sigma$, then  $\textsc{InnerProduct}\sbra{n, a, b}$ is permutation-invariant from this perspective of $4$-dimensional qudits.
    \end{proof}

    Then, we can establish a general lower bound for $\textsc{InnerProduct}\sbra{n, a, b}$.

    \begin{lemma} \label{lemma:inner-prod}
        $\mathsf{T}\rbra{\textsc{InnerProduct}\sbra{n, a, b}} \geq n \cdot \mathsf{S}\rbra{\textsc{InnerProduct}\sbra{1, a, b}}$.
    \end{lemma}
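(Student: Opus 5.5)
The plan is to apply the qudit version of the sample-to-time reduction, \cref{thm:main-sp-qudit}, with local dimension $d = 4$, mirroring the proof of \cref{lemma:general-purity}.

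We view a $2n$-qubit state $\rho \otimes \sigma$, with $\rho, \sigma \in \mathcal{D}\rbra{\mathcal{H}_2^{\otimes n}}$, as an $n$-qudit state on $\rbra{\mathcal{H}_4}^{\otimes n}$. The $j$-th qudit consists of the $j$-th qubit of $\rho$ together with the $j$-th qubit of $\sigma$. Under this regrouping, \cref{fact:inner-prod} says exactly that $\textsc{InnerProduct}\sbra{n, a, b}$ is a permutation-invariant $n$-qudit property. Likewise, $\textsc{InnerProduct}\sbra{1, a, b}$ is a $1$-qudit property, and the reordering of qubits into qudits is just a fixed relabeling of wires, which changes neither sample nor time complexity.

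Next we check the embedding $\textsc{InnerProduct}\sbra{1, a, b} \hookrightarrow \textsc{InnerProduct}\sbra{n, a, b}$ through the embedding state $\omega = \ketbra{00}{00}^{\otimes \rbra{n-1}}$, i.e., each of the remaining $n-1$ qudits is set to $\ket{0}\ket{0}$. For a $1$-qudit input $\rho_1 \otimes \sigma_1$, appending $\omega$ and regrouping back into qubits gives $\rbra{\rho_1 \otimes \ketbra{0}{0}^{\otimes(n-1)}} \otimes \rbra{\sigma_1 \otimes \ketbra{0}{0}^{\otimes(n-1)}}$. This is a valid instance of product form $\rho' \otimes \sigma'$, and
\[
\tr\rbra{\rho'\sigma'} = \tr\rbra{\rho_1\sigma_1} \cdot \tr\rbra{\ketbra{0}{0}}^{n-1} = \tr\rbra{\rho_1\sigma_1}.
\]
Hence yes-instances map to yes-instances and no-instances to no-instances. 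Combining this with the permutation invariance, \cref{thm:main-sp-qudit} yields $\mathsf{T}\rbra{\textsc{InnerProduct}\sbra{n, a, b}} \geq n \cdot \mathsf{S}\rbra{\textsc{InnerProduct}\sbra{1, a, b}}$.

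The main subtlety is not the qubit-versus-qudit bookkeeping; it is checking that the time model matches. In the qudit theorem an elementary gate is a two-qudit gate, which is a $4$-qubit gate. Any two-qubit gate on the original qubits acts within at most two qudits, so it is a legal two-qudit gate. The qudit time complexity is therefore at most the qubit time complexity, and the lower bound transfers to the qubit model. We also note that the property is defined only on product inputs $\rho\otimes\sigma$, but the embedded states preserve this form, so the promise is respected.
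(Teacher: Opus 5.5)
Your proposal is correct and follows the paper's proof: you regroup the $j$-th qubits of $\rho$ and $\sigma$ into $4$-dimensional qudits, embed $\textsc{InnerProduct}\sbra{1, a, b}$ through $\ketbra{0}{0}^{\otimes (n-1)} \otimes \ketbra{0}{0}^{\otimes (n-1)}$, get permutation invariance from \cref{fact:inner-prod}, and apply the qudit version \cref{thm:main-sp-qudit}. Your extra remark is sound and is something the paper leaves implicit: every two-qubit gate is a legal two-qudit gate, so the lower bound also carries over to the qubit gate model.
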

    \begin{proof}
        It can be verified that $\textsc{InnerProduct}\sbra{1, a, b} \xhookrightarrow[]{\ketbra{0}{0}^{\otimes \rbra{n-1}} \otimes \ketbra{0}{0}^{\otimes \rbra{n-1}}} \textsc{InnerProduct}\sbra{n, a, b}$.
        Note that $\textsc{InnerProduct}\sbra{n, a, b}$ is permutation-invariant (by \cref{fact:inner-prod}). 
        Then, the proof is completed by applying the $4$-dimensional qudit version of \cref{thm:main-sp} (as noted in \cref{remark:qudit}). 
    \end{proof}

    To show a lower bound on the quantum time complexity of inner product estimation, we need the following sample lower bound. 

    \begin{lemma} [Sample lower bounds for inner product estimation, adapted from {\cite[Lemma 13 in the full version]{ALL22}}] \label{lemma:samp-inner-prod}
        For every $0 < \varepsilon < \frac{1}{2}$, we have
        $\mathsf{S}\rbra{\textsc{InnerProduct}\sbra{1, \frac{1}{2}-\varepsilon, \frac{1}{2}+\varepsilon}} = \Omega\rbra{1/\varepsilon^2}$. 
    \end{lemma}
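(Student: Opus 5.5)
The plan is to reduce to the single-qubit pure-state inner-product lower bound of \cite{ALL22}. This amounts to exhibiting a pair of hard instances of one-qubit states and checking that distinguishing them needs $\Omega\rbra{1/\varepsilon^2}$ samples. Concretely, let $\ket{\psi_\theta} = \cos\theta\ket{0} + \sin\theta\ket{1}$. I would take the input $\rho\otimes\sigma$ with $\rho = \ketbra{0}{0}$ fixed and known, or better, with $\rho$ and $\sigma$ both drawn from a rotationally symmetric ensemble of pure states at a prescribed overlap $\abs{\braket{\psi|\phi}}^2 = \frac{1}{2}\pm\varepsilon$. The yes and no inputs are then the averaged states
\[
\Xi_\pm^{(S)} = \mathbb{E}\sbra*{\rbra*{\ketbra{\psi}{\psi}\otimes\ketbra{\phi}{\phi}}^{\otimes S}},
\]
the expectation being over a Haar-random (or circle-random) global rotation $U$ applied to both states. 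Since a tester succeeds on every instance, it succeeds on the mixtures. It therefore suffices to show that $\mathrm{T}\rbra{\Xi_+^{(S)},\Xi_-^{(S)}} \leq 1/3$ unless $S = \Omega\rbra{1/\varepsilon^2}$.

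The key steps are as follows. First, by unitary invariance (Schur--Weyl), $\Xi_\pm^{(S)}$ is supported on $\Sym^S\rbra{\mathbb{C}^2}\otimes\Sym^S\rbra{\mathbb{C}^2}$. It commutes with $U^{\otimes 2S}$, so it decomposes into blocks indexed by the total-spin irreps $J \in \cbra{0,\dots,S}$ appearing in $\Sym^S\otimes\Sym^S$, each with multiplicity one. Hence $\Xi_\pm^{(S)} = \bigoplus_J p_\pm(J)\, I_J/\dim J$. The trace distance therefore equals the total-variation distance between the two distributions $p_+$ and $p_-$ on $J$. Second, I compute $p_\pm(J) = \bra{\psi}^{\otimes S}\bra{\phi}^{\otimes S} P_J \ket{\psi}^{\otimes S}\ket{\phi}^{\otimes S}$. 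This is a Clebsch--Gordan quantity depending only on the overlap $t = \abs{\braket{\psi|\phi}}^2$; explicitly it is a binomial-like (Krawtchouk/hypergeometric) law in $t$. Third, I bound the Hellinger or chi-squared distance between the laws at $t = \frac{1}{2}+\varepsilon$ and $t=\frac{1}{2}-\varepsilon$. Heuristically, this behaves like the distance between $\mathrm{Bin}(S,t)$-type laws, which is $O\rbra{S\varepsilon^2}$ near $t=1/2$. This gives total-variation distance $O\rbra{\sqrt{S}\,\varepsilon}$, forcing $S = \Omega\rbra{1/\varepsilon^2}$.

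The main obstacle is the third step: getting a clean closed form for $p_t(J)$ and controlling its Fisher information uniformly near $t=1/2$. I would first try the known identity that measuring $J$ on $\ket{\psi}^{\otimes S}\ket{\phi}^{\otimes S}$ is equivalent, in distribution, to the statistics of $S$ independent SWAP tests post-processed. If this fails, I would fall back on bounding the fidelity of the purifications directly, using $\mathrm{F} \geq \abs{\braket{\psi_+|\psi_-}}^{2S}\cdot\abs{\braket{\phi_+|\phi_-}}^{2S}$ with $\ket{\psi_+},\ket{\psi_-}$ and $\ket{\phi_+},\ket{\phi_-}$ angle-$O\rbra{\varepsilon}$ perturbations of each other. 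This gives fidelity $\geq \rbra{1-O\rbra{\varepsilon^2}}^{S}$, which is sufficient by Fuchs--van de Graaf. This fallback only requires choosing instances where one state moves by angle $\Theta\rbra{\varepsilon}$ while the overlap crosses from $\frac{1}{2}-\varepsilon$ to $\frac{1}{2}+\varepsilon$, which is possible because the derivative of $\cos^2\theta$ at $\theta=\pi/4$ is nonzero.
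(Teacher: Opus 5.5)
Your fallback argument is correct and is essentially the paper's proof. The paper fixes $\sigma = \ketbra{0}{0}$ and takes $\rho = \ketbra{\psi_\pm}{\psi_\pm}$ with $\ket{\psi_\pm} = \sqrt{\frac12\pm\varepsilon}\ket{0} + \sqrt{\frac12\mp\varepsilon}\ket{1}$. This is exactly your instance with $\phi$ held fixed and $\psi$ rotated about $\theta = \pi/4$.

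The paper then notes that $\abs{\langle\psi_+|\psi_-\rangle}^2 = 1-4\varepsilon^2$ exactly for every $\varepsilon \in (0,\frac12)$, so no local derivative argument is needed. It concludes by Helstrom--Holevo: $\mathrm{T} = \sqrt{1-(1-4\varepsilon^2)^S} \leq 2\sqrt{S}\,\varepsilon$, which forces $S \geq 1/(36\varepsilon^2)$.

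You should lead with this argument and drop the Haar-averaged Schur--Weyl route. By joint convexity of trace distance, $\mathrm{T}\rbra{\Xi_+^{(S)}, \Xi_-^{(S)}}$ is already at most the trace distance of any single fixed pair of instances, so averaging cannot strengthen this lower bound. Moreover, your Step 3 there is only heuristic, and so is the claimed identity between the $J$-measurement and post-processed SWAP tests.
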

    \begin{proof}
        The hard instance is of the form $\ketbra{\psi_{\pm}}{\psi_{\pm}} \otimes \ketbra{0}{0}$, where
        \begin{equation}
            \ket{\psi_{\pm}} = \sqrt{\frac{1}{2}\pm\varepsilon} \ket{0} + \sqrt{\frac{1}{2}\mp\varepsilon} \ket{1}.
        \end{equation}
        It can be verified that $\ketbra{\psi_{+}}{\psi_{+}} \otimes \ketbra{0}{0} \in \textsc{InnerProduct}\sbra{1, \frac{1}{2}-\varepsilon, \frac{1}{2}+\varepsilon}^{\textup{yes}}$ and $\ketbra{\psi_{-}}{\psi_{-}} \otimes \ketbra{0}{0} \in \textsc{InnerProduct}\sbra{1, \frac{1}{2}-\varepsilon, \frac{1}{2}+\varepsilon}^{\textup{no}}$. 
        Note that the hard instances are pure states. 
    \end{proof}

    Now we are ready to establish a tight lower bound on the quantum time complexity of inner product estimation. 

    \begin{theorem} [Quantum time lower bounds for inner product estimation] \label{thm:lb-inner-prod}
        For every $0 < \varepsilon < \frac{1}{2}$, we have
        $\mathsf{T}\rbra{\textsc{InnerProduct}\sbra{n, \frac{1}{2}-\varepsilon, \frac{1}{2}+\varepsilon}} = \Omega\rbra{n/\varepsilon^2}$. 
    \end{theorem}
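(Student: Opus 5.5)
The plan is to combine \cref{lemma:inner-prod} with \cref{lemma:samp-inner-prod}, in the same way that \cref{thm:lb-purity} was obtained from \cref{lemma:general-purity} and \cref{lemma:samp-purity}.

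First, instantiate \cref{lemma:inner-prod} with $a = \frac{1}{2}-\varepsilon$ and $b = \frac{1}{2}+\varepsilon$. These parameters satisfy $0 \leq a < b \leq 1$ because $0<\varepsilon<\frac12$. The lemma then gives
\begin{equation}
\mathsf{T}\rbra*{\textsc{InnerProduct}\sbra*{n, \tfrac{1}{2}-\varepsilon, \tfrac{1}{2}+\varepsilon}} \geq n \cdot \mathsf{S}\rbra*{\textsc{InnerProduct}\sbra*{1, \tfrac{1}{2}-\varepsilon, \tfrac{1}{2}+\varepsilon}}.
\end{equation}
This step relies on two ingredients. The first is the embedding through $\ketbra{0}{0}^{\otimes(n-1)}\otimes\ketbra{0}{0}^{\otimes(n-1)}$. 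Here $\tr\rbra{(\rho\otimes\ketbra{0}{0}^{\otimes(n-1)})(\sigma\otimes\ketbra{0}{0}^{\otimes(n-1)})}=\tr(\rho\sigma)$, so yes-instances map to yes-instances and no-instances map to no-instances. The second ingredient is the $4$-dimensional qudit version, \cref{thm:main-sp-qudit}, applied via \cref{fact:inner-prod}.

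The only point needing care is the bookkeeping of qudits. The $2n$-qubit register $\rho\otimes\sigma$ must be regrouped into $n$ qudits of dimension $4$, where the $j$-th qudit pairs the $j$-th qubit of $\rho$ with the $j$-th qubit of $\sigma$. Under this regrouping, the $1$-qudit property $\textsc{InnerProduct}\sbra{1,a,b}$ sits on the first qudit, and the embedding state occupies the remaining $n-1$ qudits as $\ketbra{00}{00}^{\otimes(n-1)}$.

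A second point is how time is counted. Time in \cref{thm:main-sp-qudit} is measured in two-qudit gates. A two-qubit gate on the original register is in particular a gate acting on at most two $4$-dimensional qudits. Hence a qubit tester with $T$ gates is a qudit tester with at most $T$ gates, and the lower bound transfers to the qubit model.

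Finally, \cref{lemma:samp-inner-prod} gives $\mathsf{S}\rbra{\textsc{InnerProduct}\sbra{1,\frac12-\varepsilon,\frac12+\varepsilon}}=\Omega(1/\varepsilon^2)$. Substituting this into the inequality above yields the $\Omega(n/\varepsilon^2)$ lower bound. The hard instances in \cref{lemma:samp-inner-prod} are pure, and the embedding pads them with pure states, so the lower bound holds even when both states are pure, as stated in \cref{corollary:inner-product}.

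I expect no genuine obstacle. The only step requiring attention is the regrouping and gate-counting described above.
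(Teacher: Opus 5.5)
Your proposal is correct and matches the paper's proof. The paper likewise instantiates \cref{lemma:inner-prod}, which rests on the $4$-dimensional qudit version \cref{thm:main-sp-qudit} via \cref{fact:inner-prod}, with $a=\frac{1}{2}-\varepsilon$ and $b=\frac{1}{2}+\varepsilon$, and then plugs in the $\Omega(1/\varepsilon^2)$ sample bound of \cref{lemma:samp-inner-prod}; your remarks on regrouping qubits into qudits and on counting two-qubit gates as two-qudit gates are sound and just make explicit what the paper leaves implicit.
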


    Although the lower bound in \cref{thm:lb-purity} already implies the same lower bound for inner product estimation as given in \cref{thm:lb-inner-prod}, \cref{thm:lb-inner-prod} means that inner product estimation is hard even for pure states (which is not implied by \cref{thm:lb-purity}). 
    In addition, the proof of \cref{thm:lb-inner-prod} requires the qudit version of our main theorem, which also shows the extensibility of our results. 

    \begin{proof} [Proof of \cref{thm:lb-inner-prod}]
        This is immediately obtained by applying \cref{lemma:inner-prod} (based on the $4$-dimensional qudit version of \cref{thm:main-sp}) with \cref{lemma:samp-inner-prod}. 
    \end{proof}

    \subsection{Power trace estimation}

    In this section, we consider the problem of estimating $\tr\rbra{\rho^k}$ for large $k \geq 2$, which is a generalization of purity estimation. 
    It is noted that for non-integer $k$, estimating $\tr\rbra{\rho^k}$ has also been studied as a key step for R\'enyi/Tsallis entropy estimation \cite{AISW20,SH21,WGL+24,WZL24,LW25,CW25,CLW26}. 
    The formal definition is given as follows. 

    \begin{definition} [Power trace estimation]
        Let $n \geq 1$ and $k \geq 2$ be integers and $0 \leq a < b \leq 1$. 
        We define $\textsc{PowerTrace}\sbra{n, k, a, b}$ to be an $n$-qubit property such that
        \begin{align}
            \textsc{PowerTrace}\sbra{n, k, a, b}^{\textup{yes}} & = \set{\rho \in \mathcal{D}\rbra{\mathcal{H}_2^{\otimes n}}}{\tr\rbra{\rho^k} \geq b}, \\
            \textsc{PowerTrace}\sbra{n, k, a, b}^{\textup{no}} & = \set{\rho \in \mathcal{D}\rbra{\mathcal{H}_2^{\otimes n}}}{\tr\rbra{\rho^k} \leq a}. 
        \end{align}
    \end{definition}

    Similar to \cref{sec:purity}, we have the following properties of power traces. 

    \begin{fact} \label{fact:power-trace}
        $\textsc{PowerTrace}\sbra{n, k, a, b}$ is permutation-invariant. 
    \end{fact}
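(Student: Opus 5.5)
The plan is to mirror the proof of \cref{fact:purity-perm-invariant}: show that $\textsc{PowerTrace}\sbra{n, k, a, b}$ is in fact unitarily invariant, and then specialize to the qubit-permutation unitaries $U_\pi$ from \cref{eq:def-U-pi}.

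First, I would note that $U_\pi$ is a unitary operator on $\mathcal{H}_2^{\otimes n}$. It permutes the computational basis vectors $\ket{i_1 \dots i_n}$ bijectively, hence it maps an orthonormal basis to an orthonormal basis. Consequently, for every density operator $\rho$, the operator $U_\pi \rho U_\pi^\dag$ is again a density operator.

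Second, for any unitary $U$ and integer $k \geq 2$, I would use $U^\dag U = I$ to telescope the power:
\begin{equation}
    \rbra*{U \rho U^\dag}^k = U \rho^k U^\dag.
\end{equation}
Cyclicity of the trace then gives $\tr\rbra{\rbra{U\rho U^\dag}^k} = \tr\rbra{\rho^k}$.

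Third, I would apply this with $U = U_\pi$. The yes condition $\tr\rbra{\rho^k} \geq b$ and the no condition $\tr\rbra{\rho^k} \leq a$ are then preserved under $\rho \mapsto U_\pi \rho U_\pi^\dag$. For the converse direction, I would apply the same argument with $\pi^{-1}$, using $U_{\pi}^{-1} = U_{\pi^{-1}}$ (equivalently $U_\pi^\dag$). This yields the ``if and only if'' required by \cref{def:perm-inv} for every $\pi \in \Sym\rbra{n}$ and $X \in \cbra{\textup{yes}, \textup{no}}$.

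There is no real obstacle in this argument. The only point needing a line of justification is that $U_\pi$ is unitary with inverse $U_{\pi^{-1}}$, which follows directly from its action on product basis states.
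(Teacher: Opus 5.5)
Your proposal is correct and takes essentially the paper's approach: the paper proves this fact by analogy with \cref{fact:purity-perm-invariant}, whose proof is that the property is unitarily invariant and hence permutation-invariant. You spell out the details the paper leaves implicit, namely that $U_\pi$ is unitary, that $(U\rho U^\dag)^k = U\rho^k U^\dag$, and that the trace is cyclic.
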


    \begin{lemma} \label{lemma:power-trace}
        $\mathsf{T}\rbra{\textsc{PowerTrace}\sbra{n, k, a, b}} \geq n \cdot \mathsf{S}\rbra{\textsc{PowerTrace}\sbra{1, k, a, b}}$.
    \end{lemma}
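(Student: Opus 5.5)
The plan is to mirror the proof of \cref{lemma:general-purity} exactly, replacing the purity functional by $\tr\rbra{\rho^k}$. The argument has two ingredients: an embedding of the $1$-qubit problem into the $n$-qubit problem, and permutation invariance of the $n$-qubit property. Once both are in place, \cref{thm:main-sp} immediately gives $\mathsf{T}\rbra{\textsc{PowerTrace}\sbra{n, k, a, b}} \geq n \cdot \mathsf{S}\rbra{\textsc{PowerTrace}\sbra{1, k, a, b}}$.

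\textbf{Embedding.} I would take $\sigma = \ketbra{0}{0}^{\otimes \rbra{n-1}}$ and show that
\[
\textsc{PowerTrace}\sbra{1, k, a, b} \xhookrightarrow[]{\sigma} \textsc{PowerTrace}\sbra{n, k, a, b}.
\]
The key computation is multiplicativity of power traces over tensor products:
\[
\tr\rbra*{\rbra{\rho \otimes \sigma}^k} = \tr\rbra{\rho^k} \cdot \tr\rbra{\sigma^k},
\]
which holds because $\rbra{\rho\otimes\sigma}^k = \rho^k \otimes \sigma^k$. Since $\sigma$ is pure, $\sigma^k = \sigma$, so $\tr\rbra{\sigma^k} = 1$. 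Hence $\tr\rbra{\rbra{\rho\otimes\sigma}^k} = \tr\rbra{\rho^k}$ for every $1$-qubit $\rho$. It follows that if $\tr\rbra{\rho^k} \geq b$ (yes case) or $\tr\rbra{\rho^k} \leq a$ (no case), the same inequality holds for $\rho\otimes\sigma$, as \cref{def:embed} requires.

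\textbf{Permutation invariance.} This is \cref{fact:power-trace}, and I would justify it exactly as in \cref{fact:purity-perm-invariant}. $U_\pi$ is unitary, so $\rbra{U_\pi \rho U_\pi^\dag}^k = U_\pi \rho^k U_\pi^\dag$, and cyclicity of trace gives $\tr\rbra{\rbra{U_\pi\rho U_\pi^\dag}^k} = \tr\rbra{\rho^k}$. Membership in both the yes and no sets is therefore preserved under every $\pi\in\Sym\rbra{n}$.

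With both ingredients established, applying \cref{thm:main-sp} concludes the proof. There is no real obstacle here. The only point needing care is checking that the yes/no thresholds transfer exactly, which is guaranteed because the purity of the embedding state makes its factor $\tr\rbra{\sigma^k}$ equal to $1$.
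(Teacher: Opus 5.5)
Your proposal is correct and follows the paper's proof essentially verbatim: the paper uses the same embedding state $\ketbra{0}{0}^{\otimes(n-1)}$ with the identity $\tr\rbra{(\rho\otimes\sigma)^k} = \tr\rbra{\rho^k}\tr\rbra{\sigma^k}$, invokes \cref{fact:power-trace} for permutation invariance, and applies \cref{thm:main-sp}. The only nuance, which the paper also leaves unaddressed, is that \cref{def:embed} requires $m<n$, so the case $n=1$ is not covered by the embedding and reduces to the trivial bound $\mathsf{T}\geq\mathsf{S}$.
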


    \begin{proof}
    It can be verified that $\textsc{PowerTrace}\sbra{1,k,a,b}
        \xhookrightarrow[]{\ketbra{0}{0}^{\otimes(n-1)}}
        \textsc{PowerTrace}\sbra{n,k,a,b}$, since for any $\rho$,
    \begin{equation}
        \tr\rbra*{\rbra*{\rho\otimes\ketbra{0}{0}^{\otimes(n-1)}}^k}
        =
        \tr\rbra*{\rho^k}
        \cdot
        \tr\rbra*{\rbra*{\ketbra{0}{0}^{\otimes(n-1)}}^k}
        =
        \tr\rbra*{\rho^k}.
    \end{equation}
    Note that $\textsc{PowerTrace}\sbra{n, k, a, b}$ is permutation-invariant (by \cref{fact:power-trace}). 
    Then, the proof is completed by applying \cref{thm:main-sp}.
\end{proof}

    To establish tight lower bounds on the quantum time complexity of power trace estimation, we mention the known lower bounds on its sample complexity in certain cases. 

    \begin{lemma} [Sample lower bounds for power trace estimation, adapted from \cite{CWYZ25}] \label{lemma:samp-power}
        Let $k \geq 2$ be an integer.
        For every sufficiently small $\varepsilon > 0$, there exists a real number $0 < a < 1$ with $a = \Theta\rbra{1}$, we have $\mathsf{S}\rbra{\textsc{PowerTrace}\sbra{1, k, a, a+\varepsilon}} \geq \Omega\rbra{k/\varepsilon^2}$. 
    \end{lemma}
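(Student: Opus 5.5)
The plan is a two-point testing lower bound for single-qubit states. Fix an integer $k \geq 2$. We work with the diagonal one-parameter family
\[
\rho_p = p\ketbra{0}{0} + \rbra{1-p}\ketbra{1}{1}, \qquad f\rbra{p} = \tr\rbra{\rho_p^k} = p^k + \rbra{1-p}^k .
\]
Choose a base point $p_0 \in \rbra{1/2, 1}$ bounded away from $1/2$ and $1$, with the precise choice depending on $k$. Set $\rho^{\textup{no}} = \rho_{p_0}$ and $a = f\rbra{p_0}$, and let $\rho^{\textup{yes}} = \rho_{p_0+\delta}$ with $\delta$ minimal subject to $f\rbra{p_0+\delta} \geq a + \varepsilon$. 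Any tester for $\textsc{PowerTrace}\sbra{1, k, a, a+\varepsilon}$ then distinguishes $\rho^{\textup{yes}\otimes S}$ from $\rho^{\textup{no}\otimes S}$ with constant bias. By Helstrom, this forces $1 - \mathrm{F}\rbra{\rho^{\textup{yes}}, \rho^{\textup{no}}}^{2S} = \Omega\rbra{1}$. Both states commute, so the fidelity is the classical Bhattacharyya coefficient, and $1-\mathrm{F} = \Theta\rbra{\delta^2/\rbra{p_0\rbra{1-p_0}}}$. Hence
\[
S = \Omega\rbra*{p_0\rbra{1-p_0}/\delta^2}.
\]

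Everything now reduces to relating $\delta$ and $\varepsilon$. Since $\delta \approx \varepsilon / f'\rbra{p_0}$ with $f'\rbra{p} = k\rbra{p^{k-1} - \rbra{1-p}^{k-1}}$, we get
\[
S = \Omega\rbra*{\varepsilon^{-2}\, p_0\rbra{1-p_0}\, f'\rbra{p_0}^2 } .
\]
We want this to be $\Omega\rbra{k/\varepsilon^2}$ while keeping $a = \Theta\rbra{1}$. This means choosing $p_0$ so that $p_0\rbra{1-p_0}\, k^2 p_0^{2k-2} \gtrsim k$ and $p_0^k = \Theta\rbra{1}$. Take $p_0 = 1 - c/k$ for a suitable constant $c>0$, with $p_0$ adjusted to a fixed constant such as $3/4$ for small $k$. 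Then:
\begin{itemize}
\item $p_0^{k} \approx e^{-c}$, so $a \approx e^{-c} + \rbra{c/k}^k = \Theta\rbra{1}$;
\item $1-p_0 = c/k$;
\item $f'\rbra{p_0} \approx k e^{-c}$, since the $\rbra{1-p_0}^{k-1}$ term is negligible.
\end{itemize}
The product is therefore $\Theta\rbra{\rbra{c/k}\, k^2 e^{-2c}} = \Theta\rbra{k}$, as required. For "sufficiently small $\varepsilon$" we need $\varepsilon \ll 1/k$, so that $\delta \approx \varepsilon/\rbra{ke^{-c}} \ll 1/k = \Theta\rbra{1-p_0}$. In that regime the linearization of $f$ is valid, because $f''\rbra{p} = O\rbra{k^2}$ near $p_0$. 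The fidelity expansion is also valid, because $\delta \ll 1-p_0$ keeps $1-p_0-\delta = \Theta\rbra{1/k}$.

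The main obstacle is the uniformity in $k$. All the constants, namely the Taylor remainder of $f$, the relation between $1-\mathrm{F}$ and $\delta^2/\rbra{p_0(1-p_0)}$, and the bound $\rbra{1-1/k}^{k} = \Theta\rbra{1}$, must be controlled simultaneously, and the threshold "sufficiently small" must be allowed to depend on $k$. I would handle this by bounding $f'$ on $\sbra{p_0, p_0+\delta}$ explicitly between $k e^{-2c}$ and $k$ by monotonicity, rather than using asymptotic notation. I would then bound $1-\mathrm{F}\rbra{\rho_p,\rho_q} \leq \rbra{p-q}^2/\rbra{\min\rbra{p,q}\min\rbra{1-p,1-q}}$ by the elementary inequality $\rbra{\sqrt{x}-\sqrt{y}}^2 \leq \rbra{x-y}^2/y$. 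Finally, the single-qubit hard instance is exactly what \cref{lemma:power-trace} needs, matching the remark made for \cref{lemma:samp-purity} that the hard instances involve only $1$-qubit states. This agrees with the bound of \cite{CWYZ25}, from which the statement is adapted.
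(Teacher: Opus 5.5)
Your proposal is correct and is essentially the paper's argument. Both use a two-point Helstrom--Holevo bound between commuting single-qubit states $\rho_p$ with $p$ near $1-\Theta(1/k)$, perturbed by $\Theta(\varepsilon/k)$ so that $\tr(\rho^k)$ moves by at least $\varepsilon$ while $1-\mathrm{F} = O(\varepsilon^2/k)$. The only real difference is that you perturb one-sidedly from $p_0 = 1-c/k$, keeping $p_0$ away from $1/2$ for small $k$, whereas the paper perturbs symmetrically around $1-1/k$. Your choice has the small advantage of covering $k=2$: there the paper's instance $p = \frac{1}{2} \pm \frac{\delta}{2}$ gives $\tr(\rho_+^2) = \tr(\rho_-^2)$, so it does not separate the two cases.
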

    \begin{proof}
        The hard instance is of the form 
        \begin{equation}
            \rho_{\pm} = \rbra*{1 - \frac{1}{k} \pm \frac{\delta}{k}} \ketbra{0}{0} + \rbra*{\frac{1}{k} \mp \frac{\delta}{k}} \ketbra{1}{1},
        \end{equation}
        where $\delta = \Theta\rbra{\varepsilon}$. 
        It can be verified that $\tr\rbra{\rho_+^k} - \tr\rbra{\rho_-^k} \geq \Omega\rbra{\delta}$ and $\tr\rbra{\rho_\pm^k} = \Theta\rbra{1}$. 
        Therefore, there exists a real number $a = \Theta\rbra{1}$ such that $\rho_+ \in \textsc{PowerTrace}\sbra{1, k, a, a+\varepsilon}^{\textup{yes}}$ and $\rho_- \in \textsc{PowerTrace}\sbra{1, k, a, a+\varepsilon}^{\textup{no}}$. 
        To complete the proof, we only need to note that $1 - \mathrm{F}\rbra{\rho_+, \rho_-} \leq O\rbra{\varepsilon^2/k}$. 
        Then, by the Helstrom-Holevo bound \cite{Hel67,Hol73}, $\mathsf{S}\rbra{\textsc{PowerTrace}\sbra{1, k, a, a+\varepsilon}} \geq \Omega\rbra{1/\rbra{1-\mathrm{F}\rbra{\rho_+, \rho_-}}} \geq \Omega\rbra{k/\varepsilon^2}$. 
    \end{proof}

    Now we are ready to establish a tight lower bound on the quantum time complexity of power trace estimation. 

    \begin{theorem} [Quantum time lower bounds for power trace estimation] \label{thm:pow-tr-est}
        Let $n \geq 1$ and $k \geq 2$ be integers. 
        Then, for sufficiently small $\varepsilon > 0$, there exists a real number $a = \Theta\rbra{1}$ such that $\mathsf{T}\rbra{\textsc{PowerTrace}\sbra{n, k, a, a+\varepsilon}} \geq \Omega\rbra{nk/\varepsilon^2}$. 
    \end{theorem}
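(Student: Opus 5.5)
The plan is to combine \cref{lemma:power-trace} with \cref{lemma:samp-power}, mirroring the proof of \cref{thm:lb-purity}.

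First, fix an integer $k \geq 2$ and take $\varepsilon > 0$ small enough that \cref{lemma:samp-power} applies. That lemma supplies a real number $a = \Theta\rbra{1}$ with $0 < a < 1$ such that
\[
\mathsf{S}\rbra{\textsc{PowerTrace}\sbra{1, k, a, a+\varepsilon}} \geq \Omega\rbra{k/\varepsilon^2}.
\]
We may also assume $a + \varepsilon \leq 1$, so that $0 \leq a < a+\varepsilon \leq 1$. This makes $\textsc{PowerTrace}\sbra{n, k, a, a+\varepsilon}$ a well-defined property, and \cref{lemma:power-trace} applies to it with $b = a+\varepsilon$.

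Second, apply \cref{lemma:power-trace} with this $a$ and $b = a + \varepsilon$. It was obtained from \cref{thm:main-sp} using the embedding through $\ketbra{0}{0}^{\otimes(n-1)}$ and the permutation invariance in \cref{fact:power-trace}. It gives
\[
\mathsf{T}\rbra{\textsc{PowerTrace}\sbra{n, k, a, a+\varepsilon}} \geq n \cdot \mathsf{S}\rbra{\textsc{PowerTrace}\sbra{1, k, a, a+\varepsilon}} \geq \Omega\rbra{nk/\varepsilon^2}.
\]
The same $a$ works for every $n$, because the choice of $a$ in \cref{lemma:samp-power} depends only on $k$ and $\varepsilon$.

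For $n = 1$ the embedding in \cref{thm:main-sp} formally requires $m < n$. Here the bound follows directly from $\mathsf{T} \geq \mathsf{S}$ (the trivial relation noted in \cref{sec:main}) combined with \cref{lemma:samp-power}.

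The main point needing care is consistency of the threshold $a$ between the two lemmas, together with the degenerate case $n = 1$. The embedding preserves $\tr\rbra{\rho^k}$ exactly, so the yes/no thresholds transfer unchanged, and no real obstacle remains.
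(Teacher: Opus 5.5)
Your proposal is correct and is essentially the paper's proof, which combines \cref{lemma:power-trace} with \cref{lemma:samp-power}. Your extra remarks are harmless refinements the paper leaves implicit: the check that $a+\varepsilon\le 1$, and handling the degenerate case $n=1$ (where $m<n$ fails) via $\mathsf{T}\ge\mathsf{S}$.
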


    \begin{proof}
        This is obtained by applying \cref{lemma:power-trace} with \cref{lemma:samp-power}. 
    \end{proof}

    \subsection{Trace distance estimation}

    In this section, we consider the problem of trace distance estimation between pure states. 
    The formal definition is given as follows.

    \begin{definition} [Pure-state trace distance estimation]
        Let $n \geq 1$ be an integer and $0 \leq a < b \leq 1$. 
        We define $\textsc{PureTD}\sbra{n, a, b}$ to be an $n$-qubit property such that
        \begin{align}
            \textsc{PureTD}\sbra{n, a, b}^{\textup{yes}} & = \set{\ket{\psi} \in \mathcal{H}_2^{\otimes n}}{\mathrm{T}\rbra{\ket{\psi}, \ket{0}} \geq b}, \\
            \textsc{PureTD}\sbra{n, a, b}^{\textup{no}} & = \set{\ket{\psi} \in \mathcal{H}_2^{\otimes n}}{\mathrm{T}\rbra{\ket{\psi}, \ket{0}} \leq a}. 
        \end{align}
    \end{definition}

    Note that $\textsc{PureTD}\sbra{n, a, b}$ is not unitarily invariant but permutation-invariant. 

    \begin{fact} \label{fact:puretd}
        $\textsc{PureTD}\sbra{n, a, b}$ is permutation-invariant. 
    \end{fact}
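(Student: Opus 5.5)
The claim to prove is \cref{fact:puretd}: $\textsc{PureTD}\sbra{n, a, b}$ is permutation-invariant. The plan is to show directly that the defining quantity $\mathrm{T}\rbra{\ket{\psi}, \ket{0}}$ does not change under any qubit permutation. Here $\ket{0}$ means the $n$-qubit all-zero state $\ket{0}^{\otimes n}$.

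Fix $\pi \in \Sym\rbra{n}$ and let $U_\pi$ be as in \cref{eq:def-U-pi}. The first step is to note that $U_\pi$ is unitary and fixes the all-zero product state: applying the definition with $\ket{\psi_1} = \dots = \ket{\psi_n} = \ket{0}$ gives $U_\pi \ket{0}^{\otimes n} = \ket{0}^{\otimes n}$. The second step uses the unitary invariance of trace distance. Since $\tr\rbra{\abs{U A U^\dag}} = \tr\rbra{\abs{A}}$ for any unitary $U$, we get
\[
\mathrm{T}\rbra{U_\pi\ket{\psi}, \ket{0}^{\otimes n}} = \mathrm{T}\rbra{U_\pi\ket{\psi}, U_\pi\ket{0}^{\otimes n}} = \mathrm{T}\rbra{\ket{\psi}, \ket{0}^{\otimes n}}.
\]
An alternative route is the pure-state formula $\mathrm{T} = \sqrt{1 - \abs{\bra{0}^{\otimes n}\ket{\psi}}^2}$, together with the observation that the overlap is preserved because $U_\pi^\dag$ also fixes $\ket{0}^{\otimes n}$.

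It follows that $\mathrm{T}\rbra{U_\pi \ket{\psi}, \ket{0}^{\otimes n}} \geq b$ if and only if $\mathrm{T}\rbra{\ket{\psi}, \ket{0}^{\otimes n}} \geq b$, and likewise for the condition $\leq a$. The permuted state is still pure, so membership in the yes and no sets is preserved in both directions; the reverse direction follows by applying the same argument to $\pi^{-1}$. This is exactly \cref{def:perm-inv}.

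There is no real obstacle here. The only point that needs care is that the reference state $\ket{0}^{\otimes n}$ is itself permutation-invariant. This is precisely why the property is permutation-invariant without being unitarily invariant: a general unitary $U$ would move $\ket{0}^{\otimes n}$, whereas a qubit permutation does not.
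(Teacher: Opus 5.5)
Your proof is correct: $U_\pi$ is unitary and fixes $\ket{0}^{\otimes n}$, trace distance is unitarily invariant, and conjugation preserves purity, so membership in the yes and no sets is preserved. The paper states this fact without a proof, noting only that the property is permutation-invariant but not unitarily invariant, and your argument is the reasoning that remark points to.
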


    Then, we can establish a general lower bound for $\textsc{PureTD}\sbra{n, a, b}$.

    \begin{lemma} \label{lemma:puretd-general-lb}
        $\mathsf{T}\rbra{\textsc{PureTD}\sbra{n, a, b}} \geq n \cdot \mathsf{S}\rbra{\textsc{PureTD}\sbra{1, a, b}}$.
    \end{lemma}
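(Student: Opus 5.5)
The plan follows the same pattern as \cref{lemma:general-purity} and \cref{lemma:power-trace}: exhibit an embedding of the $1$-qubit problem into the $n$-qubit one, then invoke \cref{thm:main-sp} together with the permutation invariance in \cref{fact:puretd}.

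The embedding state will be $\sigma = \ketbra{0}{0}^{\otimes (n-1)}$, and the claim to verify is $\textsc{PureTD}\sbra{1, a, b} \xhookrightarrow[]{\sigma} \textsc{PureTD}\sbra{n, a, b}$. Take a $1$-qubit pure state $\ket{\psi}$ and set $\ket{\Psi} = \ket{\psi}\otimes\ket{0}^{\otimes(n-1)}$. I read the reference state $\ket{0}$ in the definition as $\ket{0}^{\otimes n}$ in the $n$-qubit case. For pure states, $\mathrm{T}\rbra{\ket{\phi},\ket{\chi}} = \sqrt{1-\abs{\langle\phi|\chi\rangle}^2}$, and the overlap factorizes:
\[
\langle \Psi | 0^{\otimes n}\rangle = \langle\psi|0\rangle \cdot \langle 0^{\otimes(n-1)}|0^{\otimes(n-1)}\rangle = \langle\psi|0\rangle .
\]
Hence $\mathrm{T}\rbra{\ket{\Psi},\ket{0}^{\otimes n}} = \mathrm{T}\rbra{\ket{\psi},\ket{0}}$. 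So a yes (resp.\ no) instance of the $1$-qubit problem is mapped to a yes (resp.\ no) instance of the $n$-qubit problem, which is exactly \cref{def:embed}.

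With \cref{fact:puretd} giving permutation invariance of $\textsc{PureTD}\sbra{n,a,b}$, \cref{thm:main-sp} applies directly with $\mathcal{Q}_1 = \textsc{PureTD}\sbra{1,a,b}$. This yields $\mathsf{T}\rbra{\textsc{PureTD}\sbra{n,a,b}} \geq n\cdot \mathsf{S}\rbra{\textsc{PureTD}\sbra{1,a,b}}$.

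There is no real obstacle. The one point to check is that \cref{fact:puretd} really holds: $U_\pi$ fixes $\ket{0}^{\otimes n}$ and preserves inner products, so $\mathrm{T}\rbra{U_\pi\ket{\psi},\ket{0}^{\otimes n}} = \mathrm{T}\rbra{\ket{\psi},\ket{0}^{\otimes n}}$, and both the yes and no sets are invariant. A minor formality is that the property is phrased in terms of pure states $\ket{\psi}$ rather than density operators; I identify each $\ket{\psi}$ with $\ketbra{\psi}{\psi}$, under which the tensor product with $\sigma$ is again pure, as required.
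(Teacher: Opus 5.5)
Your proposal is correct and follows the paper's proof exactly: embed $\textsc{PureTD}\sbra{1,a,b}$ into $\textsc{PureTD}\sbra{n,a,b}$ through $\sigma = \ketbra{0}{0}^{\otimes (n-1)}$, use the permutation invariance from \cref{fact:puretd}, and apply \cref{thm:main-sp}. Your overlap computation and your check that $U_\pi$ fixes $\ket{0}^{\otimes n}$ fill in the steps the paper leaves as ``it can be verified.''
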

    \begin{proof}
        It can be verified that $\textsc{PureTD}\sbra{1, a, b} \xhookrightarrow[]{\ketbra{0}{0}^{\otimes \rbra{n-1}}} \textsc{PureTD}\sbra{n, a, b}$.
        Note that $\textsc{PureTD}\sbra{n, a, b}$ is permutation-invariant (by \cref{fact:puretd}). 
        Then, the proof is completed by applying \cref{thm:main-sp}. 
    \end{proof}

    To establish tight lower bounds on the quantum time complexity of pure-state trace distance estimation, we mention the known lower bounds on its sample complexity in certain cases. 

    \begin{lemma} [Sample lower bounds for pure-state trace distance estimation, adapted from \cite{Wan24,WZ24c}] \label{lemma:samp-puretd}
        For every $0 < \varepsilon < 1$, we have $\mathsf{S}\rbra{\textsc{PureTD}\sbra{1, 0, \varepsilon}} = \Omega\rbra{1/\varepsilon^2}$.
    \end{lemma}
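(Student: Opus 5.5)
We reduce the lemma to a two-hypothesis discrimination problem on single-qubit pure states and then apply the standard fidelity bound for distinguishing $S$ copies. Since $\textsc{PureTD}\sbra{1, 0, \varepsilon}$ has a no-instance requiring $\mathrm{T}\rbra{\ket{\psi}, \ket{0}} \leq 0$, the only no-instance is $\ket{\psi} = \ket{0}$ up to a global phase. For the yes-instance we take
\begin{equation}
    \ket{\psi_\varepsilon} = \sqrt{1-\varepsilon^2}\,\ket{0} + \varepsilon\ket{1}.
\end{equation}
For pure states, $\mathrm{T}\rbra{\ket{\psi}, \ket{\phi}} = \sqrt{1 - \abs{\langle \psi | \phi\rangle}^2}$. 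This gives $\mathrm{T}\rbra{\ket{\psi_\varepsilon}, \ket{0}} = \sqrt{1 - \rbra{1-\varepsilon^2}} = \varepsilon$, so $\ket{\psi_\varepsilon}$ is a valid yes-instance.

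Any tester for $\textsc{PureTD}\sbra{1, 0, \varepsilon}$ with sample complexity $S$, in the sense of \cref{sec:tester}, must therefore accept $\ket{\psi_\varepsilon}^{\otimes S}$ with probability at least $2/3$ and accept $\ket{0}^{\otimes S}$ with probability at most $1/3$. It thus distinguishes the two pure states $\ket{\psi_\varepsilon}^{\otimes S}$ and $\ket{0}^{\otimes S}$ with bias at least $1/3$. By the Helstrom--Holevo bound \cite{Hel67,Hol73}, the acceptance probabilities can differ by at most the trace distance between these two states. That trace distance equals
\begin{equation}
    \sqrt{1 - \abs{\langle \psi_\varepsilon | 0\rangle}^{2S}} = \sqrt{1 - \rbra{1-\varepsilon^2}^S}.
\end{equation}

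Requiring this to be at least $1/3$ gives $\rbra{1-\varepsilon^2}^S \leq 8/9$. Taking logarithms and using $-\ln\rbra{1-x} \leq x/\rbra{1-x}$, we get $S \geq \ln\rbra{9/8} \cdot \rbra{1-\varepsilon^2}/\varepsilon^2$. This is $\Omega\rbra{1/\varepsilon^2}$ whenever $\varepsilon$ is bounded away from $1$. When $\varepsilon$ is close to $1$, the trivial bound $S \geq 1$ already gives $\Omega\rbra{1/\varepsilon^2}$, since $1/\varepsilon^2 \leq O\rbra{1}$ there. Together these cover all $0 < \varepsilon < 1$, and the matching upper bound ($O\rbra{1/\varepsilon^2}$ samples suffice) is consistent with the stated $\Theta$-form.

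The argument has no real obstacle; the point requiring care is that the hard instance must be genuinely $1$-qubit, so that \cref{lemma:puretd-general-lb} applies. It is, because both $\ket{0}$ and $\ket{\psi_\varepsilon}$ live in $\mathcal{H}_2$.
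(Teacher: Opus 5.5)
Your proof is correct and follows the paper's own argument: the same hard pair $\sqrt{1-\varepsilon^2}\ket{0}+\varepsilon\ket{1}$ versus $\ket{0}$, with the Helstrom--Holevo bound applied to the $S$-fold tensor powers, which you make explicit via the requirement $\sqrt{1-(1-\varepsilon^2)^S} \geq 1/3$. As a small simplification, Bernoulli's inequality $(1-\varepsilon^2)^S \geq 1 - S\varepsilon^2$ gives $S \geq 1/(9\varepsilon^2)$ for all $0<\varepsilon<1$ at once, which avoids your separate treatment of $\varepsilon$ near $1$.
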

    \begin{proof}
        The hard instance is of the form 
        \begin{equation}
            \ket{\psi^{\textup{yes}}} = \sqrt{1 - \varepsilon^2} \ket{0} + \varepsilon \ket{1}, \qquad \ket{\psi^{\textup{no}}} = \ket{0}.
        \end{equation}
        It can be verified that $\ket{\psi^{\textup{yes}}} \in \textsc{PureTD}\sbra{1, 0, \varepsilon}^{\textup{yes}}$ and $\ket{\psi^{\textup{no}}} \in \textsc{PureTD}\sbra{1, 0, \varepsilon}^{\textup{no}}$. 
        The proof is completed by the Helstrom--Holevo bound \cite{Hel67,Hol73}. 
    \end{proof}

    Now we are ready to establish a tight lower bound on the quantum time complexity of pure-state trace distance estimation. 

    \begin{theorem} [Quantum time lower bounds for pure-state trace distance estimation] \label{thm:time-puretd}
        For every $0 < \varepsilon < 1$, we have $\mathsf{T}\rbra{\textsc{PureTD}\sbra{n, 0, \varepsilon}} \geq \Omega\rbra{n/\varepsilon^2}$. 
    \end{theorem}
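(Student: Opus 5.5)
The plan is to follow exactly the template of the previous applications: combine the general sample-to-time reduction for $\textsc{PureTD}$ (\cref{lemma:puretd-general-lb}) with the one-qubit sample lower bound (\cref{lemma:samp-puretd}). Concretely, instantiating \cref{lemma:puretd-general-lb} with $a = 0$ and $b = \varepsilon$ gives
\begin{equation}
    \mathsf{T}\rbra{\textsc{PureTD}\sbra{n, 0, \varepsilon}} \geq n \cdot \mathsf{S}\rbra{\textsc{PureTD}\sbra{1, 0, \varepsilon}},
\end{equation}
and \cref{lemma:samp-puretd} bounds the right-hand side below by $n \cdot \Omega\rbra{1/\varepsilon^2} = \Omega\rbra{n/\varepsilon^2}$, valid for every $0 < \varepsilon < 1$.

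Before concluding, I would double-check the two ingredients that \cref{lemma:puretd-general-lb} relies on, since they are only asserted. First, the embedding $\textsc{PureTD}\sbra{1, 0, \varepsilon} \xhookrightarrow[]{\ketbra{0}{0}^{\otimes (n-1)}} \textsc{PureTD}\sbra{n, 0, \varepsilon}$. Here the reference state $\ket{0}$ in the $n$-qubit definition is read as $\ket{0}^{\otimes n}$, so for a pure $\ket{\psi}$ on one qubit we have $\mathrm{T}\rbra{\ket{\psi}\ket{0}^{\otimes(n-1)}, \ket{0}^{\otimes n}} = \sqrt{1 - \abs{\braket{0}{\psi}}^2} = \mathrm{T}\rbra{\ket{\psi}, \ket{0}}$. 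This preserves both the yes and no conditions. Second, permutation invariance (\cref{fact:puretd}): $U_\pi \ket{0}^{\otimes n} = \ket{0}^{\otimes n}$, so $\mathrm{T}\rbra{U_\pi\ket{\psi}, \ket{0}^{\otimes n}} = \mathrm{T}\rbra{\ket{\psi}, \ket{0}^{\otimes n}}$ by unitary invariance of trace distance. Together these justify invoking \cref{thm:main-sp}.

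The only step requiring care is the sample lower bound itself, which is imported. For completeness I would recall its argument. The two hard instances $\ket{\psi^{\textup{yes}}} = \sqrt{1-\varepsilon^2}\ket{0} + \varepsilon\ket{1}$ and $\ket{0}$ have overlap $\sqrt{1-\varepsilon^2}$. Their $S$-fold tensor powers therefore have fidelity $\rbra{1-\varepsilon^2}^{S/2}$. By the Helstrom--Holevo bound, success probability $2/3$ requires $1 - \rbra{1-\varepsilon^2}^{S} \geq \Omega\rbra{1}$, i.e., $S = \Omega\rbra{1/\varepsilon^2}$. Since $\braket{0}{\psi}$ vanishes on neither instance, nothing degenerates and I expect no real obstacle. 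The theorem then follows immediately.
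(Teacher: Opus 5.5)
Your proposal is correct and matches the paper's proof: the theorem is obtained by applying \cref{lemma:puretd-general-lb} with $a=0$, $b=\varepsilon$, and then \cref{lemma:samp-puretd}. Your extra checks are all accurate and fill in details the paper only asserts as ``it can be verified'': the embedding through $\ketbra{0}{0}^{\otimes(n-1)}$, the permutation invariance via $U_\pi$ fixing $\ket{0}^{\otimes n}$, and the Helstrom--Holevo computation with $S$-fold fidelity $(1-\varepsilon^2)^{S/2}$.
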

    \begin{proof}
        This is obtained by applying \cref{lemma:puretd-general-lb} with \cref{lemma:samp-puretd}. 
    \end{proof}

    \cref{thm:time-puretd} will serve as a starting point for proving the quantum time lower bounds for the samplizer in \cref{sec:samplizer} and then for the LMR protocol in \cref{sec:lmr}. 

    \subsection{Samplizer} \label{sec:samplizer}

    In this section, we consider the algorithmic tool, the samplizer \cite{WZ24}.
    For simplicity, here we only consider the pure-state samplizer (as defined in \cite{WZ24c}). 

    \begin{definition} [Pure-state samplizer, simplified {\cite[Definition 6.2]{WZ24c}}] \label{def:samplizer}
        An ($n$-qubit) pure-state samplizer, denoted as $\mathsf{Samplize}^{\mathsf{pure}}_*\ave{*}$, is a converter from a quantum circuit family to a quantum channel family such that:
        for any $\delta > 0$, quantum circuit family $\mathcal{A}^U$ with query access to an $n$-qubit unitary operator $U$, and an $n$-qubit pure state $\ket{\psi}$, 
        \begin{equation}
            \Abs*{ \mathsf{Samplize}^{\mathsf{pure}}_\delta\ave{\mathcal{A}^U} \sbra{\ket{\psi}} - \mathcal{A}^{R_{\psi}} }_{\diamond} \leq \delta,
        \end{equation}
        where $R_{\psi} = I - 2\ketbra{\psi}{\psi}$ is the reflection operator about $\ket{\psi}$. 
        The sample complexity of $\mathsf{Samplize}^{\mathsf{pure}}_*\ave{*}$ is a function $S\rbra{Q, \delta}$ such that if $\mathcal{A}^U$ uses $Q$ queries to $U$, then $\mathsf{Samplize}^{\mathsf{pure}}_\delta\ave{\mathcal{A}^U} \sbra{\ket{\psi}}$ uses (at most) $S\rbra{Q, \delta}$ samples of $\ket{\psi}$. 
        Similarly, the time complexity of $\mathsf{Samplize}^{\mathsf{pure}}_*\ave{*}$ is a function $T\rbra{Q, \delta}$ such that if $\mathcal{A}^U$ uses $Q$ queries to $U$, then $\mathsf{Samplize}^{\mathsf{pure}}_\delta\ave{\mathcal{A}^U} \sbra{\ket{\psi}}$ uses (at most) $T\rbra{Q, \delta}$ (additional) two-qubit gates. 
    \end{definition}

    An efficient implementation of pure-state samplizer was given in \cite{WZ24c}. 

    \begin{theorem} [An efficient pure-state samplizer, {\cite[Theorem 6.3]{WZ24c}}] \label{thm:pure-samplizer}
        There is an implementation of $n$-qubit pure-state samplizer with sample complexity $O\rbra{Q^2/\delta}$ and time complexity $O\rbra{nQ^2/\delta}$. 
    \end{theorem}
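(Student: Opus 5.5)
The plan is to build the samplizer from the LMR protocol. The key observation, noted in \cite{WZ24c}, is that for a pure state the block-encoding needed by a samplizer is the reflection $R_\psi = I - 2\ketbra{\psi}{\psi}$ itself. Since $\ketbra{\psi}{\psi}$ is a projector, $R_\psi = e^{-i\pi\ketbra{\psi}{\psi}}$ exactly. So the construction has three steps:
\begin{enumerate}
\item Take the query circuit $\mathcal{A}^U$ with $Q$ queries to $U$.
\item Replace each query by an approximate implementation of $e^{-i\pi\rho}$, with $\rho = \ketbra{\psi}{\psi}$, obtained from the LMR protocol \cite{LMR14,KLL+17,GKP+24} to diamond-norm precision $\delta/Q$.
\item Leave the other gates of $\mathcal{A}$ untouched; they are not counted as additional gates in \cref{def:samplizer}.
\end{enumerate}

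For the error analysis we use subadditivity of diamond-norm errors under sequential composition: interleaving unitary gates with channels, each $\delta/Q$-close to the corresponding ideal unitary, gives total error at most $Q\cdot(\delta/Q) = \delta$. Queries to controlled versions or to $U^\dag$ are handled the same way. Here $R_\psi$ is Hermitian, so $R_\psi^\dag = R_\psi$. A controlled reflection is $e^{-i\pi \ketbra{1}{1}\otimes\rho}$, which LMR also simulates, since the controlled partial swap costs only $O(n)$ gates.

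For the cost, we use the pure-state LMR guarantee stated before \cref{corollary:lmr}: implementing $e^{-i\rho t}$ to precision $\delta'$ takes $O(1/\delta')$ samples and $O(n/\delta')$ two-qubit gates, uniformly in $t$. With $\delta' = \delta/Q$, each query costs $O(Q/\delta)$ samples and $O(nQ/\delta)$ gates. Summed over $Q$ queries this gives $O(Q^2/\delta)$ samples and $O(nQ^2/\delta)$ gates, as claimed.

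The main obstacle is justifying the LMR cost for the fixed time $t = \pi$ with precision $1/\delta'$ rather than the generic $t^2/\delta'$. First I would split $e^{-i\pi\rho}$ into $N$ steps $e^{-i(\pi/N)\rho}$. Each step is simulated by applying $e^{-i(\pi/N)\mathrm{SWAP}}$ to a fresh copy of $\ket{\psi}$ and tracing it out, which uses $O(n)$ two-qubit gates because SWAP factors into $n$ qubitwise swaps. Each step has error $O(1/N^2)$, so $N = O(1/\delta')$ steps suffice for error $\delta'$. If the constant $t = \pi$ causes issues, I would instead cite the refined analysis of \cite{KLL+17,GKP+24}, which gives the sample-optimal $O(t^2/\delta')$ bound.
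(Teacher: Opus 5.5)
Your proposal is correct and follows the construction the paper relies on from \cite{WZ24c}: replace each of the $Q$ queries to $R_\psi = e^{-i\pi\ketbra{\psi}{\psi}}$ by an LMR implementation to diamond-norm precision $\delta/Q$ and add up the errors. This is exactly the relation (sample complexity $Q\cdot S(\delta/Q)$, time complexity $Q\cdot T(\delta/Q)$) that the paper invokes in the proof of \cref{thm:lmr}. One minor fix to your fallback argument: the $O(n)$ gate count for $e^{-i\Delta\,\mathrm{SWAP}}$ does not follow from factoring SWAP into qubitwise swaps alone, since SWAP is their product rather than their sum and the exponential does not split; it follows instead from $\mathrm{SWAP}^2 = I$, e.g., realizing $e^{-i\Delta\,\mathrm{SWAP}} = \cos\Delta\, I - i\sin\Delta\,\mathrm{SWAP}$ with one ancilla qubit and two controlled-SWAPs of $n$ Fredkin gates each.
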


    The goal of this section is to show the time-optimality of the implementation in \cref{thm:pure-samplizer}.
    To this end, we need the quantum query algorithm for pure-state trace distance estimation, given query access to the reflection operators about the input pure states, in \cite{WZ24c}. 

    \begin{lemma} [Quantum subroutine for pure-state trace distance estimation, adapted from {\cite[Corollary 5.3]{WZ24c}}] \label{lemma:subroutine-td}
        Given query access to the reflection operator $R_{\psi} = I - 2\ketbra{\psi}{\psi}$ about an $n$-qubit pure state $\ket{\psi}$, there is a quantum query algorithm $\mathcal{A}^{R_{\psi}}$ that estimates the trace distance $\mathrm{T}\rbra{\ket{\psi}, \ket{0}}$ to within additive error $\varepsilon$ with success probability $\geq 0.99$ using $O\rbra{1/\varepsilon}$ queries to $R_{\psi}$ and $O\rbra{n/\varepsilon}$ two-qubit gates. 
    \end{lemma}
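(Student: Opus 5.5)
The plan is to reduce the task to amplitude estimation in the two-dimensional subspace $\mathrm{span}\cbra{\ket{0}, \ket{\psi}}$, where $\ket{0}$ means $\ket{0}^{\otimes n}$. Write $\abs{\bra{0}\ket{\psi}} = \cos\theta$ with $\theta \in \sbra{0, \pi/2}$. For pure states,
\begin{equation}
    \mathrm{T}\rbra{\ket{\psi}, \ket{0}} = \sqrt{1 - \abs{\bra{0}\ket{\psi}}^2} = \sin\theta .
\end{equation}
Since $\sin$ is $1$-Lipschitz, it suffices to estimate $\theta$ to within additive error $O\rbra{\varepsilon}$ with success probability at least $0.99$.

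\textbf{Rotation structure.} Let $R_0 = I - 2\ketbra{0}{0}$. This can be implemented with $O\rbra{n}$ two-qubit gates as a multi-controlled phase, using $O\rbra{n}$ ancillas that are returned to $\ket{0}$. The product $G = R_0 R_\psi$ is a composition of two reflections. It therefore preserves $\mathrm{span}\cbra{\ket{0}, \ket{\psi}}$ and acts there as a rotation by angle $2\theta$; if $\ket{\psi}$ is collinear with $\ket{0}$, the subspace is one-dimensional and the case is trivial. Consequently $\abs{\bra{0} G^k \ket{0}} = \abs{\cos\rbra{2k\theta}}$, so preparing $\ket{0}$, applying $G^k$, and measuring in the computational basis gives outcome $0^n$ with probability $\cos^2\rbra{2k\theta}$. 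This experiment costs $k$ queries to $R_\psi$ and $O\rbra{kn}$ gates, and it needs no controlled access to $R_\psi$. That matters because the lemma only grants query access to $R_\psi$ itself.

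\textbf{Estimating $\theta$ from these experiments.} I would run a standard iterative amplitude estimation scheme in the style of Kitaev, which is the step I expect to be the main obstacle. The scheme uses the experiments above with exponentially increasing $k = 2^j$, up to $k_{\max} = \Theta\rbra{1/\varepsilon}$. It repeats each level $O\rbra{1}$ times, or $O\rbra{\log\rbra{1/\varepsilon} - j}$ times in the refined schedule, so that the failure probabilities sum to at most $0.01$. At each level it refines an interval containing $\theta$ using the measured frequency of $\cos^2\rbra{2k\theta}$. The difficulty is removing the ambiguity of $\cos^2$ under $\theta \mapsto \pm\theta + \pi\mathbb{Z}/(2k)$. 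The way to handle it is to keep the current interval short enough, of length at most $c/k$ for a small constant $c$, that only one branch is consistent with it. Then an $O\rbra{1}$-accurate estimate of $\cos^2\rbra{2k\theta}$ shrinks the interval by a constant factor; near the extremes of $\cos^2$, one also uses the level $k$ and a shifted level. With this schedule the total query count is a geometric sum dominated by its last level, giving $O\rbra{1/\varepsilon}$ queries overall. I am relying on the known analysis of such schemes rather than re-deriving it.

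\textbf{Gate count and conclusion.} Each query is accompanied by one application of $R_0$, and each experiment ends with an $n$-qubit measurement plus an $O\rbra{n}$-gate check for $0^n$. The total gate count is therefore $O\rbra{n}$ times the total number of queries, namely $O\rbra{n/\varepsilon}$. The classical post-processing is not counted as quantum gates. Outputting $\sin\hat\theta$, where $\hat\theta$ is the final estimate of $\theta$, then gives the claimed estimator.
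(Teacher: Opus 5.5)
Your rotation picture for $G = R_0 R_\psi$ is correct, but the estimation step has a gap that no choice of schedule can repair: your experiments cannot distinguish $\theta$ from $\pi/2-\theta$. Write $\ket{\psi} = \cos\theta\ket{0} + \sin\theta\ket{v}$ (up to global phase) with $\ket{v}\perp\ket{0}$. Then $G^k\ket{0}$ yields $0^n$ with probability $\cos^2\rbra{2k\theta}$, and each other outcome $x$ with probability $\sin^2\rbra{2k\theta}\abs{\bra{x}\ket{v}}^2$. For every integer $k$ we have $\cos^2\rbra{2k\rbra{\pi/2-\theta}} = \cos^2\rbra{k\pi-2k\theta} = \cos^2\rbra{2k\theta}$. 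So the states $\cos\theta\ket{0}+\sin\theta\ket{v}$ and $\sin\theta\ket{0}+\cos\theta\ket{v}$ give identical statistics at every level. Yet their trace distances to $\ket{0}$ are $\sin\theta$ and $\cos\theta$, about $0.38$ versus $0.92$ at $\theta=\pi/8$.

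In your interval scheme this is the missing base case. The point $\pi/2$ lies in your ambiguity set $\pi\mathbb{Z}/\rbra{2k}$ at every level, and the initial interval $\sbra{0,\pi/2}$ never has length at most $c/k$, so the two branches are never separated. The iterative amplitude-estimation analyses you invoke avoid this only because their start state $A\ket{0}$ sits at angle $\theta$ from a reflection axis. They therefore observe odd multiples $\sin^2\rbra{\rbra{2k+1}\theta}$, and the $k=0$ round already gives the monotone quantity $\sin^2\theta$. Your start state $\ket{0}$ is an eigenvector of $R_0$, so any word in $R_0$ and $R_\psi$ applied to it has overlap $\pm\cos\rbra{2j\theta}$ with $\ket{0}$, i.e., only even multiples.

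Worse, under your reading that only uncontrolled queries are allowed, the statement itself is false. For $n=1$, the orthogonal states $\ket{\psi'} = \sin\theta\ket{0}+\cos\theta\ket{1}$ and $\ket{\psi''} = \cos\theta\ket{0}-\sin\theta\ket{1}$ satisfy $R_{\psi'} = -R_{\psi''}$. Any algorithm making $Q$ uncontrolled queries therefore produces final states differing only by the global phase $\rbra{-1}^Q$, while the targets are $\cos\theta$ and $\sin\theta$. The paper does not prove this lemma; it imports it from \cite[Corollary 5.3]{WZ24c}, and it must be read with controlled access to $R_\psi$. That phase reference is your missing idea.

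With controlled $G$, run phase estimation on input $\ket{0}$. It sees the eigenphases $\pm 2\theta$ of $G$ on $\mathrm{span}\cbra{\ket{0},\ket{\psi}}$, as opposed to $\pm\rbra{\pi-2\theta}$ for the mirrored state. Folding the estimate into $\sbra{0,\pi}$ and halving gives $\theta$ to within $\varepsilon$ using $O\rbra{1/\varepsilon}$ controlled queries. Equivalently, Hadamard tests return the signed value $\cos\rbra{2k\theta}$, and $\cos 2\theta$ is monotone on $\sbra{0,\pi/2}$; this is exactly the base case your refinement lacks. A median over $O\rbra{1}$ repetitions raises the success probability to $0.99$. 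Since each controlled $R_0$ costs $O\rbra{n}$ gates, the total is $O\rbra{n/\varepsilon}$ gates.
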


    Now we are ready to prove the quantum time lower bounds for the pure-state samplizer. 

    \begin{theorem} [Time-optimality of samplizer] \label{thm:time-samplizer}
        Any implementation of $n$-qubit pure-state samplizer requires quantum time complexity $\Omega\rbra{nQ^2/\delta}$. 
    \end{theorem}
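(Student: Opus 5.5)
We will prove \cref{thm:time-samplizer} by a reduction from pure-state trace distance estimation, combining \cref{lemma:subroutine-td} with the time lower bound \cref{thm:time-puretd}. Let $T\rbra{Q, \delta}$ denote the time complexity of an arbitrary implementation of the $n$-qubit pure-state samplizer.

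\textbf{Step 1 (constant precision).} Fix $\varepsilon$ and take the algorithm $\mathcal{A}^{R_\psi}$ of \cref{lemma:subroutine-td}. It uses $Q_0 = O\rbra{1/\varepsilon}$ queries and $O\rbra{n/\varepsilon}$ gates, and decides $\textsc{PureTD}\sbra{n, 0, \varepsilon}$ with success probability $\geq 0.99$. Samplize it with precision $\delta_0 = 0.3$. Since the diamond-norm error is at most $0.3$, the acceptance probabilities move by at most $0.3$, so the result is a tester with error at most $1/3$. Its time complexity is $T\rbra{Q_0, \delta_0} + O\rbra{n/\varepsilon}$. By \cref{thm:time-puretd} this is $\Omega\rbra{n/\varepsilon^2}$. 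Because $n/\varepsilon = o\rbra{n/\varepsilon^2}$ for small $\varepsilon$, we obtain $T\rbra{Q_0, 0.3} = \Omega\rbra{n/\varepsilon^2} = \Omega\rbra{nQ_0^2}$. As $\varepsilon$ ranges over $(0,1)$, this covers every $Q$ up to constant factors, and we may pad with dummy queries to handle arbitrary $Q$.

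\textbf{Step 2 (general $\delta$).} Here we must recover the $1/\delta$ factor, and this is the step I expect to be the main obstacle. The plan is an amplification-by-composition argument. An algorithm with $Q$ queries and target precision $0.3$ can be split into $k = \Theta\rbra{1/\delta}$ consecutive blocks of $Q' = Q\delta$ queries each, up to rounding. Samplize each block separately with precision $\delta' = \Theta\rbra{1/k} = \Theta\rbra{\delta}$. By the triangle inequality for the diamond norm, the composed errors add up to at most $0.3$, so the composition is a valid $0.3$-samplization of the whole algorithm. Its cost is $k \cdot T\rbra{Q\delta, \delta}$ plus the gate cost of the original algorithm.

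Now choose parameters. Take the query count $Q_0 = \Theta\rbra{Q}$ for the trace distance algorithm, with $\varepsilon = \Theta\rbra{1/Q}$. Step 1 then gives
\[
k \cdot T\rbra{Q\delta, \delta} \geq \Omega\rbra{nQ^2} - O\rbra{nQ}.
\]
Substituting $Q'' = Q\delta$, this rearranges to $T\rbra{Q'', \delta} \geq \Omega\rbra{\delta n \rbra{Q''/\delta}^2} = \Omega\rbra{nQ''^2/\delta}$, which is the claim.

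\textbf{Step 3 (cleaning up).} Several routine points remain to check:
\begin{itemize}
\item Block-wise samplizing is legitimate, since the samplizer is defined for every query circuit family, including the block subcircuits.
\item The additional gates of the original algorithm are lower order. This requires $nQ \ll nQ^2$, which holds once $Q$ exceeds a large constant. Small $Q$ is handled by padding with dummy queries, provided $T$ is monotone in $Q$. To avoid assuming monotonicity, it is safer to define $T$ as a maximum over algorithms.
\item The regime $\delta$ constant reduces directly to Step 1.
\end{itemize}
If the blocking argument proves awkward, a fallback is to estimate trace distance at precision $\varepsilon$ via a majority vote over $\Theta\rbra{1}$ runs. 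That alone does not produce $1/\delta$, so the composition argument above is the intended route.
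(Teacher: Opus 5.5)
Your proposal is correct and follows essentially the same route as the paper: samplize the $O(1/\varepsilon)$-query trace distance algorithm of \cref{lemma:subroutine-td} at constant precision and invoke \cref{thm:time-puretd} to get $\Omega(nQ^2)$ at constant $\delta$. You then recover the $1/\delta$ factor by splitting a $\Theta(Q/\delta)$-query algorithm into $\Theta(1/\delta)$ blocks, each samplized to precision $\delta$; the paper packages this block step as subadditivity, submultiplicativity and monotonicity of $T$ with $m = \lfloor 1/(100\delta) \rfloor$, whereas you plug the composed samplization directly into the PureTD lower bound, which is the same argument.
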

    \begin{proof}
        Suppose that there is an implementation of $n$-qubit pure-state samplizer
with sample complexity $S(Q,\delta)$ and time complexity $T(Q,\delta)$ as
in \cref{def:samplizer}. Applying this implementation to the quantum algorithm in \cref{lemma:subroutine-td}, with samplization precision
$0.01$, gives a quantum algorithm that estimates
$\mathrm{T}(\ket{\psi},\ket{0})$ to within additive error $\varepsilon$
with success probability at least $0.9$, using $S(\Theta(1/\varepsilon),0.01)$
samples of $\ket{\psi}$ and $T(\Theta(1/\varepsilon),0.01) + O(n/\varepsilon)$ two-qubit gates,
where the $O(n/\varepsilon)$ term comes from the non-query gate complexity of
the original query algorithm in \cref{lemma:subroutine-td}.
        Note that this algorithm can be used to solve $\textsc{PureTD}\sbra{n, 0, 2\varepsilon}$. 
        By \cref{thm:time-puretd}, we have
        \begin{equation}
            T(\Theta(1/\varepsilon),0.01) + O(n/\varepsilon) \geq \mathsf{T}\rbra{\textsc{PureTD}\sbra{n, 0, 2\varepsilon}} \geq \Omega\rbra{n/\varepsilon^2}. 
        \end{equation}
        By letting $Q = \Theta\rbra{1/\varepsilon}$ due to the arbitrariness of $\varepsilon$, we have
        \begin{equation}
            T\rbra{Q, 0.01} \geq cnQ^2
        \end{equation}
        for sufficiently large $Q > 0$, where $c$ is a universal constant. 
        
        To establish a lower bound on $T\rbra{Q, \delta}$ for arbitrarily small $\delta > 0$, we note that $T\rbra{Q, \delta}$ satisfies the subadditivity that
        \begin{equation}
            T\rbra{Q_1+Q_2, \delta_1+\delta_2} \leq T\rbra{Q_1, \delta_1} + T\rbra{Q_2, \delta_2}
        \end{equation}
        for any integers $Q_1, Q_2 \geq 0$ and real numbers $\delta_1, \delta_2 \in \rbra{0, 1}$. 
        This is because one can always samplize a quantum query algorithm with query complexity $Q_1+Q_2$ to precision $\delta_1 + \delta_2$ by first samplizing the first $Q_1$ queries to precision $\delta_1$ and then samplizing the remaining $Q_2$ queries to precision $\delta_2$. 
        Therefore, we further have 
        \begin{enumerate}
            \item Submultiplicativity: $T\rbra{mQ, m\delta} \leq m \cdot T\rbra{Q, \delta}$ for every integers $m, Q \geq 1$ and real number $\delta > 0$.
            \item Monotonicity: $T\rbra{Q_1, \delta_1} \geq T\rbra{Q_2, \delta_2}$ if $Q_1 \geq Q_2 > 0$ and $0 < \delta_1 \leq \delta_2$. 
        \end{enumerate}
        For $\delta \in \rbra{0, 0.01}$, by taking $m = \floor{\frac{1}{100\delta}} \geq 1$ (which gives $0 < m\delta \leq 0.01$), we have
        \begin{align}
            T\rbra{Q, \delta} 
            & \geq \frac{1}{m} \cdot T\rbra{mQ, m\delta} \\
            & \geq \frac{1}{m} \cdot T\rbra{mQ, 0.01} \\
            & \geq \frac{1}{m} \cdot cn\rbra{mQ}^2 \\
            & = cn \cdot \floor*{\frac{1}{100\delta}} \cdot Q^2 \\
            & \geq \Omega\rbra{nQ^2/\delta}.
        \end{align}
        This means that any implementation of $n$-qubit pure-state samplizer has time complexity $\Omega\rbra{nQ^2/\delta}$. 
    \end{proof}

    \cref{thm:time-samplizer} will be used to prove the time-optimality of the LMR protocol in \cref{sec:lmr}. 

    \subsection{LMR protocol} \label{sec:lmr}

    In this section, we consider the LMR protocol \cite{LMR14,KLL+17,GKP+24}, which is a method to implement the unitary operator $e^{-i\rho t}$ using samples of $\rho$.
    For simplicity, we only consider the LMR protocol for pure states.

    \begin{theorem} [LMR protocol for pure states, adapted from \cite{LMR14,KLL+17,GKP+24}]
        For every $t \in \rbra{0, 2\pi}$ and $n$-qubit pure state $\ket{\psi}$, we can implement the unitary operator $e^{-i\ketbra{\psi}{\psi}t}$ to precision $\delta$ in diamond norm using $O\rbra{1/\delta}$ samples of $\ket{\psi}$ and $O\rbra{n/\delta}$ two-qubit gates. 
    \end{theorem}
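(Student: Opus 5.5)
We establish the upper bound by the standard density matrix exponentiation (LMR) construction, specialized to pure states. Write $\rho = \ketbra{\psi}{\psi}$ and split the evolution time into $N$ slices of length $\Delta = t/N$. In each slice we take one fresh sample $\rho$ in an $n$-qubit register $\mathsf{B}$ and apply $e^{-i \mathrm{SWAP}\,\Delta}$ to the system register $\mathsf{A}$ and $\mathsf{B}$. We then discard $\mathsf{B}$. The well-known identity
\begin{equation}
\tr_{\mathsf{B}}\rbra*{e^{-i\mathrm{SWAP}\Delta}\rbra{\xi\otimes\rho}e^{i\mathrm{SWAP}\Delta}} = \xi - i\Delta\sbra{\rho,\xi} + O\rbra{\Delta^2}
\end{equation}
shows that one slice agrees with the channel $\xi\mapsto e^{-i\rho\Delta}\xi e^{i\rho\Delta}$ up to $O\rbra{\Delta^2}$ in diamond norm. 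The diamond-norm version holds because the identity is linear in $\xi$ and can be tensored with an arbitrary reference system.

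By the triangle inequality over $N$ slices, the total error is $O\rbra{N\Delta^2} = O\rbra{t^2/N}$. Since $t\in\rbra{0,2\pi}$ is bounded, choosing $N = \Theta\rbra{1/\delta}$ gives precision $\delta$ with $O\rbra{1/\delta}$ samples. This step requires a careful constant. The more refined analyses of \cite{KLL+17,GKP+24} give the second-order term explicitly; for pure states one can even compute the slice channel exactly, since $\mathrm{SWAP}^2 = I$ gives $e^{-i\mathrm{SWAP}\Delta} = \cos\Delta\, I - i\sin\Delta\,\mathrm{SWAP}$.

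For the gate count, note that the $n$-qubit SWAP factorizes as $\bigotimes_{j=1}^n \mathrm{SWAP}_j$ over qubit pairs. However, $e^{-i\Delta\bigotimes_j\mathrm{SWAP}_j}$ is not a product of local exponentials. We therefore implement it with one ancilla control qubit, using the decomposition $e^{-i\Delta \mathrm{SWAP}} = \cos\Delta\, I - i\sin\Delta\,\mathrm{SWAP}$ realized as follows. We diagonalize via a basis change: we apply controlled-SWAP gates (Fredkin) pairwise from the ancilla, which costs $O\rbra{n}$ gates. This is the standard trick whereby $\mathrm{SWAP}$ has eigenvalues $\pm1$, and its eigenvalue can be computed into an ancilla by a SWAP-test-like Hadamard--controlled-SWAP--Hadamard circuit. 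We then apply a phase $e^{\mp i\Delta}$ on the ancilla and uncompute. Each slice thus costs $O\rbra{n}$ two-qubit gates (after decomposing Fredkin gates into $O(1)$ two-qubit gates), for a total of $O\rbra{n/\delta}$.

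The main obstacle is making the $O\rbra{n}$-gate implementation of $e^{-i\Delta\mathrm{SWAP}}$ exact and clean. The Hadamard-test circuit computes the SWAP eigenvalue only coherently if the ancilla is returned to $\ket{0}$. This works because the controlled-SWAP conjugation by Hadamards maps the $\pm1$ eigenspaces of $\mathrm{SWAP}$ to ancilla values $\ket{0}/\ket{1}$ deterministically. The phase kickback then implements $e^{-i\Delta\mathrm{SWAP}}$ exactly, and repeating the same circuit restores the ancilla. I would verify this identity on the two eigenspaces separately, and then the theorem follows.
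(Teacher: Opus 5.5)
Your proposal is correct and is essentially the standard density-matrix-exponentiation argument of the works the statement is adapted from; the paper gives no proof of its own beyond citing \cite{LMR14,KLL+17,GKP+24}. Your explicit Hadamard--controlled-SWAP--phase--uncompute circuit, which implements $e^{-i\Delta Z\otimes\mathrm{SWAP}}$ and hence $e^{-i\Delta\,\mathrm{SWAP}}$ exactly when the ancilla starts in $\ket{0}$, is a valid $O(n)$-gate realization of each slice, so $O(1/\delta)$ slices give the stated $O(1/\delta)$ samples and $O(n/\delta)$ two-qubit gates.
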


    In the following, we show that the LMR protocol is time-optimal for pure states. 
    \begin{theorem} [Quantum time lower bounds for LMR protocol] \label{thm:lmr}
        Any implementation of the unitary operator $e^{-i\ketbra{\psi}{\psi}t}$ to precision $\delta$ in diamond norm using samples of an $n$-qubit pure state $\ket{\psi}$ requires quantum time complexity $\Omega\rbra{n/\delta}$, even if $t = \pi$.
    \end{theorem}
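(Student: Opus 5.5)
We reduce from the samplizer lower bound \cref{thm:time-samplizer}, following the relation $Q \cdot f_{\textup{LMR}}\rbra{\delta/Q} \geq f_{\textup{samplize}}\rbra{Q, \delta}$ sketched in the introduction. Let $f_{\textup{LMR}}\rbra{\delta}$ be the time complexity of an arbitrary sample-based implementation of $e^{-i\ketbra{\psi}{\psi}\pi} = I - 2\ketbra{\psi}{\psi} = R_\psi$ to precision $\delta$ in diamond norm. For $t=\pi$ the target unitary is exactly the reflection $R_\psi$ queried in \cref{def:samplizer}, so the whole argument stays within the pure-state samplizer framework.

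\textbf{Step 1: a samplizer from the LMR implementation.} Given any query algorithm $\mathcal{A}^U$ with $Q$ queries, replace each query to $U$ by an independent run of the assumed implementation of $R_\psi$ to precision $\delta/Q$, each run consuming fresh samples of $\ket{\psi}$. Each replaced query is a channel within $\delta/Q$ of $R_\psi$ in diamond norm. Since the diamond norm is subadditive under composition and stable under tensoring with the identity on the workspace, the hybrid argument gives an overall error of at most $Q \cdot \delta/Q = \delta$. This is a valid pure-state samplizer whose additional gate count is $Q \cdot f_{\textup{LMR}}\rbra{\delta/Q}$ (the non-query gates of $\mathcal{A}$ are not counted as additional). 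Controlled queries, if $\mathcal{A}$ uses them, would need separate handling; the algorithm in \cref{lemma:subroutine-td} can be taken to use plain queries, and the proof of \cref{thm:time-samplizer} only instantiates that algorithm.

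\textbf{Step 2: apply \cref{thm:time-samplizer}.} This gives $Q \cdot f_{\textup{LMR}}\rbra{\delta/Q} \geq c\, n Q^2/\delta$ for sufficiently large $Q$ and small $\delta$. Setting $\delta' = \delta/Q$ yields $f_{\textup{LMR}}\rbra{\delta'} \geq c\, n Q/\delta = c\, n/\delta'$.

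\textbf{Main obstacle.} \cref{thm:time-samplizer} holds universally over samplizer implementations and all $Q,\delta$ only asymptotically, so we must choose parameters carefully. I would fix $\delta = 0.01$, the regime established directly in that proof via \cref{thm:time-puretd}, and let $Q = \lceil 0.01/\delta' \rceil$ for the target $\delta'$. Then $T\rbra{Q, 0.01} \geq c n Q^2$ for $Q$ large, i.e., for $\delta'$ small. Constant $\delta'$ is handled trivially by $f_{\textup{LMR}} \geq \Omega\rbra{n}$, since the channel must depend on all $n$ qubits (or via monotonicity in $\delta'$). It remains to check that the samplizer built in Step 1 is admissible in the proof of \cref{thm:time-samplizer}: that proof only uses the constructed samplizer on the trace-distance algorithm from \cref{lemma:subroutine-td}, where it produces a tester for $\textsc{PureTD}\sbra{n,0,2\varepsilon}$ of time $Q f_{\textup{LMR}}\rbra{0.01/Q} + O\rbra{nQ}$. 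Comparing with \cref{thm:time-puretd} then directly gives $f_{\textup{LMR}}\rbra{\delta'} = \Omega\rbra{n/\delta'}$.
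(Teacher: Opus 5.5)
Your proposal is correct and is essentially the paper's argument. You build a pure-state samplizer from $Q$ independent runs of the $R_\psi$ implementation at precision $\delta/Q$; the paper cites the construction of \cite[Theorem 6.3]{WZ24c} at this step, while you spell out the hybrid argument. You then apply \cref{thm:time-samplizer} (only its $\delta=0.01$ case is needed) and substitute $\delta'=\delta/Q$.

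One small fix: take $Q=\lfloor 0.01/\delta'\rfloor$ rather than the ceiling. Then running each query at precision $\delta'$ keeps the total error at most $0.01$. With the ceiling you instead bound $f_{\textup{LMR}}(0.01/Q)$ at a precision $0.01/Q\le\delta'$, which is the wrong direction for monotonicity.
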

    \begin{proof}
        The case of $t = \pi$ refers to the reflection operator about $\ket{\psi}$: $e^{-i\ketbra{\psi}{\psi}\pi} = I - 2\ketbra{\psi}{\psi} = R_{\psi}$. 
        Suppose that we can implement $R_{\psi}$ to precision $\delta$ in diamond norm using $S\rbra{\delta}$ samples of $\ket{\psi}$ and $T\rbra{\delta}$ two-qubit gates. 
        Then, using the construction of the pure-state samplizer in \cite[Theorem 6.3]{WZ24c}, there is an implementation of pure-state samplizer with sample complexity $Q \cdot S\rbra{\delta/Q}$ and time complexity $Q \cdot T\rbra{\delta/Q}$. 
        By \cref{thm:time-samplizer}, we have 
        \begin{equation}
            Q \cdot T\rbra{\delta/Q} \geq \Omega\rbra{nQ^2/\delta}.
        \end{equation}
        for sufficiently large integer $Q \geq 1$ and every $\delta \in \rbra{0, 0.01}$. 
        Finally, by letting $\delta' = \delta/Q$, we have $T\rbra{\delta'} \geq \Omega\rbra{n/\delta'}$. 
        This means that for sufficiently small $\delta' > 0$, any implementation of $R_{\psi}$ using samples of $\ket{\psi}$ has to use $\Omega\rbra{n/\delta'}$ two-qubit gates. 
    \end{proof}

    \addcontentsline{toc}{section}{References}

    \bibliographystyle{alphaurl}
    \bibliography{main}

@article{Hel67,
    author = {Helstrom, Carl W.},
    title = {Detection theory and quantum mechanics},
    journal = {Information and Control},
    volume = {10},
    number = {3},
    pages = {254--291},
    doi = {10.1016/S0019-9958(67)90302-6},
    year = {1967}
}

@article{Hol73,
    author = {Holevo, Alexander S.},
    title = {Statistical decision theory for quantum systems},
    journal = {Journal of Multivariate Analysis},
    volume = {3},
    number = {4},
    pages = {337--394},
    doi = {10.1016/0047-259X(73)90028-6},
    year = {1973}
}

@article{CWZ24,
    author = {Chen, Kean and Wang, Qisheng and Zhang, Zhicheng},
    title = {Local test for unitarily invariant properties of bipartite quantum states},
    journal = {IEEE Transactions on Information Theory},
    volume = {},
    number = {},
    pages = {},
    doi = {10.1109/TIT.2026.3697790},
    eprint = {2404.04599},
    year = {2026}
}

@inproceedings{SW22,
    title = {Testing matrix product states},
    author = {Soleimanifar, Mehdi and Wright, John},
    booktitle = {Proceedings of the 2022 Annual ACM-SIAM Symposium on Discrete Algorithms},
    pages = {1679--1701},
    doi = {10.1137/1.9781611977073.68},
    year = {2022},
}

@article{BCWdW01,
    author = {Buhrman, Harry and Cleve, Richard and Watrous, John and de Wolf, Ronald},
    title = {Quantum Fingerprinting},
    journal = {Physical Review Letters},
    volume = {87},
    number = {16},
    pages = {167902},
    doi = {10.1103/PhysRevLett.87.167902},
    year = {2001}
}

@article{EAO+02,
    author = {Ekert, Artur K. and Alves, Carolina Moura and Oi, Daniel K. L. and Horodecki, Micha{\l} and Horodecki, Pawe{\l} and Kwek, L. C.},
    title = {Direct estimations of linear and nonlinear functionals of a quantum state},
    journal = {Physical Review Letters},
    volume = {88},
    number = {21},
    pages = {217901},
    doi = {10.1103/PhysRevLett.88.217901},
    year = {2002}
}

@article{HM13,
    author = {Harrow, Aram W. and Montanaro, Ashley},
    title = {Testing product states, quantum {Merlin-Arthur} games and tensor optimization},
    journal = {Journal of the ACM},
    volume = {60},
    number = {1},
    pages = {3:1--3:43},
    doi = {10.1145/2432622.2432625},
    year = {2013}
}

@article{LMR14,
    author = {Lloyd, Seth and Mohseni, Masoud and Rebentrost, Patrick},
    journal = {Nature Physics},
    title = {Quantum principal component analysis},
    volume = {10},
    number = {9},
    pages = {631-633},
    doi = {10.1038/nphys3029},
    year = {2014}
}

@article{KLL+17,
    author = {Kimmel, Shelby and Lin, Cedric Yen-Yu and Low, Guang Hao and Ozols, Maris and Yoder, Theodore J.},
    title = {Hamiltonian simulation with optimal sample complexity},
    journal = {npj Quantum Information},
    volume = {3},
    number = {1},
    pages = {1--7},
    doi = {10.1038/s41534-017-0013-7},
    year = {2017}
}

@article{WZ24,
    author = {Wang, Qisheng and Zhang, Zhicheng},
    title = {Time-efficient quantum entropy estimator via samplizer},
    journal = {IEEE Transactions on Information Theory},
    volume = {71},
    number = {12},
    pages = {9569--9599},
    doi = {10.1109/TIT.2025.3576137},
    year = {2025}
}

@incollection{MdW16,
    author = {Montanaro, Ashley and de Wolf, Ronald},
    title = {A survey of quantum property testing},
    year = {2016},
    publisher = {University of Chicago},
    booktitle = {Theory of Computing Library},
    series = {Graduate Surveys},
    number = {7},
    doi = {10.4086/toc.gs.2016.007},
    pages = {1--81},
}

@article{BV97,
    author = {Bernstein, Ethan and Vazirani, Umesh},
    title = {Quantum complexity theory},
    journal = {SIAM Journal on Computing},
    volume = {26},
    number = {5},
    pages = {1411--1473},
    doi = {10.1137/S0097539796300921},
    year = {1997}
}

@article{Amb07,
    author = {Ambainis, Andris},
    title = {Quantum walk algorithm for element distinctness},
    journal = {SIAM Journal on Computing},
    volume = {37},
    number = {1},
    pages = {210--239},
    doi = {10.1137/S0097539705447311},
    year = {2007}
}

@misc{Jef22,
    author = {Jeffery, Stacey},
    title = {Quantum subroutine composition},
    eprint = {2209.14146},
    howpublished = {ArXiv e-prints},
    year = {2022}
}

@article{BJY24,
    author = {Belovs, Aleksandrs and Jeffery, Stacey and Yolcu, Duyal},
    title = {Taming quantum time complexity},
    journal = {Quantum},
    volume = {8},
    number = {},
    pages = {1444},
    doi = {10.22331/q-2024-08-23-1444},
    year = {2024}
}

@inproceedings{BCJ+13,
    author = {Belovs, Aleksandrs and Childs, Andrew M. and Jeffery, Stacey and Kothari, Robin and Magniez, Frédéric},
    title = {Time-efficient quantum walks for 3-distinctness},
    booktitle = {Proceedings of the 40th International Colloquium on Automata, Languages, and Programming},
    pages = {105--122},
    doi = {10.1007/978-3-642-39206-1_10},
    year = {2013}
}

@inproceedings{JZ23,
    author = {Jeffery, Stacey and Zur, Sebastian},
    title = {Multidimensional quantum walks},
    booktitle = {Proceedings of the 55th Annual ACM Symposium on Theory of Computing},
    pages = {1125--1130},
    doi = {10.1145/3564246.3585158},
    year = {2023}
}

@inproceedings{CJOP20,
    author = {Cornelissen, Arjan and Jeffery, Stacey and Ozols, Maris and Piedrafita, Alvaro},
    title = {Span programs and quantum time complexity},
    booktitle = {Proceedings of the 45th International Symposium on Mathematical Foundations of Computer Science},
    pages = {26:1--26:14},
    doi = {10.4230/LIPIcs.MFCS.2020.26},
    year = {2020}
}

@misc{ABB+23,
    author = {Allcock, Jonathan and Bao, Jinge and Belovs, Aleksandrs and Lee, Troy and Santha, Miklos},
    title = {On the quantum time complexity of divide and conquer},
    howpublished = {ArXiv e-prints},
    eprint = {2311.16401},
    year = {2023}
}

@article{BCH06,
    author = {Bacon, Dave and Chuang, Isaac L. and Harrow, Aram W.},
    title = {Efficient quantum circuits for {Schur} and {Clebsch-Gordan} transforms},
    journal = {Physical Review Letters},
    volume = {97},
    number = {17},
    pages = {170502},
    doi = {10.1103/PhysRevLett.97.170502},
    year = {2006}
}

@inproceedings{BCH07,
    author = {Bacon, Dave and Chuang, Isaac L. and Harrow, Aram W.},
    title = {The quantum {Schur} and {Clebsch-Gordan} transforms: {I.} efficient qudit circuits},
    booktitle = {Proceedings of the 18th Annual ACM-SIAM Symposium on Discrete Algorithms},
    pages = {1235--1244},
    doi = {},
    url = {https://dl.acm.org/doi/abs/10.5555/1283383.1283516},
    year = {2007}
}

@inproceedings{CHW07,
    author = {Childs, Andrew M. and Harrow, Aram W. and Wocjan, Pawe{\l}},
    title = {Weak {Fourier-Schur} sampling, the hidden subgroup problem, and the quantum collision problem},
    booktitle = {Proceedings of the 24th Annual Symposium on Theoretical Aspects of Computer Science},
    pages = {598-609},
    doi = {10.1007/978-3-540-70918-3_51},
    year = {2007}
}

@misc{Ngu23,
    author = {Nguyen, Quynh T.},
    title = {The mixed {Schur} transform: efficient quantum circuit and applications},
    howpublished = {ArXiv e-prints},
    eprint = {2310.01613},
    year = {2023}
}

@misc{GBO23,
    author = {Grinko, Dmitry and Burchardt, Adam and Ozols, Maris},
    title = {{Gelfand-Tsetlin} basis for partially transposed permutations, with applications to quantum information},
    howpublished = {ArXiv e-prints},
    eprint = {2310.02252},
    year = {2023}
}

@misc{GKP+24,
    author = {Go, Byeongseon and Kwon, Hyukjoon and Park, Siheon and Patel, Dhrumil and Wilde, Mark M.},
    title = {Density matrix exponentiation and sample-based {Hamiltonian} simulation: Non-asymptotic analysis of sample complexity},
    howpublished = {ArXiv e-prints},
    eprint = {2412.02134},
    year = {2024}
}

@article{HHJ+17,
    author = {Haah, Jeongwan and Harrow, Aram W. and Ji, Zhengfeng and Wu, Xiaodi and Yu, Nengkun},
    title = {Sample-optimal tomography of quantum states},
    journal = {IEEE Transactions on Information Theory},
    volume = {63},
    number = {9},
    pages = {5628--5641},
    doi = {10.1109/TIT.2017.2719044},
    year = {2017}
}

@inproceedings{OW16,
    author = {O'Donnell, Ryan and Wright, John},
    title = {Efficient quantum tomography},
    booktitle = {Proceedings of the 48th Annual ACM Symposium on Theory of Computing},
    pages = {899--912},
    doi = {10.1145/2897518.2897544},
    year = {2016}
}

@inproceedings{OW17,
    author = {O'Donnell, Ryan and Wright, John},
    title = {Efficient quantum tomography {II}},
    booktitle = {Proceedings of the 49th Annual ACM Symposium on Theory of Computing},
    pages = {962--974},
    doi = {10.1145/3055399.3055454},
    year = {2017}
}

@inproceedings{BOW19,
    author = {B{\u{a}}descu, Costin and O'Donnell, Ryan and Wright, John},
    title = {Quantum state certification},
    booktitle = {Proceedings of the 51st Annual ACM SIGACT Symposium on Theory of Computing},
    pages = {503--514},
    doi = {10.1145/3313276.3316344},
    year = {2019}
}

@article{OW21,
    author = {O'Donnell, Ryan and Wright, John},
    title = {Quantum spectrum testing},
    journal = {Communications in Mathematical Physics},
    volume = {387},
    number = {1},
    pages = {1--75},
    doi = {10.1007/s00220-021-04180-1},
    year = {2021}
}

@article{WZ24b,
    author = {Wang, Qisheng and Zhang, Zhicheng},
    title = {Fast quantum algorithms for trace distance estimation},
    journal = {IEEE Transactions on Information Theory},
    volume = {70},
    number = {4},
    pages = {2720--2733},
    doi = {10.1109/TIT.2023.3321121},
    year = {2024}
}

@article{AISW20,
    author = {Acharya, Jayadev and Issa, Ibrahim and Shende, Nirmal V. and Wagner, Aaron B.},
    title = {Estimating Quantum Entropy},
    journal = {IEEE Journal on Selected Areas in Information Theory},
    volume = {1},
    number = {2},
    pages = {454--468},
    doi = {10.1109/JSAIT.2020.3015235},
    year = {2020}
}

@misc{BMW16,
    author = {Bavarian, Mohammad and Mehraban, Saeed and Wright, John},
    title = {Learning entropy},
    howpublished = {A manuscript on von Neumann entropy estimation, private communication},
    year = {2016}
}

@article{BBC+01,
    author = {Beals, Robert and Buhrman, Harry and Cleve, Richard and Mosca, Michele and de Wolf, Ronald},
    title = {Quantum lower bounds by polynomials},
    journal = {Journal of the ACM},
    volume = {48},
    number = {4},
    pages = {778--797},
    doi = {10.1145/502090.502097},
    year = {2001}
}

@article{Amb02,
    author = {Ambainis, Andris},
    title = {Quantum lower bounds by quantum arguments},
    journal = {Journal of Computer and System Sciences},
    volume = {64},
    number = {4},
    pages = {750--767},
    doi = {10.1006/jcss.2002.1826},
    year = {2002}
}

@inproceedings{Gro96,
    author = {Grover, Lov K.},
    title = {A fast quantum mechanical algorithm for database search},
    booktitle = {Proceedings of the 28th Annual ACM Symposium on Theory of Computing},
    pages = {212--219},
    doi = {10.1145/237814.237866},
    year = {1996}
}

@article{BBBV97,
    author = {Bennett, Charles H. and Bernstein, Ethan and Brassard, Gilles and Vazirani, Umesh},
    title = {Strengths and weaknesses of quantum computing},
    journal = {SIAM Journal on Computing},
    volume = {26},
    number = {5},
    pages = {1510--1523},
    doi = {10.1137/S0097539796300933},
    year = {1997}
}

@article{BBHT98,
    author = {Boyer, Michel and Brassard, Gilles and H{\o}yer, Peter and Tapp, Alain},
    title = {Tight bounds on quantum searching},
    journal = {Fortschritte der Physik},
    volume = {46},
    number = {4-5},
    pages = {493-505},
    doi = {10.1002/(SICI)1521-3978(199806)46:4/5<493::AID-PROP493>3.0.CO;2-P},
    year = {1998}
}

@article{Zal99,
    author = {Zalka, Christof},
    title = {Grover's quantum searching algorithm is optimal},
    journal = {Physical Review A},
    volume = {60},
    number = {4},
    pages = {2746},
    doi = {10.1103/PhysRevA.60.2746},
    year = {1999}
}

@article{Gro02,
    author = {Grover, Lov K.},
    title = {Trade-offs in the quantum search algorithm},
    journal = {Physical Review A},
    volume = {66},
    number = {5},
    pages = {052314},
    doi = {10.1103/PhysRevA.66.052314},
    year = {2002}
}

@article{AdW17,
    author = {Arunachalam, Srinivasan and de Wolf, Ronald},
    title = {Optimizing the number of gates in quantum search},
    journal = {Quantum Information and Computation},
    volume = {17},
    number = {3--4},
    pages = {251--261},
    doi = {10.26421/QIC17.3-4-4},
    year = {2017}
}

@article{CWLY23,
    title={Unitarity estimation for quantum channels},
    author={Chen, Kean and Wang, Qisheng and Long, Peixun and Ying, Mingsheng},
    journal={IEEE Transactions on Information Theory},
    volume={69},
    number={8},
    pages={303--326},
    doi={10.1109/TIT.2023.3263645},
    year={2023}
}

@misc{GHYZ24,
    author = {Gong, Weiyuan and Haferkamp, Jonas and Ye, Qi and Zhang, Zhihan},
    title = {On the sample complexity of purity and inner product estimation},
    eprint = {2410.12712},
    howpublished = {ArXiv e-prints},
    year = {2024}
}

@inproceedings{ALL22,
    author = {Anshu, Anurag and Landau, Zeph and Liu, Yunchao},
    title = {Distributed quantum inner product estimation},
    booktitle = {Proceedings of the 54th Annual ACM SIGACT Symposium on Theory of Computing},
    pages = {44--51},
    doi = {10.1145/3519935.3519974},
    year = {2022}
}

@article{CWYZ25,
    author = {Chen, Kean and Wang, Qisheng and Yu, Zhan and Zhang, Zhicheng},
    title = {Simultaneous estimation of nonlinear functionals of a quantum state},
    journal = {IEEE Transactions on Information Theory},
    volume = {},
    number = {},
    pages = {},
    doi = {10.1109/TIT.2026.3699531},
    year = {2026}
}

@article{MKB05,
    author = {Mintert, Florian and Kuś, Marek and Buchleitner, Andreas},
    title = {Concurrence of mixed multipartite quantum states},
    journal = {Physical Review Letters},
    volume = {95},
    number = {26},
    pages = {260502},
    doi = {10.1103/PhysRevLett.95.260502},
    year = {2005}
}

@inproceedings{WZ24c,
    author = {Wang, Qisheng and Zhang, Zhicheng},
    title = {Sample-optimal quantum estimators for pure-state trace distance and fidelity via samplizer},
    booktitle = {Proceedings of the 53rd International Colloquium on Automata, Languages, and Programming},
    pages = {},
    doi = {},
    eprint = {2410.21201},
    year = {2026}
}

@article{Wan24,
    title={Optimal trace distance and fidelity estimations for pure quantum states},
    author={Wang, Qisheng},
    journal={IEEE Transactions on Information Theory},
    volume={70},
    number={12},
    pages={8791--8805},
    year={2024},
    doi={10.1109/TIT.2024.3447915}
}

@inbook{Wat18,
    author = {Watrous, John},
    title = {The Theory of Quantum Information},
    chapter = {7 - Permutation Invariance and Unitarily Invariant Measures},
    publisher = {Cambridge University Press},
    year = {2018},
    doi = {10.1017/9781316848142.008},
}

@inproceedings{ACL+20,
    author = {Aaronson, Scott and Chia, Nai-Hui and Lin, Han-Hsuan and Wang, Chunhao and Zhang, Ruizhe},
    title = {On the quantum complexity of closest pair and related problems},
    booktitle = {Proceedings of the 35th Computational Complexity Conference},
    pages = {16:1--16:43},
    doi = {10.4230/LIPIcs.CCC.2020.16},
    year = {2020}
}

@inproceedings{BPS21,
    author = {Buhrman, Harry and Patro, Subhasree and Speelman, Florian},
    title = {A framework of quantum strong exponential-time hypotheses},
    booktitle = {Proceedings of the 38th International Symposium on Theoretical Aspects of Computer Science},
    pages = {19:1--19:19},
    doi = {10.4230/LIPIcs.STACS.2021.19},
    year = {2021}
}

@incollection{FR09,
    author = {French, Steven and Rickles, Dean},
    title = {Understanding permutation symmetry},
    editor = {Brading, Katherine and Castellani, Elena},
    booktitle = {Symmetries in Physics: Philosophical Reflections},
    volume = {},
    number = {},
    pages = {212--238},
    doi = {10.1017/CBO9780511535369.013},
    publisher = {Cambridge University Press},
    series = {},
    year = {2009}
}

@article{TWG+10,
    author = {Tóth, G. and Wieczorek, W. and Gross, D. and Krischek, R. and Schwemmer, C. and Weinfurter, H.},
    title = {Permutationally invariant quantum tomography},
    journal = {Physical Review Letters},
    volume = {105},
    number = {25},
    pages = {250403},
    doi = {10.1103/PhysRevLett.105.250403},
    year = {2010}
}

@article{PR04,
    author = {Pollatsek, Harriet and Ruskai, Mary Beth},
    title = {Permutationally invariant codes for quantum error correction},
    journal = {Linear Algebra and its Applications},
    volume = {392},
    number = {},
    pages = {255--288},
    doi = {10.1016/j.laa.2004.06.014},
    year = {2004}
}

@article{Ouy14,
    author = {Ouyang, Yingkai},
    title = {Permutation-invariant quantum codes},
    journal = {Physical Review A},
    volume = {90},
    number = {6},
    pages = {062317},
    doi = {10.1103/PhysRevA.90.062317},
    year = {2014}
}

@article{AA14,
    author = {Aaronson, Scott and Ambainis, Andris},
    title = {The need for structure in quantum speedups},
    journal = {Theory of Computing},
    volume = {10},
    number = {6},
    pages = {133--166},
    doi = {10.4086/toc.2014.v010a006},
    year = {2014}
}

@inproceedings{Cha19,
    author = {Chailloux, André},
    title = {A note on the quantum query complexity of permutation symmetric functions},
    booktitle = {Proceedings of the 10th Innovations in Theoretical Computer Science Conference},
    pages = {19:1--19:7},
    doi = {10.4230/LIPIcs.ITCS.2019.19},
    year = {2019}
}

@article{BDCG+24,
    author = {Ben-David, Shalev and Childs, Andrew M. and Gilyén, András and Kretschmer, William and Podder, Supartha and Wang, Daochen},
    title = {Symmetries, graph properties, and quantum speedups},
    journal = {SIAM Journal on Computing},
    volume = {53},
    number = {6},
    pages = {FOCS20-368--FOCS20-415},
    doi = {10.1137/23M1573975},
    year = {2024}
}

@article{GHYY25,
    author = {Guan, Ziyi and Huang, Yunqi and Yao, Penghui and Ye, Zekun},
    title = {Quantum and classical communication complexity of permutation-invariant functions},
    journal = {IEEE Transactions on Information Theory},
    volume = {},
    number = {},
    pages = {},
    doi = {10.1109/TIT.2025.3534920},
    year = {2025}
}

@article{LW25,
    title={On estimating the trace of quantum state powers},
    author={Liu, Yupan and Wang, Qisheng},
    journal={IEEE Transactions on Information Theory},
    volume={},
    number={},
    pages={},
    year={2026},
    doi={10.1109/TIT.2026.3683891}
}

@article{WGL+24,
    title = {New quantum algorithms for computing quantum entropies and distances},
    author = {Wang, Qisheng and Guan, Ji and Liu, Junyi and Zhang, Zhicheng and Ying, Mingsheng},
    journal = {IEEE Transactions on Information Theory},
    volume = {70},
    number = {8},
    pages = {5653--5680},
    year = {2024},
    doi = {10.1109/TIT.2024.3399014},
}

@article{SH21,
    author = {Subramanian, Sathyawageeswar and Hsieh, Min-Hsiu},
    title = {Quantum algorithm for estimating $\alpha$-Renyi entropies of quantum states},
    journal = {Physical Review A},
    volume = {104},
    number = {2},
    pages = {022428},
    doi = {10.1103/PhysRevA.104.022428},
    year = {2021}
}

@article{WZL24,
    author = {Wang, Xinzhao and Zhang, Shengyu and Li, Tongyang},
    title = {A quantum algorithm framework for discrete probability distributions with applications to {R\'{e}nyi} entropy estimation},
    journal = {IEEE Transactions on Information Theory},
    volume = {70},
    number = {5},
    pages = {3399--3426},
    doi = {10.1109/TIT.2024.3382037},
    year = {2024}
}

@inproceedings{CW25,
    author = {Chen, Kean and Wang, Qisheng},
    title = {Improved sample upper and lower bounds for trace estimation of quantum state powers},
    booktitle = {Proceedings of the 38th Annual Conference on Learning Theory},
    pages = {1008-1028},
    doi = {},
    url = {https://proceedings.mlr.press/v291/chen25d.html},
    year = {2025}
}

@misc{CLW26,
    author = {Chen, Kean and Liu, Yupan and Wang, Qisheng},
    title = {Trace estimation of quantum state powers: Sample complexity and computational hardness},
    howpublished = {ArXiv e-prints},
    eprint = {2505.09563v2},
    year = {2026}
}

@article{WZ25,
    author = {Wang, Qisheng and Zhang, Zhicheng},
    title = {Quantum lower bounds by sample-to-query lifting},
    journal = {SIAM Journal on Computing},
    volume = {54},
    number = {5},
    pages = {1294--1334},
    doi = {10.1137/24M1638616},
    year = {2025}
}

@article{Bra03,
    author = {Brassard, Gilles},
    title = {Quantum communication complexity},
    journal = {Foundations of Physics},
    volume = {33},
    number = {11},
    pages = {1593--1616},
    doi = {10.1023/A:1026009100467},
    year = {2003}
}

@inproceedings{MdW23,
    title = {Tight bounds for quantum phase estimation and related problems},
    author = {Mande, Nikhil S. and de Wolf, Ronald},
    booktitle = {Proceedings of the 31st Annual European Symposium on Algorithms},
    pages = {81:1--81:16},
    doi = {10.4230/LIPIcs.ESA.2023.81},
    year = {2023},
}

@book{NC10,
    author = {Nielsen, Michael A. and Chuang, Isaac L.},
    title = {Quantum Computation and Quantum Information},
    publisher = {Cambridge University Press},
    doi = {10.1017/CBO9780511976667},
    year = {2010}
}

    \appendix

    \section{An Example with Low Time-to-Sample Ratio} \label{sec:low-ratio}

    In this appendix, we provide an $n$-qubit property $\mathcal{P}_n$ such that the time-to-sample ratio $R = \mathsf{T}\rbra{\mathcal{P}_n}/\mathsf{S}\rbra{\mathcal{P}_n} = \polylog\rbra{n}$ is low, where every qubit contains useful information. 
    
    Without loss of generality, let $n = 2^{t}-1$ where $t \geq 1$. 
    Let $S$ denote the set of the $n$-qubit pure states $\ket{\psi}$ with the following property:
    \begin{enumerate}
        \item $\ket{\psi} = \ket{x_1} \ket{x_2} \dots \ket{x_n}$, where $x_j \in \cbra{0, 1}$ for $j \in \sbra{n}$. 
        \item $x_1 + x_2 + \dots + x_n = t$. 
        \item Let $p_1 < p_2 < \dots < p_t$ be the positions with $x_{p_j} = 1$ for $j \in \sbra{t}$. 
        Then, $\floor{p_{j+1}/2} = p_j$ for every $1 \leq j < t$.
    \end{enumerate}
    Intuitively, the sequence $x_1, x_2, \dots, x_n$ encodes a root-to-leaf path in a perfect binary tree of $n$ nodes, where node $1$ is the root and nodes $2^{t-1}, \dots, 2^t-1$ are leaf nodes. 
    For $1 \leq j < 2^{t-1}$, node $j$ has two child nodes $2j$ and $2j+1$.

    Now we define an $n$-qubit property $\mathcal{P}_n$ as follows: 
    \begin{align}
        \mathcal{P}_n^{\textup{yes}} & = \set{\ket{\psi} \in S}{\text{the path encoded by $\ket{\psi}$ leads to an even-numbered leaf node}}, \\
        \mathcal{P}_n^{\textup{no}} & = \set{\ket{\psi} \in S}{\text{the path encoded by $\ket{\psi}$ leads to an odd-numbered leaf node}}.
    \end{align}
    It can be seen that $\mathsf{S}\rbra{\mathcal{P}_n} = 1$. 
    This is because for every $\ket{\psi} \in S$, we can measure all the $n$ qubits of $\ket{\psi}$ in the computational basis and then determine the parity of the leaf node to which the path encoded by $\ket{\psi}$ leads. 
    However, this simple approach only gives a time complexity of $\mathsf{T}\rbra{\mathcal{P}_n} = O\rbra{n}$. 
    
    In the following, we show that $\mathsf{T}\rbra{\mathcal{P}_n} = O\rbra{\log^2\rbra{n}}$. 
    For a state $\ket{\psi} \in S$, we initiate a variable $j \gets 1$ (the root).
    As long as node $j$ is not a leaf node, we check if the $2j$-th qubit of $\ket{\psi}$ contains $1$ (this costs $O\rbra{\log\rbra{j}}$ time for the indirect addressing by the variable $j$): if it contains $1$, set $j \gets 2j$; otherwise, set $j \gets 2j+1$. 
    When node $j$ is a leaf node, we can determine whether $\ket{\psi} \in \mathcal{P}_n^{\textup{yes}}$ or $\ket{\psi} \in \mathcal{P}_n^{\textup{no}}$ by checking the parity of $j$. 
    Therefore, the overall time complexity is 
    \begin{equation}
    \mathsf{T}\rbra{\mathcal{P}_n} = \sum_{k=1}^t O\rbra{k} = O\rbra{t^2} = O\rbra{\log^2\rbra{n}}.
    \end{equation}
\end{document}